\def\dOi{9(4:9)2013}
\newtheorem{example}[thm]{Example}
\newtheorem{definition}[thm]{Definition}
\newtheorem{theorem}[thm]{Theorem}
\newtheorem{lemma}[thm]{Lemma}
\newtheorem{corollary}[thm]{Corollary}
\newtheorem{proposition}[thm]{Proposition}
\newtheorem*{remark}{Remark}
\begin{document}
\title[Polynomial Path Orders]{Polynomial Path Orders}

\author[M.~Avanzini]{Martin Avanzini}
\address{Institute of Computer Science\\University of Innsbruck\\ Austria}
\email{\{martin.avanzini,georg.moser\}@uibk.ac.at}

\author[G.~Moser]{Georg Moser}
\address{\vspace{-18 pt}}
\thanks{This work is partially supported by FWF (Austrian Science Fund) project I-603-N18}

\renewcommand{\labelitemi}{-}

\keywords{Term Rewriting, Complexity Analysis, Implicit Computational
Complexity, Automation}

\subjclass{F.4.1, F.4.2, F.1.3, D.2.4}

\begin{abstract}
This paper is concerned with the complexity analysis of
constructor term rewrite systems and its ramification in implicit
computational complexity.
We introduce a path order with multiset status, 
the \emph{polynomial path order} \POPSTAR, that is applicable
in two related, but distinct contexts.
On the one hand \POPSTAR\ induces polynomial innermost runtime complexity
and hence may serve as a syntactic, and fully automatable, method to analyse
the innermost runtime complexity of term rewrite systems.
On the other hand \POPSTAR\ provides an order-theoretic 
characterisation of the polytime computable functions: 
the polytime computable functions are exactly the functions
computable by an orthogonal constructor TRS compatible with \POPSTAR.
\end{abstract}

\maketitle

\section{Introduction}\label{s:intro}

In this paper we are concerned with the complexity analysis of
constructor term rewrite systems.
Since term rewrite systems (TRSs for short) underlie much of 
declarative programming, time complexity of functions defined by TRSs
is of particular interest.

In rewriting two notions of complexity have been widely studied. 
Hofbauer and Lautemann proposed to assess the complexity of a given TRS 
as the maximal length of derivation sequences. More precisely
the \emph{derivational complexity function} relates the maximal length of a 
derivation with the size of the starting term~\cite{HL89}. 
As an alternative Hirokawa and the second author proposed to study the 
\emph{runtime complexity function}~\cite{HM08}, which forms a variation
of the derivational complexity function. Instead of all possible derivations,
one studies only derivations with starting terms whose arguments are constructor terms
(aka \emph{basic terms}), see also~\cite{BCMT01}.
In the context of this paper, runtime complexity is the more natural notion. 
We emphasise that the runtime complexity of a rewrite system
forms a \emph{polynomially invariant} cost model~\cite{Boas:TCS:90}, cf.~Section~\ref{s:basics}.

To motivate our studies, we present a natural encoding of
the well-known satisfiability problem $\SAT$ of propositional logic 
as a TRS.\@ Given a propositional formula in conjunctive normal form, 
the TRS $\RSsat$ given below computes a satisfying assignment if it exists.
Note that $\RSsat$ is not confluent, i.e., the computation is performed 
nondeterministically. 
The rewrite system $\RSsat$ thus encodes 
the function problem \emph{$\FSAT$} associated with the satisfiability
problem. 
$\FSAT$ is complete for the class of \emph{function problems over $\NP$} 
($\FNP$ for short). See Section~\ref{s:basics} where $\FNP$ is formally defined. 
As corollary to the polynomial invariance of
the runtime complexity of rewrite systems, we obtain that the runtime complexity
of $\RSsat$ is expected to be polynomial.

\begin{example}
\label{ex:rssat}
Consider the following (non-confluent) TRS $\RSsat$:%
\footnote{This is a slight variant of Example \textsf{TCT\_12/sat.xml} in the current 
Termination Problem Database (TPDB) version 8.0.}
\begin{alignat*}{4}
\rlabel{RSsat:neg:Z} && \mneg(+x) & \to -x 
& ~
\rlabel{RSsat:neq:O} && \mneg(-x) & \to +x 
\\[2mm]
\rlabel{RSsat:eq:ZO} && \meq(\mZ(x),\mO(y)) & \to \mfalse 
& ~
\rlabel{RSsat:eq:ZZ} && \meq(\mZ(x),\mZ(y)) & \to \meq(x,y) 
\\
\rlabel{RSsat:eq:OZ} && \meq(\mO(x),\mZ(y)) & \to \mfalse 
& ~
\rlabel{RSsat:eq:OO} && \meq(\mO(x),\mO(y)) & \to \meq(x,y) 
\\
\rlabel{RSsat:eq:np} && \meq(-x,+y) & \to \mfalse 
& ~
\rlabel{RSsat:eq:nn} && \meq(-x,-y) & \to \meq(x,y) 
\\
\rlabel{RSsat:eq:pn} && \meq(+x,-y) & \to \mfalse 
& ~
\rlabel{RSsat:eq:pp} && \meq(+x,-y) & \to \meq(x,y) 
\\
\rlabel{RSsat:eq:ee} && \meq(\varepsilon,\varepsilon) & \to \mtrue 
\\[2mm]
\rlabel{RSsat:if:t} && \mif(\mtrue,t,e) & \to t 
& ~
\rlabel{RSsat:if:f} && \mif(\mfalse,t,e) & \to e 
\\[2mm]
\rlabel{RSsat:verify:b} && \verify(\nil) & \to \mtrue 
& ~
\rlabel{RSsat:verify:r} && \verify(l \cons ls) & \to \mif(\member(\mneg(l),ls), \mfalse, \verify(ls))
\\[2mm]
\rlabel{RSsat:member:b} && \member(x,\nil) & \to \mfalse
& ~
\rlabel{RSsat:member:r} && \member(x,y \cons ys) & \to \mif(\meq(x,y), \mtrue, \member(x,ys)) 
\\[2mm]
\rlabel{RSsat:guess:b} && \guess(\nil) & \to \nil
& ~
\rlabel{RSsat:guess:r} && \guess(c \cons cs) & \to \choice(c) \cons \guess(cs)
\\[2mm]
\rlabel{RSsat:choice:a} && \choice(a \cons \nil) & \to a
&~
\rlabel{RSsat:choice:b} && \choice(a \cons b \cons bs) & \to \choice(b \cons bs)
\\
\rlabel{RSsat:choice:a2} && \choice(a \cons b \cons bs) & \to a
\\[2mm]
\rlabel{RSsat:issat} && \issat(cs) & \to \mparbox{1cm}{\issat'(\guess(cs))}
\\
\rlabel{RSsat:issat'} && \issat'(as) & \to \mparbox{1cm}{\mif(\verify(as),as,\unsat) \tpkt}
\end{alignat*}
Atoms are encoded as binary strings 
(built from the constant $\varepsilon$, and unary constructors $\mZ$ and $\mO$), 
the unary constructors ($+$) and ($-$) lift atoms to positive 
and negative literals respectively.
The rules~\rref{RSsat:neg:Z}--\rref{RSsat:eq:ee} 
define negation and equality on this representation of literals.

%
Clauses are lists of literals, 
clause sets are denoted by lists of clauses.
Lists are constructed in the usual way 
using a constant $\nil$ and binary constructor $(\cons)$. 
Call a list of literals consistent, if an atom does not 
occur positively and negatively. This is formalised by
rules~\rref{RSsat:verify:b}--\rref{RSsat:member:r}. 
A clause set $cs$ is satisfiable if and only if 
there exists a list of literals $as$, denoting a satisfying assignment, 
such that $as$ is consistent and contains at least one literal from every clause $c$.
The rules~\rref{RSsat:guess:b}--\rref{RSsat:choice:a2} 
are used to generate a candidate list $as$ that contains for each clause 
one literal.
Using these auxiliary rules, the algorithm is implemented 
by rules~\rref{RSsat:issat} and~\rref{RSsat:issat'}.
Given a clause set $cs$, a candidate list $as$ 
is guessed and its consistency is checked.
If this check succeeds the list $as$ is returned.
\end{example}

It is easy to see that $\RSsat$ is terminating, for example
this can be verified by showing compatibility with the multiset
path order (\MPO\ for short)~\cite{TeReSe}.
%
It is a standard exercise in rewriting to assess the 
complexity of rewrite systems via an analysis of termination techniques
and it is a well-known result that \MPO\ induces 
primitive recursive derivational and runtime complexity~\cite{H92,B95,MW03}.
Furthermore, \MPO\ \emph{characterises} the 
\emph{primitive recursive functions} ($\mathcal{PR}$ for short)~\cite{CW97}:
any function computed by an $\MPO$-terminating TRS is primitive recursive, vice versa, 
any primitive recursive function can be stated as an $\MPO$-terminating TRS.
However, from these results we can only conclude that 
the runtime complexity function of $\RSsat$ is
bounded by a primitive recursive function, which is hardly revealing.
This motivates the quest for a ``polynomial path order'' depicted by
the question mark in Figure~\ref{fig:3}.%
\footnote{
Solid lines indicate a characterisation,
whereas dashed lines indicate an inclusion relationship.}
Such an order should be a restriction of \MPO, but \emph{miniaturises} its properties: it would be expected
that this order induces \emph{polynomial} runtime complexity and
provides a characterisation of the class $\FP$ of \emph{functions computable
in polynomial time}.

\begin{figure}
  \centering
  \begin{tikzpicture}
    \node(1) {\MPO};
    \node(2) [right=of 1] {\emph{prim.\ rec.\ runtime compl.}};
    \node(5) [right=of 2] {$\mathcal{PR}$};
    \node(3) [below=of 1] {\textsf{?}};
    \node(4) [below=of 2, yshift=2pt] {\emph{polynomial runtime compl. }};
    \node(6) [below=of 5] {$\FP$};

    \draw[->,dashed] (1.south) -- (3.north);
    \draw[->,dashed] (2.south) -- (4.north);
    \draw[->,dashed] (5.south) -- (6.north);
    \draw[->] (1.east) -- (2.west);
    \draw[->] (2.east) -- (5.west);
    \draw[->] (3.east) -- (4.west);
    \draw[->] (4.east) -- (6.west);
  \end{tikzpicture}  
  \caption{The Quest for ``Polynomial Path Orders''}
  \label{fig:3}
\end{figure}

In this paper, we propose the \emph{polynomial path order} (\emph{\POPSTAR} for short)
as such a miniaturisation of \MPO: \POPSTAR\ induces
polynomial runtime complexity (for innermost rewriting) and at the
same time yields a characterisation of $\FP$. 
In the design of \POPSTAR\ we have striven for a \emph{maximal} miniaturisation
of \MPO, so that these key features of \POPSTAR\ remain intact. 
Alas, some of the essential properties of \MPO\ cannot be preserved. 
First, \POPSTAR\ can only analyse the \emph{runtime complexity} of TRSs; the derivational
complexity induced by \POPSTAR\ is (at least) double-exponential
(Example~\ref{ex:dc}). Second, the restriction to \emph{innermost}
rewriting is essential (Example~\ref{ex:outermost}) and
finally, our result only holds for \emph{constructor} TRS (Example~\ref{ex:constructor}).
More precisely, we establish the following results.

\begin{enumerate}[labelsep=*,leftmargin=*]
\item \POPSTAR\ induces polynomial innermost runtime complexity on constructor TRSs. That is,
the innermost runtime complexity function for a constructor TRS compatible with \POPSTAR\ is
polynomially bounded (Theorem~\ref{t:popstar}).
\item \POPSTAR\ captures exactly the class $\FP$ on orthogonal constructor TRSs. That is,
any orthogonal constructor TRS compatible with \POPSTAR\ computes a polytime function.
On the other hand, any function in $\FP$ can be implemented by an orthogonal constructor
TRS compatible with \POPSTAR\ (Theorems~\ref{t:icc:soundness} and~\ref{t:icc:completeness}).

\item We extend upon \POPSTAR\ by proposing 
  a generalisation \POPSTARP, admitting the same properties as outlined above, 
  but that allows to handle more general recursion schemes that make
  use of parameter substitution (Theorem~\ref{t:popstarps}).
\item We have implemented the proposed technique in 
the \emph{Tyrolean Complexity Tool} (\TCT\ for short)~\cite{AM13b}.
The experimental evidence obtained indicates the viability of the method.
\end{enumerate}

\noindent By a comparison with the formal definition in Section~\ref{s:popstar} it
is not difficult to verify that $\RSsat$ is compatible with \POPSTAR\
(cf.~Example~\ref{ex:rssat:2}).
This implies that the number of rewrite steps starting from $\issat(cs)$ 
is polynomially bounded in the size of the CNF $cs$. 
This can be automatically verified by $\TCT$ in a fraction of a second.
Due to a suitable adaption of the polynomial invariance theorem~\cite{AM10b}
(cf.~Proposition~\ref{p:invariance})
we can thus \emph{automatically} conclude that $\FSAT$ belongs to $\FNP$.

The termination order \POPSTAR\ gives 
a syntactic account of the principle of \emph{predicative recursion} as proposed 
by Bellantoni and Cook~\cite{BC92}. 
Conclusively any TRS compatible with \POPSTAR\
is called \emph{predicative recursive}.
Analogously \POPSTAR\ can be 
conceived as syntactic account of Leivant's notion of \emph{tiered recurrence}~\cite{Leivant:1990,L91,Leivant93}, cf.~Simmons~\cite{Simmons:1988}. 
We think that \POPSTAR\ is not only of interest from the
viewpoint of automated runtime complexity, but also from the 
viewpoint of \emph{implicit computational complexity} (\emph{ICC} for
short)~\cite{BMR09,DalLago:2011}. In particular \POPSTAR\ is
applicable to verify closure properties of the class of polytime computable function.
Through our extension \POPSTARP, we reobtain Bellantoni's result that predicative
recursion is closed under parameter substitution (cf.~Section~\ref{s:popstarps}).

Preliminary versions of the presented results appeared in~\cite{AM08,AMS08,AM09b}.
The order $\POPSTAR$ has been introduced in~\cite{AM08}, extended to
quasi-precedences in~\cite{AMS08} and the extension $\POPSTARP$
appeared in~\cite{AM09b}.
Apart from the correction of some shortcomings, we extend our earlier work
in the following way:
First, the presented definition of $\POPSTAR$ is more liberal and 
captures the underlying idea of predicative recursion more precisely, 
compare~\cite[Definition~4]{AM08} and Definition~\ref{d:gpop} from Section~\ref{s:popstar}.
Second, our soundness result (cf.~Theorem~\ref{t:icc:soundness} from Section~\ref{s:icc}) 
is new and more general than similar results presented earlier. In particular
it does no longer require typing of constructors nor the intermediate
step of completely defined TRSs, cf.~\cite{AM08}.
Third, the propositional encoding used in our automation of polynomial path orders 
(cf.~Section~\ref{s:impl}) has been completely overhauled.

\subsection{Related Work}

Polynomial complexity analysis is an active research area
in rewriting. Starting from~\cite{MS08} interest in \emph{automated} polynomial
complexity analysis greatly increased over the last years, 
see for example~\cite{HM08,HZMK10,NEG11,HM11,MMNWZ11}.
This is partly due to the incorporation of a dedicated category for complexity 
into the annual termination competition (TERMCOMP).%
\footnote{\url{http://termcomp.uibk.ac.at/}.} 

There are several accounts of predicative analysis of recursion in the (ICC) literature. 
We mention Marion's \emph{light multiset path order} (\emph{\LMPO} for short)~\cite{M03}.
The path order \LMPO\ provides an order-theoretic characterisation of the class 
$\FP$ and can be also consider as a miniaturisation of \MPO\ of sorts: it is a restriction
of \MPO\ and yields an order-theoretic characterisation of a complexity class. 
On the other hand \LMPO\ cannot be used to characterise the (innermost) runtime complexity
of TRSs. This follows from Example~\ref{ex:RS2} below. 
In particular, although $\RSsat$ is compatible with $\LMPO$, from this we
can only conclude that $\FSAT$ is computable on a nondeterministic 
Turing machine in polynomial time. However, this follows by design
as $\FSAT$ is complete for $\FNP$.
\begin{example}
The TRS $\RSbin$ is given by the following rules:
\label{ex:RS2}
\begin{alignat*}{6}
    \rlabel{RS2:a} && \bin(x,\Null) & \to \ms(\Null) 
    & \quad
    \rlabel{RS2:b} && \bin(\Null,\ms(y)) & \to \Null 
    & \quad    
    \rlabel{RS2:c} && \bin(\ms(x),\ms(y)) & \to \mP(\bin(x,\ms(y)),\bin(x,y))
    \tpkt
\end{alignat*}
For a precedence $\qp$ that fulfils $\bin \sp \ms$ and $\bin \sp \mP$ we obtain 
that $\RSbin$ is compatible with $\LMPO$. 
However it is straightforward to verify that the family of terms
$\bin(\ms^n(\Null),\ms^m(\Null))$ admits (innermost) derivations whose length grows exponentially in $n$.
Still the underlying function can be proven polynomial, essentially relying on memoisation techniques~\cite{M03}.
\end{example}

On the other hand Arai and the second author introduced
the \emph{polynomial path order for $\FP$} (\emph{\POPFP} for short). 
In a similar way as \LMPO, \POPFP\ characterises the class $\FP$
and in addition induces innermost polynomial runtime complexity. However in
comparison to \POPSTAR, \POPFP\ severely lacks applicability as it requires
a specific signature of the given rewrite system. For example \POPFP\
is not directly applicable to $\RSsat$: only a special transformation
of the rewrite system $\RSsat$ can be handled. 

Furthermore, a strengthening of our first main theorem to
runtime complexity can be obtained
if one considers polynomial interpretations, where the interpretations of
constructor symbols is restricted. Such restricted polynomial interpretations
are called \emph{additive} in~\cite{BCMT01}. Note that additive polynomial interpretations
also characterise the functions computable in polytime~\cite{BCMT01}.
Similarly, \emph{quasi-interpretations}~\cite{BMM11} provide an elegant way to characterise
time complexity classes through a combination of syntactic (via restrictions of reduction
orders) and semantic (via quasi-interpretations) considerations. 
To date it is unknown whether quasi-interpretations can be used to assess polynomial runtime complexity
of TRSs. 
Unarguable these semantic techniques admit a better intensionality than the
syntactic characterisation provided through the path order \POPSTAR. 
But semantic methods are notoriously 
difficult to implement efficiently in an automated setting. In particular
we are only aware of one accessible implementation of quasi-interpretations,
our own~\cite{AMS08}. Note that these semantic methods are
not tailored for innermost rewriting, in particular Example~\ref{ex:dup} 
given below cannot be handled, while it can be easily handled by \POPSTAR.

Although we consider here only time complexity, related work 
indicates that the overall approach is general enough to reason also about space complexity.
For instance, the \emph{Knuth-Bendix} order~\cite{BN98} can be miniaturised to characterise linear space~\cite{BonfanteMoser:2010}.
Likewise, \emph{sup-interpretations}~\cite{MP:09} provide a 
semantic technique capable of characterising polynomial space. 

In~\cite{BW96}, Beckmann and Weiermann have given a term rewriting characterisation
of the principle of predicative recursion proposed by Bellantoni and Cook. 
Following ideas proposed by Cichon and Weiermann in~\cite{CW97}, Beckmann and Weiermann
thus reobtain Bellantoni's result that predicative recursion is closed
under parameter recursion. 

We have extended our complexity analysis tool \TCT~\cite{AM13b} with polynomial path orders.
We briefly contrast this implementation to related tools for the static resource analysis 
of programs.
Hoffmann~et~al.~\cite{HAH11} provide an automatic multivariate amortised 
cost analysis exploiting typing, which extends earlier results on amortised 
cost analysis~\cite{Tarjan:1985}.
To indicate the applicability of our method we have employed 
a straightforward (and complexity preserving) 
transformation of the RAML programs considered
in~\cite{HAH11,HAH12} into TRSs. Equipped with \POPSTAR\ our complexity
analyser \TCT\ can handle all examples from~\cite{HAH11}.
Albert et al.~\cite{AAGGPRRZ:2009} present an automated complexity tool
for Java${}^\text{\texttrademark}$ Bytecode programs, 
Alias et al.~\cite{ADFG10} give a complexity and
termination analysis for flowchart programs, and
Gulwani~et~al.~\cite{GMC09} as well as Zuleger~et~al.~\cite{ZulegerGSV11} 
provide an automated complexity tool for C programs.
Very recently Hofmann and Rodriguez proposed in~\cite{HR13} an 
automated resource analysis for object-oriented programs via an 
amortised cost analysis.

\subsection{Outline}
The remainder of this paper is organised as follows.
In the next section we recall basic notions and starting points of
this paper.
In Section~\ref{s:popstar} we introduce polynomial path orders.
In the subsequent Sections~\ref{s:pop} and~\ref{s:embed} we show that the 
innermost runtime complexity of predicative recursive TRSs is polynomially bounded.
As essential tool for this we introduce an extended version of \POPFP.
In Section~\ref{s:icc} we present our ramification of polynomial path orders in ICC.
Parameter substitution is incorporated in Section~\ref{s:popstarps}.
Our implementation is detailed in Section~\ref{s:impl} and experimental evidence 
is provided in Section~\ref{s:exps}. 
Finally, we conclude and present future work in Section~\ref{s:conclusion}.


\section{Preliminaries}\label{s:basics}

We denote by $\N$ the set of natural numbers $\{0,1,2,\dots\}$.
Let $R$ be a binary relation.
The transitive closure of $R$ is denoted by $R^+$ and its transitive and reflexive closure by $R^{\ast}$. 
For a binary relation $R$, we frequently write $a \mathrel{R} b$ instead of $(a,b) \in R$. 
Composition of binary relations $R$ and $S$ is denoted by $R \cdot S$,
and defined in the usual way.
For $n\in \N$ we denote by $R^n$ the \emph{$n$-fold composition of $R$}.
The binary relation $R$ is \emph{well-founded} if 
there exists no infinite chain $a_0, a_1, \dots$ with $a_i \mathrel{R} a_{i+1}$
for all $i \in \N$. Moreover, we say that $R$ is well-founded on a set $A$ if 
there exists no such infinite chain with $a_0 \in A$.
The relation $R$ is \emph{finitely branching} if for all elements $a$, the set $\{b \mid a \mathrel{R} b\}$ is finite.

A \emph{proper order} is an irreflexive and transitive binary relation.
A \emph{preorder} is a reflexive and transitive binary relation. 
An \emph{equivalence relation} is reflexive, symmetric and transitive.

A multiset is a collection in which elements are allowed to
occur more than once. We denote by $\msetover(A)$ the set of multisets over $A$
and write $\mset{a_1,\dots,a_n}$ to denote multisets with elements $a_1, \dots,a_n$.
We use $m_1 \uplus m_2$ for the summation and $m_1 \backslash m_2$ for difference on multisets $m_1$ and $m_2$.
The \emph{multiset extension} $\mextension{R}$ \emph{of a relation $R$ on $A$} is the 
relation on $\msetover(A)$ such that $M_1 \mextension{R} M_2$ if there exists 
multisets $X,Y \in \msetover(A)$ satisfying 
\begin{enumerate}[labelsep=*,leftmargin=*]    
    \item $M_2 = (M_1 \backslash X) \uplus Y$, 
    \item $\varnothing \not= X \subseteq M_1$ and 
    \item for all $y \in Y$ there exists an element $x \in X$ such that $x \mathrel{R} y$.
\end{enumerate}
In order to extend this definition to preorders and equivalences, we follow~\cite{Ferreira95}.
Let $\eqi$ denote an equivalence relation over the set $A$ and let ${\succcurlyeq} = {\succ} \cup {\eqi}$ be a binary
relation over $A$ so that $\succ$ and $\eqi$ are \emph{compatible} in the following sense: 
${{\eqi} \cdot {\succ} \cdot {\eqi}} \subseteq {{\succ}}$. 
Let $\eclass{a}$ denote the \emph{equivalence class of $a \in A$} with respect to $\eqi$.
By the compatibility requirement, 
the extension $\sqsupset$ of $\succ$ to equivalence classes 
such that $\eclass{a}\sqsupset\eclass{b}$ if and only if $a \succ b$, 
is well defined.
We define the \emph{strict multiset extension $\mextension{\succ}$ of $\succcurlyeq$} as 
${M_1} \mextension{\succ} {M_2}$ if and only if ${\eclass{M_1}} \mextension{\sqsupset} {\eclass{M_2}}$.
Further, the \emph{weak multiset extension $\mextension{\succcurlyeq}$ of $\succcurlyeq$} is given 
by ${M_1} \mextension{\succcurlyeq} {M_2}$ if and only if ${\eclass{M_1}} \mextension{\sqsupset} {\eclass{M_2}}$
or ${\eclass{M_1}} = {\eclass{M_2}}$ holds. 
Note that if ${\succcurlyeq}$ is a preorder (on $A$) then
$\mextension{\succ}$ is a proper order
and $\mextension{\succcurlyeq}$ a preorder on $\msetover(A)$, cf.~\cite{Ferreira95}. 
Also $\mextension{\succ}$ is well-founded if $\succ$ is well-founded. 

\subsection{Complexity Theory}

Notations are taken from~\cite{Papa}. 
The \emph{function problem} $F_R$ associated with a binary relation $R$
is defined as follows: given $x$ find some $y$ such that $(x,y) \in R$ holds if $y$ exists, 
otherwise return $\m{no}$. 
A binary relation $R$ on words is called \emph{polynomially balanced}
if for all $(x,y) \in R$, the size of $y$ is polynomially bounded in the 
size of $x$.
The relation $R$ is \emph{polytime decidable} if $(x,y) \in R$ 
is decided by a deterministic \emph{Turing machine} (\emph{TM} for short) $M$ operating in polynomial time.
The class $\NP$ is the class of languages $L$ admitting polynomially balanced, polytime decidable 
relations $R_L$~\cite[Chapter~9]{Papa}: $L =  \{x \mid (x,y) \in R_L \text{ for some $y$}\}$.
The class $\FNP$ is the class of function problems associated with the polynomially balanced and polytime decidable
relations $R_L$ as above.
The class of \emph{polytime computable functions} $\FP$ 
is the subclass resulting if we only consider function problems in $\FNP$ that 
can be solved in polynomial time~\cite[Chapter 10]{Papa}.

Recall that a function problem $F$ \emph{reduces} to a function problem $G$ if there 
exist functions $s$ and $r$, both computable in logarithmic space, such that 
for all $x,y$ with $F$ computing $y$ on input $x$, 
$G$ computes on input $s(x)$ the output $z$ with $r(z) = y$.
Note that both $\FNP$ and $\FP$ are closed under reductions.
We note that nondeterministic Turing machines running in polynomial time compute function 
problems from $\FNP$.
\begin{proposition}\label{p:fnp}
  Let $N$ be a nondeterministic Turing machine that computes the function problem $F$ in polynomial time.
  Then $F \in \FNP$.
\end{proposition}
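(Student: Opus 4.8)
The plan is to reduce the statement about function problems solved by nondeterministic Turing machines to the definition of $\FNP$ by building the appropriate witnessing relation. Suppose $N$ is a nondeterministic Turing machine computing $F$ in time bounded by a polynomial $p$. First I would define a candidate relation $R$ on words: intuitively, $(x, w) \in R$ should hold precisely when $w$ encodes an accepting computation of $N$ on input $x$ together with the output string $N$ produces along that computation. Concretely, fix a reasonable encoding of configuration sequences; then $w$ is a pair $\langle c_0 c_1 \cdots c_k,\, y\rangle$ where $c_0$ is the initial configuration of $N$ on $x$, each $c_{i+1}$ follows from $c_i$ by one step of $N$'s transition relation, $c_k$ is an accepting (halting) configuration, and $y$ is the output written on the output tape in $c_k$. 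Since $N$ runs in time $p(|x|)$, the length of any such accepting computation is at most $p(|x|)$, and each configuration has size $O(p(|x|))$, so $|w|$ is bounded by a polynomial in $|x|$; this gives polynomial balance of $R$. Checking $(x,w) \in R$ amounts to verifying that $w$ parses correctly, that $c_0$ is correctly derived from $x$, that each local transition $c_i \to c_{i+1}$ is legal (a local, hence polytime, check), that $c_k$ is accepting, and that $y$ is read off correctly — all doable by a deterministic machine in time polynomial in $|w|$, hence polynomial in $|x|$. So $R$ is polytime decidable.

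Next I would observe that $F$ is (up to the trivial reshuffling between $w$ and its component $y$) the function problem $F_R$ associated with $R$: on input $x$, if some accepting computation of $N$ exists then $N$'s computed output $y$ is exactly the second component of some $w$ with $(x,w)\in R$, and if no accepting computation exists then $F$ returns $\m{no}$, matching the definition of the function problem for $R$. Strictly speaking $F_R$ returns the whole witness $w$ rather than just $y$, but the projection $w \mapsto y$ is computable in logarithmic space, so $F$ reduces to $F_R$; since $F_R \in \FNP$ by construction and $\FNP$ is closed under reductions (as recalled in the excerpt), we conclude $F \in \FNP$. Alternatively, and more directly, one can simply define $R$ so that $(x,y)\in R$ iff there is an accepting computation of $N$ on $x$ with output $y$, and note that this $R$ is itself polynomially balanced and polytime decidable — decidability being witnessed by the deterministic machine that, given $(x,y)$, searches for a certificate computation, except that such a search is nondeterministic. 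This is precisely the subtlety below.

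The main obstacle is the decidability clause: given $(x,y)$, deciding whether \emph{there exists} an accepting computation of $N$ on $x$ producing $y$ is not obviously a deterministic polytime task, because it quantifies over computations. The resolution — and the reason the certificate-based formulation above is the right one — is that in the definition of $\FNP$ the \emph{witness is part of the input to the decision procedure}: the relation whose function problem we exhibit takes the computation history itself as the $y$-component, so that no search is needed and verification is purely local and deterministic. I would therefore be careful to phrase $R$ with the full accepting computation as the witness, prove polynomial balance and polytime decidability for that $R$, and then recover the original output via a logspace projection and closure of $\FNP$ under reductions. With that choice made, the remaining verifications are the routine encoding-and-local-check arguments sketched above.
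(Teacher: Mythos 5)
Your proposal is correct and follows essentially the same route as the paper's proof: you take the witness to be an (encoding of an) accepting computation of $N$, verify polynomial balance and polytime decidability of the resulting relation, and then recover $F$ from $F_R$ via a logspace projection together with closure of $\FNP$ under reductions. The extra discussion of why the naive relation $(x,y)$ with $y$ the output would fail the decidability clause is a sound clarification of the same idea.
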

\begin{proof}
  Define the following relation $R$: $(x,y) \in R$ if and only if $y$ 
  is the encoding of an accepting computation of $N$ on input $x$. For this encoding
  it is sufficient to encode a successful sequence of configurations. 
  Since $N$ operates in polynomial time, the length
  of any computation, 
  and also the size of each configuration, is polynomially bounded. 
  It follows that $R$ is polynomially balanced. 
  As it can be checked in linear time that $y$ encodes an accepting run of $N$ on input $x$, 
  $R$ is polytime decidable.
  Hence the function problem $F_R$ that computes an accepting run $y$ of $N$ on input $x$ is in $\FNP$.
  Finally notice that $F$ reduces to $F_R$.
  To see this, employ following reduction:
  the function $s$ is simply the identity function; the logspace computable function 
  $r$ extracts the result of $N$ on input $x$ 
  from the accepting run $y$ computed by $F_R$ on input $x$.
  We conclude the lemma since $\FNP$ is closed under reductions.
\end{proof}

\subsection{Term Rewriting}
We assume at least nodding acquaintance with the basics of term rewriting~\cite{BN98}.
We fix notions and notation that are used in the paper.

Throughout the paper, we fix a countably infinite set of \emph{variables} $\VS$
and a finite \emph{signature} $\FS$ of \emph{function symbols}.
The signature $\FS$ is partitioned into \emph{defined symbols} $\DS$ and 
\emph{constructors} $\CS$.
The set of \emph{values}, \emph{basic terms} and
\emph{terms} is defined according to the grammar
\begin{alignat*}{4}
  & \text{\textsf{(Values)}} & & \quad & \Val & \ni v &~&\defsym~x \mid c(\seq{v}) \\
  & \text{\textsf{(Basic Terms)}} & & & \BASICS & \ni s & &  \defsym~ x \mid f(\seq{v}) \\
  & \text{\textsf{(Terms)}} & & & \TERMS & \ni t & & \defsym~ x \mid c(\seq{t}) \mid f(\seq{t})
  \tpkt
\end{alignat*}
where $x \in \VS$, $c \in \CS$, and $f \in \DS$.

The \emph{arity} of a function symbol $f \in \FS$ is denoted by $\ar(f)$.
We write $s \superterm t$ if $t$ is a \emph{subterm} of $s$, the 
strict part of $\superterm$ is denoted by $\supertermstrict$.
The \emph{size} of a term $t$ is denoted by $\size{t}$ and refers
to the number of occurrences of variables and function symbols contained in $t$.
We denote by $\depth(t)$ the \emph{depth} of $t$ which is
defined as $\depth(t) = 1$ if $t \in \VS$ and 
$\depth(f(\seq{t})) = 1 + \max\{0\} \cup \{\depth(t_i) \mid i = 1,\dots n\}$.

Let $\qp$ be a preorder on the signature $\FS$, called \emph{quasi-precedence}
or simply \emph{precedence}. We always write $\sp$ for the 
induced proper order and 
$\ep$ for the induced equivalence on $\FS$.
We lift the equivalence ${\ep}$ to terms 
modulo argument permutation:
$s \eqi t$ if either $s = t$ or
$s = f(\seq{s})$ and $t = g(\seq{t})$ where $f \ep g$
and for some permutation $\pi$,
$s_i \eqi t_{\pi(i)}$ for all $i \in \{1,\dots,n\}$.
Further we write $s \esuperterm t$ if $t$ is a subterm
of $s$ modulo $\eqi$, i.e.,  $s~{\superterm} \cdot {\eqi}~{t}$.
We denote by $\sigbelow{f}{\FS} \defsym \{g \mid f \sp g\}$ the set of function symbols 
below $f$ in the precedence $\qp$.
The \emph{rank} of a function symbol is inductively defined by
$\rk(f) = \max\{0\} \cup \{1 + \rk(g) \mid f \sp g\}$.

A \emph{rewrite rule} is a pair $(l,r)$ of terms, in notation $l \to r$, 
such that $l$ is not a variable and all variables in $r$ occur also in $l$.
Here $l$ is called the \emph{left-hand}, and $r$ the \emph{right-hand side} of $l \to r$.
A \emph{term rewrite system} (\emph{TRS} for short) $\RS$ over
$\TERMS$ is a set of \emph{rewrite rules}.
In the following, $\RS$ always denotes a TRS.\@
If not mentioned otherwise, we assume $\RS$ to be \emph{finite}.
A relation on $\TERMS$ is a \emph{rewrite relation} if it is
closed under contexts and closed under substitutions. 
The smallest rewrite relation that contains $\RS$ is denoted by
$\rew$. 

A term $s \in \TERMS$ is called a \emph{normal form} if there is no
$t \in \TERMS$ such that $s \rew t$. 
With $\NF(\RS)$ we denote the set of all normal forms of a TRS $\RS$;
if the TRS is clear from context we simply write $\NF$.
Whenever $t$ is a normal form of $\RS$ we write $s \rsn t$ for $s \rss t$.
The \emph{innermost rewrite relation}, denoted as $\irew$, is the restriction
of $\rew$ where the arguments of the redex are in normal form.
The TRS $\RS$ is \emph{terminating} if no infinite rewrite sequence exists.
The TRS $\RS$ has \emph{unique normal forms} if for all 
$s, t_1, t_2 \in \TERMS$ with $s \rsn t_1$ and 
$s \rsn t_2$ we have $t_1 = t_2$.
The TRS $\RS$ is called \emph{confluent} if for all $s, t_1, t_2 \in \TERMS$
with $s \rss t_1$ and $s \rss t_2$ there exists a term $u$ such that
$t_1 \rss u$ and $t_2 \rss u$. An \emph{orthogonal} TRS is 
a \emph{left-linear} and \emph{non-overlapping} TRS. 
Here \emph{left-linear} means that no variable occurs more than once in each left-hand side. 
A TRS is \emph{overlapping}, if some pair of rules $l_1 \to r_1$ and $l_2 \to r_2$ in $\RS$, renamed 
so that variables are disjoint, satisfies:
(i)~a subterm $l_1'$ of $l_1$ is unifiable with $l_2$, i.e., $l_1'\sigma = l_2\sigma$ for some substitution $\sigma$; and
(ii)~if $l_1' = l_1$ then the rules $l_1 \to r_1$ and $l_2 \to r_2$ are not equal up to renaming of variables.
Note that every orthogonal TRS is confluent~\cite{BN98}.
A TRS $\RS$ is a \emph{constructor} TRS if all left-hand sides are basic terms.

\subsection{Rewriting as Computational Model}

We fix \emph{call-by-value} semantics and only consider 
\emph{constructor} TRSs $\RS$. Input and output are taken from the set of 
values $\Val$, and defined symbols $f \in \DS$ denote computed functions. 
More precise, a (finite) \emph{computation} of $f\in \DS$ on input $\seq{v} \in \Val$
is given by \emph{innermost} reductions
$$
  f(\seq{v}) = t_0 \irew t_1 \irew \cdots \irew t_\ell = w \tpkt
$$
If the above computation ends in a value, i.e., $w \in \Val$, 
we also say that $f$ \emph{computes} on input $\seq{v}$ in $\ell$ steps the value $w$.

To account for nondeterministic computation, 
we capture the semantics of $\RS$ by assigning to each $n$-ary defined symbol 
$f \in \DS$ an $n+1$-ary relation $\sem{f}$ that relates
input arguments $\seq{v}$ to computed values $w$.
A \emph{finite} set $\NA$ of \emph{non-accepting patterns} 
is used to distinguish meaningful outputs
$w$ from outputs that should not be considered part of the 
computation~\cite{BonfanteMoser:2010}.
A value $w$ \emph{is accepting} with respect to $\NA$ 
if no $p \in \NA$ and no substitution $\sigma$ exists, such that 
$p\sigma = w$ holds. A typical example of a value that should 
not be accepted is the constant $\unsat$ 
appearing in the TRS $\RSsat$ from Example~\ref{ex:rssat}.
Below function problem are extended to $n+1$-ary relations in the obvious way.
\begin{definition}
\label{d:computation}
Let $\RS$ be a TRS and let 
$\NA$ be a set of non-accepting patterns.
For each $n$-ary symbol $f \in \DS$ the 
\emph{relation $\sem{f} \subseteq {\Val^{n+1}}$ defined by $f$ in $\RS$}
is given by
\begin{equation*}
{(\seq{v},w) \in \sem{f}} \quad\defiff\quad {f(\seq{v}) \irsn[\RS] w} \text{ and $w$ is accepting with respect to $\NA$}\tpkt
\end{equation*}
We say that $\RS$ \emph{computes} the function problems associated with $\sem{f}$.
\end{definition}

Note that if $\RS$ is confluent, then $\sem{f}$ is in fact a (partial) function. 
Following~\cite{HM08,AM10b} we adopt an \emph{unitary cost model} for rewriting, 
where each reduction step accounts for one time unit, cf.~\cite{LM09,LM:2009b}.
Reductions are of course measured in the size of the input. 
\begin{definition}
The \emph{innermost runtime complexity function} $\ofdom{\rc[\RS]}{\N \to \N}$
relates sizes of basic terms $f(\seq{v}) \in \BASICS$ to the maximal 
length of computation. Formally
$$
 \rc[\RS](n) \defsym 
 \max\{\ell \mid \exists s \in \BASICS, \size{s} \leqslant n \text{ and } s  = t_0 \irew t_1 \irew \dots \irew t_\ell\} \tpkt
$$
If clear from context, we sometime drop the qualifier 'innermost' and
simply speak of \emph{runtime complexity} (of the TRS $\RS$).
\end{definition}

The restriction $s \in \BASICS$ accounts for the fact that computations start only from basic terms. 
We sometimes use $\dheight[\RS](s) \defsym \max\{\ell \mid \exists t.~s \irsl{\ell} t\}$
to refer to the \emph{(innermost) derivation height} of a single term $s$.
Note that the runtime complexity function is well-defined if $\RS$ is \emph{terminating}, 
i.e., $\irew$ is well-founded. 
Suppose $\rc$ is asymptotically bounded from above by a linear, quadratic,\dots, 
polynomial function, we simply say that the runtime of $\RS$ is linear, quadratic,\dots, 
or respectively polynomial.
If no confusion can arise, we drop the reference to the TRS $\RS$ and
simple write $\Rc$ instead of $\rc[\RS]$.

By folklore it is known that rewriting can be implemented with only polynomial overhead
if terms grow only polynomial during reductions. This implies that the functions 
computed by specific TRSs of polynomial runtime complexity can be implemented
with polynomial time complexity on a Turing machine (or an alternative computation
model). This observation can be significantly extended as it can be shown that 
the restriction on polynomial growth is not necessary. 

%
In~\cite{AM10,AM10b} it is shown that the unitary cost model is reasonable 
for full rewriting. The deterministic case was established independently in~\cite{LM09,LM:2009b} using essentially the same approach.
By Proposition~\ref{p:fnp} and a suitable adaption of Theorem~6.2 in~\cite{AM10b} to innermost rewriting we obtain the following proposition.

\begin{proposition}
\label{p:invariance}
Let $\RS$ be a rewrite system with polynomial runtime.
Then the function problems associated with $\sem{f}$ defined by $\RS$ are contained in $\FNP$. 
If $\RS$ is confluent, i.e., deterministic, then $\sem{f}$ is a (partial) function contained in $\FP$.
\end{proposition}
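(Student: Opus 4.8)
The plan is to realise each relation $\sem{f}$ by a nondeterministic Turing machine and then invoke Proposition~\ref{p:fnp}. Fixing an $n$-ary defined symbol $f \in \DS$, I would construct a machine $N_f$ that, on input a tuple $\seq{v} \in \Val$ presented as the string encoding of the basic term $f(\seq{v})$, sets $t_0 \defsym f(\seq{v})$ and then repeatedly performs one innermost rewrite step: it guesses nondeterministically an innermost redex (a redex all of whose proper subterms are normal forms) together with an applicable rule of $\RS$, and rewrites accordingly, producing $t_0 \irew t_1 \irew \cdots$. Since by assumption $\rc[\RS]$ is a well-defined, hence finite, polynomially bounded function, every such sequence terminates; upon reaching a normal form $t_\ell$ the machine checks syntactically whether $t_\ell \in \Val$ and whether $t_\ell$ is accepting with respect to the finite set $\NA$ of non-accepting patterns, outputting $w \defsym t_\ell$ in that case and failing on the current branch otherwise. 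By construction $N_f$ outputs $w$ on input $\seq{v}$ along some computation precisely when $(\seq{v},w) \in \sem{f}$, so $N_f$ computes the function problem associated with $\sem{f}$.

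Next I would bound the running time of $N_f$. A naive string representation of terms does not suffice here, because the intermediate terms $t_i$ need not be of size polynomial in $\size{f(\seq{v})}$; instead I would appeal to a suitable adaption of Theorem~6.2 in~\cite{AM10b} to innermost rewriting, which shows that, using a compact shared (term-graph) representation of terms, a reduction $t_0 \irew \cdots \irew t_\ell$ with $t_0$ of size $m$ can be simulated on a Turing machine in time polynomial in $m$ and $\ell$, the search for an innermost redex, the selection of a rule, and a single step on the shared representation each costing only polynomially in $m + \ell$. Since $\ell \leqslant \rc[\RS](\size{f(\seq{v})})$ is polynomial in $m = \size{f(\seq{v})}$, and $m$ is linear in the input size of $N_f$ (the signature being finite), the total running time is polynomial in the input size; the final acceptance test merely matches $t_\ell$ against the fixed finite set $\NA$ and is linear. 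Hence $N_f$ runs in polynomial time, and Proposition~\ref{p:fnp} yields that the function problem associated with $\sem{f}$ lies in $\FNP$, which settles the first claim (uniformly for every $f \in \DS$).

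For the second claim I would assume $\RS$ confluent, so that $\sem{f}$ is a partial function, as already observed after Definition~\ref{d:computation}. Since all innermost reductions from $f(\seq{v})$ are finite (bounded by $\rc[\RS]$) and, by confluence, lead to the same normal form, the leftmost-innermost reduction in particular reaches it, and its length is again at most $\rc[\RS](\size{f(\seq{v})})$. Replacing the nondeterministic redex choice in $N_f$ by the deterministic leftmost-innermost strategy --- equivalently, invoking the deterministic simulation of~\cite{LM09,LM:2009b} --- turns $N_f$ into a deterministic polynomial-time Turing machine computing $\sem{f}$, whence $\sem{f} \in \FP$.

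The main obstacle is the time analysis of the second paragraph, that is, the adaption of Theorem~6.2 in~\cite{AM10b} to innermost rewriting: one has to circumvent the potential exponential blow-up of intermediate terms by working throughout on a shared representation, and to argue carefully that a single innermost step --- including the search for an innermost redex --- incurs only polynomial overhead in that representation. The remaining ingredients (the design of the (non)deterministic machine, the acceptance check against $\NA$, and the appeal to Proposition~\ref{p:fnp}) are routine.
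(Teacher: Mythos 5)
Your proposal is correct and follows exactly the route the paper intends: the paper derives Proposition~\ref{p:invariance} by combining Proposition~\ref{p:fnp} with a suitable adaption of Theorem~6.2 of~\cite{AM10b} to innermost rewriting, which is precisely the nondeterministic simulation on a shared term-graph representation that you describe. You have merely spelled out the details (the machine construction, the acceptance check against $\NA$, and the deterministic case via confluence) that the paper leaves implicit by citation.
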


Our choice of adopting call-by-value semantics rests
on the observation that the unitary cost model of unrestricted rewriting 
often overestimates the runtime complexity of computed functions. 
This has to do with the unnecessary duplication of redexes.
\begin{example}\label{ex:dup}
  Consider the constructor TRS $\RSdup$ given by the following rules:
  \begin{alignat*}{6}
    \rlabel{dup:1} &~& \m{btree}(0) & \to \m{leaf}  
    \qquad & 
    \rlabel{dup:3} &~& \m{dup}(t) & \to \m{node}(t,t)
    \quad &
    \rlabel{dup:2} &~& \m{btree}(\ms(n)) & \to \m{dup}(\m{btree}(n)) \tpkt 
  \end{alignat*}
  Then for $n \in \N$, $\m{btree}(\ms^n(0))$ computes a binary tree of height $n$
  in a linear number of steps.
  On the other hand, $\RSdup$ gives also rise to a non-innermost reduction
  $$
  \m{btree}(\ms^{n}(0)) 
  \rew \m{dup}(\m{btree}(\ms^{n-1}(0)))
  \rew \m{node}(\m{btree}(\ms^{n-1}(0)), \m{btree}(\ms^{n-1}(0)))
  \rew \dots
  \tkom
  $$
  obtained by preferring $\m{dup}$ over $\m{btree}$.
  The length of the derivation is however exponential in $n$.
\end{example}

By Proposition~\ref{p:invariance} we obtain $\sem{\m{btree}} \in \FP$.
As indicated later, our analysis can automatically classify the function $\sem{\m{btree}}$ as feasible.


\section{The Polynomial Path Order}\label{s:popstar}

We arrive at the formal definition of the \emph{polynomial path order}
(\emph{\POPSTAR} for short). Variants of the definition presented here have been 
presented in earlier conference publications~\cite{AM08,AMS08,AM09,AM09b}. 

As already mentioned in the introduction 
the multiset path order capture the primitive recursive functions.
Hence reduction orders can entail implicit characterisations of
complexity classes. This motivates the quest for a miniaturisation of \MPO\ that
precisely captures the class $\FP$. 
Another motivation for the design of \POPSTAR\ rests on
the observation that term-rewriting characterisations of complexity classes 
may facilitate the study of (low) complexity classes, cf.~\cite{BW96,CW97}. 
Such applications imply the need to craft an order that
induces polynomial innermost runtime complexity.
\POPSTAR\ meets these demands, by providing a syntactic account of the
\emph{predicative analysis} of recursion set forth by Bellantoni and Cook~\cite{BC92}. 
Analogously \POPSTAR\ can be conceived as syntactic account of Leivant's 
notion of \emph{tiered recurrence}~\cite{Leivant:1990,Leivant93}.

For each function $f$, the arguments to $f$ are separated into \emph{normal} 
and \emph{safe} ones.
To highlight this separation, we write $f(\svec{x}{y})$ where arguments $\vec{x}$ to the left 
of the semicolon are normal, the arguments $\vec{y}$ to the right are safe.
Bellantoni and Cook define a class $\B$,
consisting of a small set of initial functions 
and that is closed under \emph{safe composition} and \emph{safe recursion on notation} (\emph{safe recursion} for brevity).

The crucial ingredient in $\B$ is that 
a new function $f$ is defined via safe recursion by the equations:
\begin{equation}
\label{scheme:srn} 
\tag{\ensuremath{\mathrm{SRN}}}
\begin{aligned}
  f(\sn{0,\vec{x}}{\vec{y}}) & = g(\sn{\vec{x}}{\vec{y}}) 
     \\
  f(\sn{2z + i,\vec{x}}{\vec{y}}) & = 
   h_i(\sn{z,\vec{x}}{\vec{y},f(\sn{z,\vec{x}}{\vec{y}})})\quad i \in \set{1,2}
   \tkom
\end{aligned}
\end{equation}
for functions $g,h_1$ and $h_2$ already defined in $\B$.
This definition corresponds to a recursion on the binary representation
of numbers. Consequently the number of recursive calls is linear in the size of the recursive input. 
Unlike primitive recursive functions, the
stepping functions $h_i$ cannot perform recursion on the \emph{impredicative} value $f(\sn{z,\vec{x}}{\vec{y}})$.
This is a consequence of \emph{data tiering}: recursion is performed on \emph{normal} arguments only. 
Dual, the recursive call $f(\sn{z,\vec{x}}{\vec{y}})$ is substituted into a \emph{safe} argument position.

To maintain the separation, safe composition restricts the usual composition operator
so that safe arguments are not substituted into normal argument position.
Precisely, for functions $h$, $\vec{r}$ and $\vec{s}$ already defined in $\B$, 
a function $f$ is defined by safe composition using the equation
\begin{equation}\label{scheme:sc} \tag{\ensuremath{\mathrm{SC}}}
  f(\sn{\vec{x}}{\vec{y}}) = h(\sn{\vec{r}(\sn{\vec{x}}{})}{\vec{s}(\sn{\vec{x}}{\vec{y}})}) \tpkt
\end{equation}
Crucially, the safe arguments $\vec{y}$ are absent in normal arguments to $h$.
The main result from~\cite{BC92} states that the class $\B$ coincides with
the class of polytime computable functions~$\FP$. 

As a first step to capture the notion of predicative recursion in~$\POPSTAR$, 
we introduce the concept of \emph{safe mappings} to
handle the separation of argument positions.
\begin{definition}\label{d:safemapping}
  A \emph{safe mapping} $\safe$ is a function 
  $\ofdom{\safe}{\FS \to 2^\N}$ that associates 
  with every $n$-ary function symbol $f$ the set of \emph{safe argument positions} 
  $\{i_1, \dots , i_m\} \subseteq \{1,\dots,n\}$.
  Argument positions included in $\safe(f)$ are called \emph{safe},
  those not included are 
  called \emph{normal} and collected in $\normal(f)$.
  For $n$-ary constructors $c$ 
  we require that all argument positions are safe, 
  i.e., $\safe(c) = \{1,\dots,n\}$.
\end{definition}

We refine term equivalence so that the safe mapping is taken into account. 

\begin{definition}\label{d:eqis}
  Let ${\qp}$ denote a precedence and $\safe$ a safe mapping.
  We define \emph{safe equivalence} $\eqis$ for terms $s,t \in \TERMS$
  inductively as follows:
  $s \eqis t$ if either $s = t$ or
  $s = f(\seq{s})$, $t = g(\seq{t})$, $f \ep g$
  and there exists a permutation $\pi$ such that for all $i \in \{1,\dots,n\}$, 
  $s_i \eqis t_{\pi(i)}$ and $i \in \safe(f)$ if and only if $\pi(i) \in \safe(g)$.
\end{definition}

To avoid notational overhead, we fix a safe mapping $\safe$ and suppose that for each $k+l$ ary function symbol 
$f$, the first $k$ argument positions are normal, and the remaining 
argument positions are safe, i.e., $\safe(f) = \{k+1,\dots,k+l\}$.
This allows use to write $f(\pseq{s})$ as before.
We require that the precedence $\qp$ adheres the partitioning of 
$\FS$ into defined symbols and constructors in the following sense.

\begin{definition}
A precedence $\qp$ is \emph{admissible} (for \POPSTAR) if $f \ep g$ implies 
that either both $f$ and $g$ are defined symbols, or both are constructors.
\end{definition}

In particular $\eqis$ preserves values, i.e., 
if $s \in \Val$ and $s \eqis t$ then also $t \in \Val$. 
The following definition introduces an auxiliary order $\gsq$, 
the full order $\gpop$ is then presented in Definition~\ref{d:gpop}.

\begin{definition}\label{d:gsq}
  Let ${\qp}$ denote a precedence.  
  Consider terms $s, t \in \TERMS$ such that $s = f(\pseq[k][l]{s})$.
  Then $s \gsq t$ if one of the following alternatives holds:
  \begin{enumerate}[labelsep=*,leftmargin=*]
  \item\label{d:gsq:st} $s_i \geqsq t$ for some $i \in \{1,\dots,k+l\}$ and, 
    if $f \in \DS$ then $i$ is a normal argument position ($i \in \{1,\dots,k\}$);
  \item\label{d:gsq:ia} $f \in \DS$, $t = g(\pseq[m][n]{t})$ where $f \sp g$ 
    and $s \gsq t_i$ for all $i = 1,\dots,m+n$.
  \end{enumerate}
  Here we set ${\geqsq} \defsym {\gsq} \cup {\eqis}$.
\end{definition}

Consider a function $f$ defined by safe composition from $r$ and $s$, 
cf.~scheme~\eqref{scheme:sc}.
The effect of this auxiliary order is to (properly) encompass 
safe composition in the full order $\gpop$.
Note that the auxiliary order can orient $f(\sn{\vec{x}}{\vec{y}}) \gsq r(\sn{\vec{x}}{})$ for defined symbols $f$ and $r$
with $f \sp r$. 
On the other hand, $f(\sn{\vec{x}}{\vec{y}})$ and safe arguments $y_i$ are incomparable,
and thus the orientation of $f(\sn{\vec{x}}{\vec{y}})$ and $s(\sn{\vec{x}}{\vec{y}})$ fails, even if $f \sp s$ is supposed.

\begin{definition}\label{d:gpop}
  Let ${\qp}$ denote a precedence.  
  Consider terms $s, t \in \TERMS$ such that $s = f(\pseq[k][l]{s})$.
  Then $s \gpop t$ if one of the following alternatives holds:
  \begin{enumerate}[labelsep=*,leftmargin=*]
  \item\label{d:gpop:st} $s_i \geqpop t$ for some $i \in \{1,\dots,k+l\}$, or
  \item\label{d:gpop:ia} $f \in \DS$, $t = g(\pseq[m][n]{t})$ where $f \sp g$ 
    and the following conditions hold:
    \begin{itemize}
    \item $s \gsq t_j$ for all normal argument positions $j = 1,\dots,m$;
    \item $s \gpop t_j$ for all safe argument positions $j = m+1,\dots,m+n$;
    \item $t_j \not\in \TA(\sigbelow{f}{\FS},\VS)$ for at most one safe argument position $j \in \{m+1,\dots,m+n\}$;
    \end{itemize}
  \item\label{d:gpop:ep} $f \in \DS$, $t = g(\pseq[m][n]{t})$ where $f \ep g$
    and the following conditions hold:
    \begin{itemize}
    \item $\mset{\seq[k]{s}} \gpopmul \mset{\seq[m]{t}}$;
    \item $\mset{\seq[k+l][k+1]{s}} \geqpopmul \mset{\seq[m+n][m+1]{t}}$.
    \end{itemize}
  \end{enumerate}
  Here ${\geqpop} \defsym {\gpop \cup \eqis}$.
\end{definition}

We use the notation $\caseref{\gsq}{i}$ and respectively $\caseref{\gpop}{i}$ 
to refer to the \nth{$i$} case in Definition~\ref{d:gsq} respectively 
Definition~\ref{d:gpop}.
Note that \POPSTAR\ is not a reduction order, as for example
closure under contexts fails due to the conditions put
upon $\cpop{ep}$. However \POPSTAR\ is a restriction of \MPO\ and
thus a \emph{termination order}.

\begin{remark}
The restrictions put upon \cpop{ia} amount to the fact that \POPSTAR\
allows at most one recursive call per right-hand side.  
\end{remark}

\begin{remark}
The proposed constraints are weaker compared to the corresponding clause given 
in~\cite[Definition 4]{AM08}. 
The early definition from~\cite[Definition~4]{AM08}, 
used the full order $\gpop$ only on one argument of the right-hand side 
(the one that possibly holds the recursive call), 
the remaining arguments were all oriented with the auxiliary order $\gsq$. 
\end{remark}

The case \cpop{ia} accounts for definitions by safe composition \eqref{scheme:sc}.
The final restriction put onto \cpop{ia} is used to prevent multiple recursive calls
as indicated in Example~\ref{ex:RS2}.
The case \cpop{ep} restricts the corresponding case in \MPO\ 
by taking the separation of normal and safe argument positions into account. 
Note that here normal arguments need to decrease. 
This reflects that as in \eqref{scheme:srn} 
recursion is performed on normal argument positions.
We emphasise that one application of \cpop{ia} possibly followed by one application of \cpop{ep}
orients the defining equations in \eqref{scheme:srn}, using
the obvious precedence (using suitable term representation of natural numbers).

Our order-theoretic account of predicative recursion motivates 
following definition.
\begin{definition}
  We call a constructor TRS $\RS$ \emph{predicative recursive}, if
  $\RS$ is compatible with an instance $\gpop$ of $\POPSTAR$ based on an 
  admissible precedence.
\end{definition}

Note that it can be determined in nondeterministic polynomial
time that a constructor TRS is predicative recursive: simply
guess a safe mapping and a precedence and apply the definition
of \POPSTAR. This is in contrast to semantic method 
(like additive polynomial interpretations) whose synthesis is 
undecidable in general.

We clarify Definition~\ref{d:gpop} on several examples. The first
of these examples shows a typical application of predicative recursion.
\begin{example}\label{ex:mult1}
  Consider the TRS $\RSmul$ expressing multiplication in Peano arithmetic.
  \begin{alignat*}{4}
    \rlabel{plus1} &~& +(\sn{0}{y}) & \to y 
    & \qquad
    \rlabel{plus2} &~& +(\sn{\ms(\sn{}{x})}{y}) & \to \ms(\sn{}{+(\sn{x}{y})})
    \\
    \rlabel{times1} &~& \times(\sn{0,y}{}) & \to 0 
    & \qquad
    \rlabel{times2} &~& \times(\sn{\ms(\sn{}{x}),y}{}) & \to +(\sn{y}{\times(\sn{x,y}{})})
  \end{alignat*}
  The TRS $\RSmul$ is predicative recursive, using the precedence
  ${\times}~\sp~{+}~\sp~{\ms}$ and the safe mapping as indicated in the rules:
  The rules \rref{plus1} and \rref{times1} are oriented by \cpop{st}.
  The rule \rref{plus2} is oriented by $\cpop{ia}$ using ${+}~\sp~{\ms}$
  and $+(\sn{\ms(\sn{}{x})}{y}) \cpop{ep} +(\sn{x}{y})$. 
  Note that the latter inequality only holds as the first argument position of addition is normal.
  Similar, the final rule \rref{times2} is oriented by \cpop{ia}, employing
  ${\times}\sp{+}$ together with $\times(\sn{\ms(\sn{}{x}),y}{}) \csq{st} y$ 
  and $\times(\sn{\ms(\sn{}{x}),y}{}) \cpop{ep} \times(\sn{x,y}{})$.
  Note that the latter two inequalities require that 
  the both argument positions of $\times$ are normal, i.e.,\ are 
  used for recursion.
\end{example}  

We re-consider the motivation Example~\ref{ex:rssat} from the introduction.
\begin{example}[Example~\ref{ex:rssat} continued]
\label{ex:rssat:2}
Consider the TRS~$\RSsat$ from the motivating Example~\ref{ex:rssat}, where
the separation into normal and safe arguments for the
defined function symbols is defined as follows: 
$\normal(\mif) = \normal(\mneg) = \varnothing$, $\normal(\meq) = \{1\}$.
We set $\normal(\member) = \{2\}$, 
and for all remaining defined function symbols, we
make all arguments normal, i.e.,
$\normal(\verify) = \normal(\issat) = \normal(\issat')
= \normal(\guess) = \normal(\choice) = \{1\}$.

Then ${\RSsat} \subseteq {\gpop}$ for any admissible precedence
satisfying the following constraints: 
$\guess \sp \choice$, $\verify \sp \mif, \member, \mneg$,
$\issat \sp \issat', \guess$, and 
$\issat' \sp \mif, \verify, \unsat$.
\end{example}

The next example is negative, in the sense that the considered TRSs admits
polynomial runtime complexity, but fails to be compatible with $\POPSTAR$.
\begin{example}[Example~\ref{ex:mult1} continued]
  Consider the TRS $\RSmul$ where the rule \rref{times2} is replaced by the rule
  $$
  \rlabel[\therule{times2}a]{times2a}~\times(\sn{\ms(\sn{}{x}),y}{}) \to +(\sn{\times(\sn{x,y}{})}{y})
  \tpkt
  $$
  The resulting system has polynomial runtime complexity, which can
  be automatically verified with~\TCT. However, the TRS does not follow the rigid 
  scheme of predicative recursion. For this reason, it cannot be handled by \POPSTAR.\@
  Technically, the terms $\times(\sn{\ms(\sn{}{x}),y}{})$ and $\times(\sn{x,y}{})$ 
  are incomparable with respect to $\gsq$ independent on the precedence, and 
  consequently also orientation of left- and right-hand side with 
  \cpop{ia} fails. 
\end{example}

The following examples clarifies the need for data tiering.
\begin{example}[Example~\ref{ex:mult1} continued]
  Consider the extension of $\RSmul$ by the two rules
  \begin{alignat*}{4}
    \rlabel{exp1} &~& \m{exp}(0,y) & \to \ms(\sn{}{0})  \qquad\qquad &
    \rlabel{exp2} &~& \m{exp}(\ms(\sn{}{x}),y) & \to \times(\sn{y,\m{exp}(x,y)}{}) \tkom
  \end{alignat*}
  that express exponentiation $y^x$ in an exponential number of steps. 
  The definition of $\m{exp}$ disregards data tiering as imposed by predicative recursion.
  In particular, since $\times$ admits no safe argument positions it
  cannot serve as a stepping function. Independent on the safe mapping for $\m{exp}$, 
  rule \rref{exp2} cannot be oriented using polynomial path orders.
\end{example}

The following theorem constitutes the first main result of this paper.
\begin{theorem}\label{t:popstar}
  Let $\RS$ be a predicative recursive (constructor) TRS.\@
  Then the innermost derivation height of any basic term 
  $f(\svec{u}{v})$ is bounded by a polynomial in the 
  maximal depth of normal arguments $\vec{u}$.
  The polynomial depends only on $\RS$ and the signature $\FS$.
  In particular, the innermost runtime complexity of $\RS$ is polynomial.
\end{theorem}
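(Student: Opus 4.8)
The plan is to pass through an intermediate, more quantitative order that refines $\gpop$ by attaching to each derivation step a bound on the growth of a carefully designed term-interpretation, and then to show that this interpretation decreases along $\irew$-steps in a way that is controlled by the depth of the normal arguments. Concretely, I would first define, for a predicative recursive TRS $\RS$ with precedence $\qp$ and safe mapping $\safe$, an extended version of $\POPFP$ (the polynomial path order for $\FP$ of Arai and the second author) that operates on a two-sorted term structure in which the ``normal'' and ``safe'' positions of every symbol are syntactically segregated. The key invariant to establish is that every $\gpop$-step, when lifted to this annotated setting, is simulated by finitely many steps of the extended $\POPFP$, with the number of simulating steps bounded in terms of $\rk(f)$ for the root symbol $f$ and the maximal depth of the normal arguments. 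This reduces the theorem to a purely combinatorial bound on $\POPFP$-derivation lengths, which in turn is obtained by exhibiting a polynomial interpretation-like potential function $\Phi$ that is (a) polynomially bounded in the depth of normal arguments at the start, by an induction on $\rk$, and (b) strictly decreasing along each step. The degree of the polynomial will be a function of $\rk(f)$ and the maximal arity in $\FS$, and therefore depends only on $\RS$ and $\FS$, as claimed.

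The argument proceeds by the obvious layering. I would first prove a shape lemma: if $f(\svec{u}{v}) \irsl{\ell} t$ then every term occurring along the reduction has the form $C[g_1(\dots),\dots,g_p(\dots)]$ where $C$ is a constructor context and the $g_i$ are defined symbols, and the nesting depth of defined symbols is bounded by $\rk(f)$; moreover, because $\irew$ is innermost, when a rule $l \to r$ is applied its proper subterms at defined-symbol positions are already values, so the ``normal'' arguments fed into any recursive call are constructor terms whose depth does not exceed the depth of the normal arguments of the enclosing call. This last point is exactly what $\caseref{\gpop}{ep}$ buys us: along a $\cpop{ep}$-chain the multiset of normal arguments strictly decreases, hence the number of recursive unfoldings at a fixed defined symbol of rank $r$ is bounded by a polynomial (in fact, by the size of the argument, which is at most exponential in its depth but the recursion only follows one ``digit'' at a time, so it is polynomial in the depth via the binary-representation reading already flagged after scheme~\eqref{scheme:srn}). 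The case $\caseref{\gpop}{ia}$ contributes only a bounded branching factor into safe positions (with the crucial ``at most one non-below-$f$ safe argument'' restriction preventing the $\RSbin$-style blow-up of Example~\ref{ex:RS2}), and $\caseref{\gsq}{}$ contributes nothing beyond selecting a subterm at a normal position.

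With the shape and single-symbol bounds in hand, the main induction is on $\rk(f)$. For the base case $\rk(f)=0$ there are no smaller defined symbols, so $\caseref{\gpop}{ia}$ is vacuous and the derivation is driven purely by $\caseref{\gpop}{ep}$ and $\caseref{\gpop}{st}$, giving a linear (in the depth) number of steps, each producing a term of bounded size. For the inductive step, one substitutes the polynomial bounds obtained for symbols in $\sigbelow{f}{\FS}$ into the recurrence generated by one ``level'' of $f$-recursion: the safe arguments passed to stepping functions are themselves outputs of lower-rank computations, hence of polynomially bounded depth, and composing a polynomial number of $f$-steps with polynomially bounded per-step cost yields again a polynomial, with degree increased by a constant depending on the arities. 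Summing the per-symbol contributions over the at most $\rk(f)$ nesting levels gives the global polynomial bound on $\dheight[\RS](f(\svec{u}{v}))$ in terms of $\max_i \depth(u_i)$; taking the max over basic terms of size at most $n$ (noting $\depth(u_i) \le \size{s} \le n$) gives $\rc[\RS](n)$ polynomial.

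The main obstacle, and the reason the intermediate extended $\POPFP$ is worth the trouble, is controlling the interaction between $\caseref{\gpop}{ia}$ and $\caseref{\gpop}{ep}$ across nested defined symbols: a naive potential function that is merely the term size fails, because safe positions may carry large subterms that are copied (the $\mset{\dots} \geqpopmul \mset{\dots}$ in $\caseref{\gpop}{ep}$ permits duplication of safe arguments), and one must ensure these copies are never fed back into normal positions where they could drive fresh recursion — this is precisely the content of safe composition~\eqref{scheme:sc} and is enforced by the side condition in $\caseref{\gsq}{st}$ that subterm selection at a defined root only descends into normal positions, together with the fact that constructors have all positions safe. Making this ``safe subterms never become normal'' invariant precise and stable under $\irew$, and then showing the potential is insensitive to duplication in safe positions (so that the only genuine multiplicative growth is the bounded $\caseref{\gpop}{ia}$ branching into a single non-trivial safe slot), is the technical heart of the proof; everything else is bookkeeping with the rank induction. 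This separation of concerns — a ``simulation'' lemma into extended $\POPFP$ in Section~\ref{s:embed}, and a self-contained polynomial bound for $\POPFP$ in Section~\ref{s:pop} — is the structure I would follow.
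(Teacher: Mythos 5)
Your architecture coincides with the paper's: flatten terms so that normal and safe argument positions are syntactically segregated (the predicative interpretations $\ints,\intn$ of Section~\ref{s:pop}), embed every innermost step into an extended version of \POPFP\ on the flattened structures (Section~\ref{s:embed}), and bound the descent lengths of that auxiliary order by induction on the rank (Theorem~\ref{t:pop}). Two of your intermediate quantitative claims are nevertheless false, and one genuinely hard ingredient is missing. Your shape lemma asserts that the nesting depth of defined symbols along the reduction is bounded by $\rk(f)$. This fails: in $\RSdc$ (Example~\ref{ex:dc}), rule~\rref{dc:6} stacks equal-rank occurrences of $+$ inside one another's \emph{safe} argument positions, so an innermost reduction of $\m{q}$ applied to a numeral of depth $n$ passes through terms in which defined symbols are nested to depth $\Theta(n)$. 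The rank only bounds nesting along $\cpop{ia}$-descents; taming the equal-rank stacking through safe positions is precisely what the flattening is for (it turns such a tower into a flat sequence), and your concluding step of ``summing the per-symbol contributions over the at most $\rk(f)$ nesting levels'' does not survive this.

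Second, your base case claims that for $\rk(f)=0$ the $\cpop{ep}$/$\cpop{st}$-driven descent is linear in the depth, and your general justification for polynomially many unfoldings is the ``one digit at a time'' reading of \eqref{scheme:srn}. Both arguments only cover recursion on a single normal argument, whereas $\cpop{ep}$ compares the \emph{multiset} of all normal arguments. Lemma~\ref{l:imprecise} exhibits systems $\RS_k$ of recursion depth one with ${\rc[\RS_k](n)} = \Omega(n^k)$: the descent length is polynomial of degree governed by the number of normal argument positions, not linear. Showing that this multiset descent is still only polynomially long --- via an order-preserving homomorphism from bounded-size multisets over $\N$ into $\N$ and a Faulhaber-type estimate (Example~\ref{ex:multisets}, Lemmas~\ref{l:homo} and~\ref{l:slowpoly}) --- is one of the real technical ingredients of the proof, and your proposal does not address it. Until these two points are repaired, the rank induction does not close.
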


\label{popstar:proofplan}
The proof of Theorem~\ref{t:popstar} is involved and requires
a variety of ingredients. We give a short outline.
In Section~\ref{s:pop}, we define \emph{predicative interpretations} $\ints$
that flatten terms to \emph{sequences of terms}, essentially separating 
safe from normal arguments. This allows us to analyse terms independent from safe arguments.
Then we introduce an order $\gpopv[][]$ on sequences of terms, 
that is simpler compared to $\gpop$ and does not rely on the separation
of argument positions.  This \emph{polynomial path order on
sequences} extends the polynomial path order for $\FP$, introduced in~\cite{AM05}.
In Section~\ref{s:embed} we establish a \emph{predicative embedding}
of derivations into $\gpopv[][]$ as depicted in Figure~\ref{fig:2}.

\begin{figure}
\begin{center}
  \begin{tikzpicture}
    \newcommand{\n}{2};
    \newcommand{\dx}{1.8};
    \newcommand{\dy}{0.9};
    \newcommand{\drawterms}[2]{
      \pgfmathparse{#1}
      \node(s_#1) at (\pgfmathresult*\dx,0) {$s_{#2}$};
      \node(is_#1) at (\pgfmathresult*\dx,-\dy) {$\ints(s_{#2})$};
      \draw[->] (s_#1) to node[right] {} (is_#1); 
    }
    \newcommand{\drawrel}[1]{
      \pgfmathparse{#1}
      \node(rew_#1) at (\pgfmathresult*\dx+\dx/2,0) {$\irew$};
      \node(ord_#1) at (\pgfmathresult*\dx+\dx/2,-\dy) {$\gpopv[][]$};
    }

    \foreach \x in {1,...,\n} \drawterms{\x}{\x};

    \pgfmathparse{(\n+1)*\dx};
    \node at (\pgfmathresult,0) {$\dots$};
    \node at (\pgfmathresult,-\dy) {$\dots$};
    \drawterms{\n+2}{\ell};

    \foreach \x in {1,...,\n} \drawrel{\x};
    \drawrel{\n+1};
  \end{tikzpicture}
\end{center}
\caption{Predicative Embedding}
\label{fig:2}
\end{figure}

In Theorem~\ref{t:pop} we show that the length of $\gpopv[][]$ descending sequences
starting from basic terms can be bound appropriately.
One may wonder whether a precise degree on the provided polynomial bound can be obtained
by reasoning based on the depth of recursion and formation of composition rules.
This is not the case as Lemma~\ref{l:imprecise} clarifies, where 
we provide a family of TRS $(\RS_k)_{k \geqslant 1}$, 
all with depth of recursion one and without the use of composition, such that ${\rc[\RS_k](n)} = \Omega(n^k)$.
However, it is possible to design a restriction of \POPSTAR, dubbed
\emph{small polynomial path order} in~\cite{AEM12}, which induces 
a precise degree on the provided polynomial bound.

\begin{lemma}\label{l:imprecise}
  For every $k \geqslant 1$ there exists a TRS $\RS_k$ 
  over constructors $\ms,0$ defining a single defined symbol $\m{f}_k$
  such that $\RS_k \subseteq {\gpop}$ 
  and ${\rc[\RS_k](n)} = \Omega(n^k)$. 
\end{lemma}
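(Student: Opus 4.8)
The plan is to exhibit, for each $k \geqslant 1$, an explicit TRS $\RS_k$ with a single defined symbol $\m{f}_k$ over the constructors $\ms$ and $0$, such that the recursion is ``flat'' (depth one, no composition) yet the runtime is $\Omega(n^k)$. The natural candidate is an iterated-counter construction: let $\m{f}_k$ take $k$ normal arguments, and define it so that a call $\m{f}_k(\ms^{n}(0),\dots,\ms^{n}(0))$ simulates a $k$-digit counter counting down from $(n,\dots,n)$ to $(0,\dots,0)$. Concretely, I would use rules of the shape
\begin{alignat*}{2}
  \m{f}_k(\sn{0,x_2,\dots,x_k}{}) &\to \m{f}_k(\sn{x_2,\dots,x_k, ?}{}) && \quad\text{(or a base rule returning $0$ when all arguments are $0$),}\\
  \m{f}_k(\sn{\ms(\sn{}{x_1}),x_2,\dots,x_k}{}) &\to \m{f}_k(\sn{x_1,x_2,\dots,x_k,\dots}{})
\end{alignat*}
arranged so that decrementing the ``least significant'' argument happens $n$ times between each decrement of the next argument, and so on; the total number of steps is then $\Theta(n^k)$. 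The details of how the arguments are reloaded to $\ms^n(0)$ upon a carry must be handled carefully --- one clean way is to carry along an extra copy of each bound used for reloading, or to let the recursion reload from a value that is still available in another argument position, keeping all left-hand sides basic.

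The two things to verify are then (i) $\RS_k \subseteq {\gpop}$ for a suitable admissible precedence and safe mapping, and (ii) ${\rc[\RS_k](n)} = \Omega(n^k)$. For (i) I would make every argument of $\m{f}_k$ normal (so $\safe(\m{f}_k) = \varnothing$) and take the trivial precedence in which $\m{f}_k$ is above $\ms$ and $0$ but equivalent to nothing else; each rule is then oriented by one application of \cpop{ia} followed by one application of \cpop{ep}, exactly as the defining equations of safe recursion are oriented in the discussion preceding Theorem~\ref{t:popstar}. The multiset comparisons in \cpop{ep} work because in every recursive rule the multiset of normal arguments strictly decreases in $\gpopmul$ (one argument $\ms(\sn{}{x})$ is replaced by its subterm $x$, the rest unchanged, possibly together with reloaded terms that are themselves subterms of the original normal arguments via $\geqsq$ / $\csq{st}$); I should double-check that the reloading terms are reachable by $\gsq$ from the left-hand side, which is where the construction has to be chosen with care so that \cpop{ia}'s normal-argument condition ``$s \gsq t_j$'' is met.

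For (ii) I would lower-bound the derivation height of the basic term $\m{f}_k(\ms^{n}(0),\dots,\ms^{n}(0))$ by analysing the counter semantics: counting a $k$-digit base-$(n{+}1)$ number down to zero takes at least $(n{+}1)^k - 1$ decrement steps, hence $\dheight[\RS_k](\m{f}_k(\ms^{n}(0),\dots)) \geqslant (n+1)^k - 1 = \Omega(n^k)$, and since this basic term has size $\Theta(k\cdot n) = \Theta(n)$ for fixed $k$, we get ${\rc[\RS_k](n)} = \Omega(n^k)$. The main obstacle I anticipate is purely in the bookkeeping of the construction: arranging the carry/reload mechanism so that (a) all left-hand sides remain basic terms (constructor TRS), (b) the system stays compatible with $\gpop$ --- in particular the ``at most one argument outside $\TA(\sigbelow{f}{\FS},\VS)$'' restriction and the strict decrease of normal arguments in \cpop{ep} --- while still (c) genuinely performing $\Theta(n^k)$ steps. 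Getting a single uniform family that simultaneously satisfies all three is the delicate point; everything after the construction is a routine induction on $k$ for the lower bound and a rule-by-rule check for compatibility.
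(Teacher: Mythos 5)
Your overall strategy --- an iterated $k$-digit counter with all arguments of $\m{f}_k$ normal, oriented by the multiset comparison on normal arguments --- is exactly the paper's, but you have left open precisely the step that carries the whole proof: the concrete carry/reload rule. This matters because one of the two options you float does not work. If you ``carry along an extra copy of each bound used for reloading'', the reload rule has the shape $\m{f}(\sn{0,\ms(\sn{}{x_2}),\dots,N}{}) \to \m{f}(\sn{N,x_2,\dots,N}{})$, and the multiset comparison required by $\cpop{ep}$ fails: the fresh occurrence of $N$ on the right must be strictly dominated by some element removed from the left-hand multiset, but nothing in the left-hand side is strictly above the pristine copy $N$. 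The resolution in the paper is your other, vaguer option made precise: when the first $i-1$ digits are $0$ and the $i$-th is $\ms(\sn{}{x_i})$, reload \emph{all} of the first $i$ positions to $x_i$, i.e.
\begin{equation*}
  \m{f}_k(\sn{0,\dots,0,\ms(x_i),x_{i+1},\dots,x_k}{}) \to \m{f}_k(\sn{x_i,\dots,x_i,x_i,x_{i+1},\dots,x_k}{}) \tpkt
\end{equation*}
Here every new element of the right-hand multiset is a proper subterm of the erased element $\ms(x_i)$, so $\mset{0,\dots,0,\ms(x_i),x_{i+1},\dots,x_k} \gpopmul \mset{x_i,\dots,x_i,x_{i+1},\dots,x_k,x_i}$ holds and each rule is oriented by a single application of $\cpop{ep}$ (not $\cpop{ia}$ followed by $\cpop{ep}$ --- with a single defined symbol the root symbols of both sides are equivalent, so $\cpop{ia}$ is never used at the root).

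The price of this reload is that your lower-bound computation no longer applies as stated: the counter does not wrap back to $n$ but to the decremented digit, so the derivation length is not $(n+1)^k-1$ but a nested sum $\sum_{m_1=0}^{n}\sum_{m_2=0}^{m_1}\cdots\sum_{m_k=0}^{m_{k-1}} 1$. This is still $\Omega(n^k)$ (by Faulhaber's formula, or, as the paper does, by an induction on $k$ showing $\m{f}_k(\ms^n(0),\dots,\ms^n(0))$ reduces to $\m{f}_k(0,\dots,0)$ in at least $c_k\cdot n^k$ steps), so the conclusion survives, but the argument has to be redone for the construction that is actually orientable. In short: the idea is right and matches the paper, but the proposal as written does not yet contain a rule system that simultaneously satisfies compatibility with $\gpop$ and the $\Omega(n^k)$ bound; committing to the reload-to-$x_i$ rules and adjusting the counting argument accordingly closes the gap.
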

\begin{proof}
  Consider the following TRS $\RS_k$ that is compatible with $\gpop$:
  \begin{align*}
    \m{f}_k(\sn{\ms(x_1), x_2,x_3, \dots,x_k}{}) & \to \m{f}_k(\sn{x_1, x_2, x_3, \dots,x_k}{}) \\
    \m{f}_k(\sn{0,\ms(x_2), x_3, \dots, x_k}{}) & \to \m{f}_k(\sn{x_2, x_2, x_3, \dots,x_k}{}) \\
    & \ \,\vdots \\
    \m{f}_k(\sn{0,\dots,0,\ms(x_k)}{}) & \to \m{f}_k(\sn{x_k,\dots,x_k,x_k}{})
  \end{align*}
  We show the stronger claim that for all $n \geqslant 1$ there exist constants $c_k \in \Rat$ such that 
  $$
  \m{f}_k(\ms^n(0), \dots, \ms^n(0)) 
  \rsl[\RS_k]{{\geqslant}c_k \cdot n^k} \m{f}_k(0, \dots, 0)
  $$
  by induction on $k$.
  The base case is trivial, we consider the inductive step. 
  Applying the induction hypothesis, by definition of $\RS_k$ and
  $\RS_{k+1}$ it is easy to see that for all $n \geqslant 1$
  \begin{align*}
  \m{f}_{k+1}(\ms^{n}(0), \dots, \ms^{n}(0), \ms^n(0)) 
  & \rsl[\RS_{k+1}]{{\geqslant}c_k \cdot n^k} \m{f}_{k+1}(0, \dots, 0,\ms^{n}(0)) \\
  & \rew[\RS_{k+1}]\m{f}_{k+1}(\ms^{n-1}(0),\dots,\ms^{n-1}(0),\ms^{n-1}(0)) \tpkt
  \end{align*}
  Using this observation, a simple side induction on $n$ reveals
  $$
  \m{f}_{k+1}(\ms^{n}(0), \dots, \ms^{n}(0), \ms^n(0)) 
  \rsl[\RS_{k+1}]{{\geqslant}f(n)} \m{f}_{k+1}(0, \dots, 0, 0) 
  $$
  where $f(n) = \sum_{i=1}^n c_k \cdot i^k \in \Omega(n^{k+1})$. 
  The exact overhead due to multiset comparisons is further 
  investigated in Example~\ref{ex:multisets}.
\end{proof}

The next three examples stress that the restrictions to 
runtime complexity (Example~\ref{ex:dc}),
innermost reductions (Example~\ref{ex:outermost}), 
as well as constructor TRSs (Example~\ref{ex:constructor}), 
are all essential for the correctness of Theorem~\ref{t:popstar}. 

\begin{example}
\label{ex:dc}
Consider the TRS $\RSdc$ given by the following rules, 
cf.~\cite[Example~1]{HL89}.
\begin{alignat*}{4}
  \rlabel{dc:1} && +(0;y) & \to y 
  & \qquad
  \rlabel{dc:2} && +(\ms(;x);y) & \to \ms(;+(x;y))  
  \\
  \rlabel{dc:3} && \m{d}(0;) & \to 0 
  & \qquad
  \rlabel{dc:4} && \m{d}(\ms(;x);) & \to \ms(;\ms(;\m{d}(x;)))
  \\
  \rlabel{dc:5} && \m{q}(0;) & \to 0 
  & \qquad
  \rlabel{dc:6} && \m{q}(\ms(;x);) & \to +(\ms(;\m{d}(x;));\m{q}(x;))
  \tkom
\end{alignat*}
where we suppose that $0$ and $\ms$ are the only constructors. 
As shown by Hofbauer and Lautemann, $\RSdc$ admits at least 
double-exponentially \emph{derivational complexity}; in particular
it is easy to find a family of terms $t_n$ such that 
$\dheight(t_n) = 2^{2^{\Omega(n)}}$. 

On the other hand $\RSdc$ is compatible with a polynomial path order $\gpop$
as induced by a precedence $\qp$ satisfying $\m{q} \sp \m{d} \sp \ms \sp 0$
and the safe mapping as indicated in the rules. 
As a side-remark we emphasise that the orientability of rule~\rref{dc:6}
induces that \POPSTAR\ properly extends the safe composition scheme~\eqref{scheme:sc}.
\end{example}

\begin{example}[Example~\ref{ex:dup} continued]
\label{ex:outermost}  
  Observe that $\RSdup \subseteq {\gpop}$ with any admissible precedence 
  satisfying $\m{btree} \sp \m{dup}$ for the TRS $\RSdup$ depicted in Example~\ref{ex:dup}. 
  We use $\normal(\m{btree}) = \{1\}$ and $\normal(\m{dup}) = \varnothing$ for
  the defined function symbols.
  Theorem~\ref{t:popstar} thus implies that the (innermost) runtime complexity of $\RSdup$ is 
  polynomial. On the other hand, we already observed that $\RSdup$ admits exponentially long 
  \emph{outermost} reductions.
\end{example}

\begin{example}
\label{ex:constructor}
  Consider the TRS $\RSnc$ given by the rules 
  \begin{alignat*}{4}
    \rlabel{nc:f} && \mf(\sn{n}{}) & \to \mh(\sn{}{\m{gs}(\sn{n}{})})
    & \qquad
    \rlabel{nc:gs1}  && \m{gs}(\sn{0}{}) & \to 0
    \\
    \rlabel{nc:h} && \mh(\sn{}{\mg(\sn{}{n})}) & \to \mc(\sn{}{\mh(\sn{}{n}),\mh(\sn{}{n})})
    & \qquad
    \rlabel{nc:gs2}  && \m{gs}(\sn{\m{s}(\sn{}{n})}{}) & \to \mg(\sn{}{\m{gs}(\sn{n}{})}{})
    \\
    \rlabel{nc:g}  && \mg(\sn{}{\bot}) & \to \mc(\sn{}{\mh({\sn{}{\bot}}),\mh({\sn{}{\bot}})})
    \tkom
  \end{alignat*}
  where we suppose that the only constructors are $\bot$, $0$ and $\ms$. 
  The rule~\rref{nc:g} is used to define the symbol $\mg$, and to properly set up the precedence. 
  The rules~\rref{nc:gs1} and~\rref{nc:gs2} are used to translate a tower $\ms^n(\sn{}{0})$ to $\mg^n(\sn{}{0})$, 
  using rule~\rref{nc:f} we thus obtain a family of reductions 
  $$
  \mf(\sn{\ms^n(\sn{}{0})}{}) 
  \rew[\RSnc] \mh(\sn{}{\m{gs}(\sn{\ms^n(\sn{}{0})}{})}) 
  \rss[\RSnc] \mh(\sn{}{\mg^n(\sn{}{0})}{}) \tkom
  $$
  for $n \in \N$. 
  It is not difficult to see that the derivation height of 
  the final term $\mh(\sn{}{\mg^n(\sn{}{0})})$ grows exponentially in $n$ due to rule \rref{nc:h}, 
  and so the innermost runtime complexity of $\RSnc$ is bounded by an exponential from below. 

  On the other hand, $\RSnc$ is compatible with a polynomial path order $\gpop$
  as induced by a precedence $\qp$ satisfying 
  $\mf\  \sp\ \m{gs}\  \sp\  \mg\  \sp\  \mh\  \sp\  \m{c}$
  and the safe mapping as indicated in the rules.
  Observe that for rule~\rref{nc:h} we exploit that $\mg$ is defined, 
  conclusively one can show
  $\mg(\sn{}{m}) \gpop \mc(\sn{}{\mh(\sn{}{m}),\mh(\sn{}{m})})$
  and therefore $\mh(\sn{}{\mg(\sn{}{m})}) \cpop{st} \mc(\sn{}{\mh(\sn{}{m}),\mh(\sn{}{m})})$ 
  holds. 
  However, due to rule~\rref{nc:g} $\RSnc$ is not a constructor TRS, as demanded by 
  Theorem~\ref{t:popstar}.
\end{example}

Bellantoni and Cook's characterisation, as well as Leivant's work on tiered recurrence,
was originally stated on word algebras. In quite recent work 
by Dal~Lago~et~al.,~\cite{DLMZ:10} the result by Leivant~\cite{Leivant:1990,Leivant93} has been extended to arbitrary free algebras. 
In particular, in~\cite{DLMZ:10} a function $f$ is defined by \emph{general ramified recursion} as 
$$
  f(c(\seq{x}), \vec{y}) = h_c(\seq{x},\vec{y},f(x_1,\vec{y}),\dots,f(x_n,\vec{y})) \tkom
$$
for every $n$-ary constructor $c$, provided a data tiering principle is satisfied.
Due to the linearity condition imposed on $\cpop{ia}$, such a recursion principle 
cannot be expressed in predicative recursive TRSs. Indeed, allowing this form 
of recursion would invalidate Theorem~\ref{t:popstar}. 
\begin{example}[Example~\ref{ex:dup} continued]
  Let $\RSgr$ denote the extension of $\RSdup$ from Example~\ref{ex:dup} by the rules
  \begin{alignat*}{4}
    \rlabel{gr:call}  && \mf(\sn{n}{}) & \to \m{traverse}(\sn{\m{btree}(\sn{n}{})}{})
    \\
    \rlabel{gr:g1}  && \m{traverse}(\sn{\m{leaf}}{}) & \to \m{leaf}
    \\
    \rlabel{nc:g2}  && \m{traverse}(\sn{\m{node}(\sn{}{x,y})}{}) & \to \m{node}(\sn{}{\m{traverse}(\sn{x}{}), \m{traverse}(\sn{y}{})})
    \tpkt
  \end{alignat*}
  The above definition of $\m{f}$ is expressible in the system of~\cite{DLMZ:10}, 
  by simply conceiving the rewrite rules as defining equations.
  In particular it follows that $\sem{f}$ is polytime computable. 
  On the other hand, the runtime complexity of $\RSgr$ is at least exponential
  as a derivation of $\m{f}(\ms^{n}(0))$, $\RSgr$ traverses every node of a binary tree of height $n$. 
\end{example}

Finally we note that the order $\gpop$ is \emph{blind} on constructors, in particular $\gpop$ collapses 
to the subterm relation (modulo equivalence) on values. 
\begin{lemma}\label{l:gpop:val}
  Suppose the precedence underlying $\gpop$ is admissible.
  If $s \gpop t$ and $s \in \Val$ then $t$ is a safe subterm of $s$ 
  (modulo $\eqis$), in particular $t \in \Val$ holds.
\end{lemma}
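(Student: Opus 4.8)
The plan is to prove Lemma~\ref{l:gpop:val} by induction on the structure of the value $s \in \Val$, exploiting that the definition of $\gpop$ offers only a very restricted set of clauses when the root symbol of $s$ is a constructor. Write $s = c(\seq{s})$ with $c \in \CS$; since $\safe(c) = \{1,\dots,n\}$ by Definition~\ref{d:safemapping}, all argument positions of $s$ are safe, so in the notation $s = f(\pseq[k][l]{s})$ of Definition~\ref{d:gpop} we have $k = 0$. Now suppose $s \gpop t$ and inspect which clause of Definition~\ref{d:gpop} can apply. Clauses \cpop{ia} and \cpop{ep} both require $f \in \DS$, which is impossible here because $c$ is a constructor and the precedence is admissible (so no constructor is $\ep$-equivalent to, or above, a defined symbol in a way that would matter — but in fact the clauses simply demand $f \in \DS$ outright). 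Hence only \cpop{st} is available: there is some $i \in \{1,\dots,n\}$ with $s_i \geqpop t$, i.e.\ $s_i \gpop t$ or $s_i \eqis t$, and moreover $i$ is a safe argument position of $c$ (which it automatically is).

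The two cases are then easy. If $s_i \eqis t$, then since $\eqis$ preserves values (as noted after the definition of admissible precedences, $\eqis$ preserves $\Val$) we get $t \in \Val$, and $t$ is a safe subterm of $s$ modulo $\eqis$ because $s_i$ is a direct safe subterm of $s = c(\seq{s})$ and $s_i \eqis t$. If instead $s_i \gpop t$, then $s_i \in \Val$ (it is an argument of the value $s = c(\seq{s})$, hence itself a value), and $s_i$ is a strictly smaller value, so the induction hypothesis applies to $s_i \gpop t$ and yields that $t$ is a safe subterm of $s_i$ modulo $\eqis$, hence in particular $t \in \Val$; composing the safe-subterm relation (transitivity of $\esuperterm$ together with the fact that all positions in a constructor term are safe) shows $t$ is a safe subterm of $s$ modulo $\eqis$ as well. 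The base case, $s$ a variable or a nullary constructor, is vacuous for \cpop{st} (there are no arguments) and the other clauses are unavailable, so nothing is below $s$.

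The only point requiring a little care — and the closest thing to an obstacle — is making the phrase ``safe subterm modulo $\eqis$'' precise and checking it composes transitively: one must verify that if $u$ is a safe subterm of $s_i$ modulo $\eqis$ and $s_i$ is a direct safe argument of the constructor term $s$, then $u$ is a safe subterm of $s$ modulo $\eqis$. This is immediate once one observes that in a constructor-rooted term every argument position is safe (Definition~\ref{d:safemapping}), so the notion of ``safe subterm'' restricted to values coincides with the ordinary (equivalence-closed) subterm relation $\esuperterm$, which is transitive; the claim ``$t \in \Val$'' then follows because $\esuperterm$ relates a value only to values, using again that $\eqis$ preserves $\Val$. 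I would state this coincidence as a one-line observation and then the induction goes through with essentially no computation.
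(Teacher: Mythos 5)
Your proof is correct: since the root of a non-variable value is a constructor, clauses $\cpop{ia}$ and $\cpop{ep}$ are ruled out by their requirement $f \in \DS$, leaving only $\cpop{st}$, and your structural induction on $s$ together with the already-noted fact that $\eqis$ preserves values (and that every argument position of a constructor is safe) goes through exactly as you describe. The paper states Lemma~\ref{l:gpop:val} without proof, and your argument is precisely the intended one.
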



\section{The Polynomial Path Order on Sequences}\label{s:pop}

Fix again a safe mapping $\safe$ on the signature $\FS$.
We now define the notion of \emph{predicative interpretation} of a term $t$.
Guided by the safe mapping, predicative interpretations map terms to sequences of terms.
We then introduce the \emph{polynomial path order on sequences}, 
intended to orient images of the predicative interpretation as outlined before.

To formalise sequences, we use an auxiliary variadic function symbol
$\listsym$.  Here variadic means that the arity of
$\listsym$ is finite but arbitrary.  We always write $\lseq{t}$
for $\listsym(\seq{t})$, in particular if we write $f(\seq{t})$ then $f \not=\listsym$.
Define the \emph{normalised signature} $\FSn$ as follows:
\begin{equation*}
  \FSn \defsym \bigl\{ \fn \mid f \in \FS, \normal(f) = \{i_1,\dots,i_k\} \text{ and } \ar(\fn) = k \}\bigr\}
\tpkt
\end{equation*}
The predicative interpretation of a term $f(\pseq{s})$ results in a sequence
$\lst{\fn(\seq[k]{a})} \append a_{k+1} \append \cdots \append a_{k+l}$, 
where $\append$ denotes \emph{concatenation} of sequences and the 
sequences $a_i$ are predicative interpretations of the corresponding arguments $s_i$ ($i = 1,\dots,k+l$).
Note that in the interpretations, terms have sequences as arguments. 

\begin{definition}\label{d:sequencedterms}
  The set of \emph{terms with sequence arguments}
  $\TS \subseteq \TA(\FSn \uplus \set{\listsym},\VS)$
  and the set of \emph{sequences} $\LS  \subseteq \TA(\FSn \uplus \set{\listsym},\VS)$ is inductively defined as follows:
  \begin{enumerate}[labelsep=*,leftmargin=*]
  \item $\VS \subseteq \TS$, and
  \item if $\seq{a} \in \LS$ and $f \in \FSn$ then $f(\seq{a}) \in \TS$, and
  \item if $\seq{t} \in \TS$ then $\lseq{t} \in \LS$.
  \end{enumerate}
\end{definition}

We always write $a,b, \dots$, possibly extended by
subscripts, for elements from $\TS$ and $\LS$.  
The restriction of $\TS$ and $\LS$ to ground terms is denoted  
by $\GTS$ and $\GLS$ respectively.
When no confusion can arise from this we call terms with sequence arguments simply terms.
Further, we sometimes abuse set notation and write $b \in \lseq{a}$ if 
$b = a_i$ for some $i \in \{1,\dots,n\}$.
We denote by $a \append b$ the \emph{concatenation} of $a \in \TLS$ and $b \in \TLS$.
To avoid notational overhead we overload concatenation to both terms and sequences.
Let $\tolst(a) \defsym \lst{a}$ if $a \in \TS$ and $\tolst(a) \defsym a$ if $a \in \LS$.
We set $a \append b \defsym \lst{a_1~\cdots~a_n~b_1~\cdots~b_m}$ 
where $\tolst(a) = \lseq{a}$ and $\tolst(b) = \lseq[m]{b}$.
As concatenation is associative we drop parenthesis at will.
We define the \emph{length} over $\TLS$ as $\len(a) \defsym n$ where $\tolst(a) = \lseq{a}$.
The \emph{sequence width} $\width$ (or \emph{width} for short) 
of an element $a \in \TLS$ is given recursively by
\begin{equation*}
  \width(a) \defsym
  \begin{cases}
    1 & \text{if $a$ is a variable, }\\
    \max \{1,\width(a_1),\dots,\width(a_n)\} & \text{if $a = f(\seq{a})$ with $f \in \FSn$, and}\\
    \sum_{i=1}^n \width(a_i)
    & \text{if $a = \lseq{a}$}
    \tpkt
  \end{cases}
\end{equation*}
In the following we tacitly employ $\len(a) \leqslant \width(a)$ and $\width(a \append b) = \width(a) + \width(b)$ for all $a,b \in \TLS$.
We define the \emph{norm} of $t \in \TERMS$ in correspondence to the depth of $t$, but 
disregard normal argument positions.
\begin{equation*}
  \norm{t} \defsym
  \begin{cases}
    1 & \text{$t$ is a variable,} \\
    1+ \max\{0\} \cup \{\norm{t_{j}} \mid j = k+1,\dots,k+l\} & \text{$t=f(\pseq[k][l]{t})$}.
  \end{cases}
  \tpkt
\end{equation*}
Note that since all argument positions of constructors are safe,\label{d:normonval}
the norm $\norm{\cdot}$ and depth $\depth(\cdot)$ coincide on values.

Predicative interpretations are given by two mappings $\ints$ and $\intn$:
the interpretation $\ints$ is applied on safe arguments and removes normal forms; 
the interpretation $\intn$ is applied to normal arguments and additionally encodes
the norm of the given term as tally sequence. The latter allows us to  
track the maximal depth of normal forms erased by $\ints$.
Let $\theconst \not \in \FSn$ be a fresh constant.
To encode natural numbers $n\in \N$, 
define its \emph{tally sequence representation} $\natToSeq{n}$ 
as the sequence containing $n$ occurrences of this fresh constant:
$\natToSeq{0} = \nil$ and $\natToSeq{n+1} = \theconst \append \natToSeq{n}$.

\begin{definition}\label{d:pi}
A \emph{predicative interpretation} for a TRS $\RS$
is a pair $(\ints_\RS,\intn_\RS)$ of mappings $\ofdom{\ints_\RS,\intn_\RS}{\TERMS \to \TAL^{\!\ast}\!(\FS \cup \{\theconst\})}$
defined as follows:
\begin{align*}
  \ints_\RS(t) & \defsym
  \begin{cases}
    \nil & \text{ if $t \in \NF(\RS)$,} \\
    \lst{\fn(\intn_\RS(t_1), \dots, \intn_\RS(t_k))} \append \ints(t_{k+1}) \append \cdots \append \ints(t_{k+l}) & \text{ otherwise where ($\star$),}
  \end{cases}\\
  \intn_\RS(t) & \defsym \ints_\RS(t) \append \NM{t} \tpkt
\end{align*}
Here ($\star$) stands for $t = f(\pseq{t})$.

As the rewrite system $\RS$ is usually clear from the context, we drop the references 
to $\RS$ when unambiguous.
\end{definition}

Below we introduce the order $\gpopv[][]$ on sequences $\GTLS$. 
In the next section we then embed innermost $\RS$-steps into this order, 
and use $\gpopv[][]$ to estimate the length of reductions accordingly.
For basic terms $s = f(\pseq{u})$ we obtain
$$
\ints(s) = \lst{\fn(\intn(u_1),\dots,\intn(u_k))} \append \ints(u_{k+1}) \append \cdots \append \ints(u_{k+l}) 
= \lst{\fn(\natToSeq{\depth(u_1)},\dots,\natToSeq{\depth(u_k)})}
\tpkt
$$
Hence the obtained bound depends on depths of normal arguments only. 
To get the reader prepared for the definition of $\gpopv[][]$, 
we exemplify Definition~\ref{d:pi} on a predicative recursive TRS.

\begin{example}\label{ex:pint}
  Consider following TRS $\RS_f$, where we suppose that besides $\m{f}$, 
  also $\m{g}$ and $\m{h}$ are defined symbols:
  \begin{align*}
    \rlabel{RSf:1} \m{f}(\sn{0}{y}) & \to y 
    & \rlabel{RSf:2} \m{f}(\sn{\ms(x)}{y}) & \to \m{g}(\sn{\m{h}(\sn{x}{})}{\m{f}(\sn{x}{y})})
  \end{align*}
  Consider a substitution $\ofdom{\sigma}{\VS \to \NF}$.
  Using that $\ints(v) = \nil$ and $\intn(v) = \natToSeq{\depth(v)}$ for all normal forms $v$, 
  the embedding $\ints(l\sigma) \gpopv[][] \ints(r\sigma)$ of root steps ($l \to r \in \RS_f$)
  results in the following order constraints.
  \begin{align*}
    \lst{\fsn{\m{f}}(\natToSeq{1})} & \gpopv[][] \nil && \text{from rule \rlbl{1}}\\
    \mparbox[r]{45mm}{\lst{\fsn{\m{f}}(\natToSeq{\depth(x\sigma) + 1})}} & \gpopv[][] \mparbox[l]{65mm}{\lst{\fsn{\m{g}}(\intn(\m{h}(\sn{x\sigma}{})))~\fsn{\m{f}}(\natToSeq{\depth(x\sigma)})}}  && \text{from rule \rlbl{2},}
  \end{align*}
  where $\intn(\m{h}(\sn{x\sigma}{})) = \lst{\fsn{\m{h}}(\intn(x\sigma))}\append \NM{\m{h}(\sn{x\sigma}{})}  = \lst{\fsn{\m{h}}(\natToSeq{\depth(x\sigma)})~\theconst}$.
  Consider now a step 
  $$
  s = f(s_1,\dots,s_i,\dots,s_{k+l}) \irew[\RS_f] f(s_1,\dots,t_i,\dots,s_{k+l}) = t \tkom
  $$ 
  below the root, where $s_i \irew[\RS_f] t_i$. Depending on the rewrite position $i$, which is 
  either normal ($i \in \{1,\dots,k\}$) or safe ($i \in \{k+1,\dots,k+l\}$), 
  the predicative embedding introduces one of the following two constraints:
  \begin{align*}
    \ints(s) 
    & = \lst{\fn(\intn(s_1), \dots, \intn(s_i), \dots, \intn(s_k))} \append \ints(s_{k+1}) \append \cdots \append \ints(s_{k+l}) \tag{a}\\
    & \gpopv[][] \lst{\fn(\intn(s_1), \dots, \intn(t_i), \dots, \intn(s_k))} \append \ints(s_{k+1}) \append \cdots \append \ints(s_{k+l}) = \intn(t), \text{ or} \\[1mm]
    \ints(s)
    & = \lst{\fn(\intn(s_1), \dots, \intn(s_k))} \append \ints(s_{k+1}) \append \cdots \append \ints(s_{i}) \append \cdots \append \ints(s_{k+l}) \tag{b} \\
    & \gpopv[][] \lst{\fn(\intn(s_1), \dots, \intn(s_k))} \append \ints(s_{k+1}) \append \cdots \append \ints(t_{i}) \append \cdots \append \ints(s_{k+l}) = \ints(t) \tpkt
  \end{align*}
  To be able to deal with steps below normal argument positions as in $(a)$, we also 
  orient images of $\intn$. This results additionally in following constraints:
  \begin{align*}
    \ints(\m{f}(\sn{0}{y\sigma})) \append \natToSeq{\depth(y\sigma) + 1} 
    & \gpopv[][] \ints(y\sigma) \append \natToSeq{\depth(y\sigma)}
    && \text{from rule \rlbl{1}}\\
    \mparbox[r]{45mm}{\ints(\m{f}(\sn{\ms(x\sigma)}{y\sigma}))\append \natToSeq{\depth(y\sigma) + 1}} 
    & \gpopv[][] \mparbox[l]{65mm}{\ints(\m{g}(\sn{\m{h}(\sn{x\sigma}{})}{\m{f}(\sn{x\sigma}{y\sigma})})) \append \natToSeq{\depth(y\sigma) + 2}}
    && \text{from rule \rlbl{2}}.
  \end{align*}
\end{example}


The \emph{polynomial path order on sequences} (\emph{\POP}~for short), 
denoted by $\gpopv[][]$, constitutes a generalisation of the \emph{path order for $\FP$} 
as put forward in~\cite{AM05}.
Whereas we previously used the notion of safe mapping to 
dictate predicative recursion on compatible TRSs, 
the order on sequences relies on the explicit separation of safe 
arguments as given by predicative interpretations.
Following Buchholz~\cite{B95}, it suffices to present \emph{finite approximations}
$\gpopv[k][l]$ of~$\gpopv[][]$.
The parameters $k \in \N$ and $l \in \N$ are used to controls the width and depth
of right-hand sides.
Fix a precedence $\qp$ on the normalised signature $\FSn$.
We extend term equivalence with respect to $\qp$ to sequences by 
disregarding the order on elements.
\begin{definition}\label{d:eqi}
  We define $a \eqi b$ if $a = b$ or there exists a permutation $\pi$
  such that $a_i \eqi b_{\pi(i)}$ for all $i = 1,\dots,n$, 
  where either 
  (i) $a = \lseq{a}$, $b = \lseq{b}$, or 
  (ii) $a = f(\seq{a})$, $b = g(\seq{b})$ and $f \ep g$.
\end{definition}

In correspondence to $\gpop$, the order $\gpopv[k][l]$ 
is based on an auxiliary order $\gppv[k][l]$.

\begin{definition}\label{d:gppv} 
  Let $k,l \geqslant 1$.
  We define $\gppv[k][l]$ with respect to the precedence $\qp$ inductively as follows:
  \begin{enumerate}[labelsep=*,leftmargin=*]
  \item\label{d:gppv:st}
    $f(\seq{a}) \gppv[k][l] b$ if $a_i \geqppv[k][l] b$  for some $i \in \{1,\dots,n\}$;
  \item\label{d:gppv:ia}
    $f(\seq{a}) \gppv[k][l] g(\seq[m]{b})$ if $f \sp g$ and the following conditions are satisfied:
    \begin{itemize}
    \item $f(\seq{a}) \gppv[k][l-1] b_j$ for all $j = 1,\dots,m$;
    \item $m \leqslant k$;
    \end{itemize}
  \item\label{d:gppv:ialst}
    $f(\seq{a}) \gppv[k][l] \lseq[m]{b}$ if the following conditions are satisfied:
    \begin{itemize}
    \item $f(\seq{a}) \gppv[k][l-1] b_j$ for all $j = 1,\dots,m$;
    \item $m \leqslant \width(f(\seq{a})) + k$;
    \end{itemize}
  \item\label{d:gppv:ms} 
    $\lseq{a} \gppv[k][l] \lseq[m]{b}$ if the following conditions are satisfied:
    \begin{itemize}
    \item $\lseq[m]{b} \eqi c_1 \append \cdots \append c_n$;
    \item $a_i \geqppv[k][l] c_i$ for all $i = 1, \dots, n$;
    \item $a_{i_0} \gppv[k][l] c_{i_0}$ for at least one $i_0 \in \{1, \dots, n\}$;
    \item $m \leqslant \width(\lseq{a}) + k$;
    \end{itemize}
  \end{enumerate}
  Here ${\geqppv[k][l]}$ denotes ${\gppv[k][l]} \cup {\eqi}$.  We write
  $\gppv[k]$ to abbreviate $\gppv[k][k]$.
\end{definition}

We stress that the definition
lacks a case $f(\seq{a}) \gppv[k][l] g(\seq[m]{b})$ where $f \ep g$.
Still the order is sufficient to account for 
terms oriented by the auxiliary order $\gsq$.

\begin{example}[Example~\ref{ex:pint} continued]\label{ex:gppv}
  Reconsider rule $\rref{RSf:2}$ from the TRS $\RS_f$ given in Example~\ref{ex:pint}, where
  in particular $\m{f}(\sn{\ms(x)}{y}) \gsq \m{h}(\sn{x}{})$.
  We show below $\fsn{\m{f}}(\natToSeq{\depth(x\sigma) + 1}) 
  \gppv[1][4] \intn(\m{h}(\sn{\ms(x\sigma)}{}))$ 
  for all substitutions $\ofdom{\sigma}{\VS \to \NF}$. 
  First recall that by the overloading of concatenation, we can write
  \begin{align*}
    \natToSeq{n} & = \lst{\theconst \cdots \theconst} = \theconst \append \cdots \append \theconst \append \nil \append \cdots \append \nil 
  \end{align*}
  with $n$ occurrences of $\theconst$, appending $m$-times the empty sequence $\nil$ 
  for all $n,m \in \N$. 
  Using that $\theconst \eqi \theconst$ and $\theconst \cppv{ialst}[k][l-1] \nil$ 
  for $l \geqslant 2$, 
  we can thus prove $\natToSeq{n+m} \cppv{ms}[k][l] \natToSeq{n}$ whenever $m \geqslant 1$.
  Moreover we have
  \begin{align*}
    \rlbl{1}: && \natToSeq{\depth(x\sigma) + 1} 
    & \cppv{ms}[1][2] \natToSeq{\depth(x\sigma)}  
    && \text{as $\depth(x\sigma) + 1 > \depth(x\sigma)$} 
    \\
    \rlbl{2}: && \fsn{\m{f}}(\natToSeq{\depth(x\sigma) + 1}) 
    & \cppv{ia}[1][3] \fsn{\m{h}}(\natToSeq{\depth(x\sigma)}) 
    && \text{using $\fsn{\m{f}} \sp \fsn{\m{h}}$ and \rlbl{1}} 
    \\
    \rlbl{3}: && \fsn{\m{f}}(\natToSeq{\depth(x\sigma) + 1}) 
    & \cppv{ialst}[1][4] \lst{\fsn{\m{h}}(\natToSeq{\depth(x\sigma)})~\theconst} 
    && \text{by \rlbl{2} and $\fsn{\m{f}}(\dots) \cppv{ia}[1][3] \theconst$} \\ 
    &&& = \intn(\m{h}(\sn{x\sigma}{})) 
    \tpkt
  \end{align*}
\end{example}

\noindent We arrive at the definition of the full order $\gpopv[k][l]$.
\begin{definition}\label{d:gpopv} 
  Let $k,l \geqslant 1$.
  We define $\gpopv[k][l]$ inductively as the least extension of $\gppv[k][l]$ such that:
  \begin{enumerate}[labelsep=*,leftmargin=*]
  \item\label{d:gpopv:st}
    $f(\seq{a}) \gpopv[k][l] b$ if $a_i \geqpopv[k][l] b$ for some $i \in \{1,\dots,n\}$;
  \item\label{d:gpopv:ep}
    $f(\seq{a}) \gpopv[k][l] g(\seq[m]{b})$ if $f \ep g$ 
    and following conditions are satisfied: 
    \begin{itemize}
    \item $\mset{\seq{a}}~\mextension{\gpopv[k][l]}~\mset{\seq[m]{b}}$;
    \item $m \leqslant k$;
    \end{itemize}
  \item\label{d:gpopv:ialst} 
    $f(\seq{a}) \gpopv[k][l] \lseq[m]{b}$ 
    and following conditions are satisfied: 
    \begin{itemize}
    \item $f(\seq{a}) \gpopv[k][l-1] b_{j_0}$ for at most one $j_0 \in\{1,\dots,m\}$;
    \item $f(\seq{a}) \gppv[k][l-1] b_j$ for all $j \neq j_0$;
    \item $m \leqslant \width(f(\seq{a})) + k$;
    \end{itemize}
  \item\label{d:gpopv:ms} 
    $\lseq{a} \gpopv[k][l]  \lseq[m]{b}$
    and following conditions are satisfied: 
    \begin{itemize}
    \item $\lseq[m]{b} \eqi c_1 \append \cdots \append c_n$;
    \item $a_i \geqpopv[k][l] c_i$ for all $i = 1,\dots, n$; 
    \item $a_{i_0} \gpopv[k][l] c_{i_0}$ for at least one $i_0 \in \{1,\dots, n\}$; 
    \item $m \leqslant \width(\lseq{a}) + k$;
    \end{itemize}
  \end{enumerate}
  Here ${\geqpopv[k][l]}$ denotes ${\gpopv[k][l]} \cup {\eqi}$. 
  We write $\gpopv[k]$ to abbreviate $\gpopv[k][k]$.
\end{definition}

The polynomial path order on sequences forms a restriction of the recursive path order with 
multiset status, where the variadic symbol $\listsym$ is implicitly 
ranked lowest in the precedence. As a consequence the order
it is well-founded~\cite{Ferreira95}.
The use of the auxiliary order in $\cpopv{ialst}[k][l]$ accounts 
for our restriction that predicative recursive TRSs admit at most one 
recursive call per right-hand side.
Observe $t \cpopv{ialst} \nil$ for all terms $t \not \in \VS$, 
consequently $\lseq{t} \cpopv{ms} \nil$  if at least one term $t_i$ is ground.

\begin{example}[Example~\ref{ex:gppv} continued]\label{ex:gpopv} 
  We continue with the orientation of root steps from
  the TRS $\RS_f$ depicted in Example~\ref{ex:pint}
  for substitutions $\ofdom{\sigma}{\VS \to \NF}$.
  Consider the more involved case $\m{f}(\sn{\ms(x\sigma)}{y\sigma}) \irew[\RS_f] \m{g}(\sn{\m{h}(\sn{x\sigma}{})}{\m{f}(\sn{x\sigma}{y\sigma})})$ due to rule \rref{RSf:2}.
  Note that in the orientation below we
  use $\cpopv{ep}$ to orient the recursive call (proof step \rlbl{5}), 
  and $\cppv{ia}$ for the remaining elements (proof step \rlbl{6}).
  \begin{align*}
    \rlbl{4}: && \natToSeq{\depth(x\sigma) + 1} 
    & \cpopv{ms}[1][2] \natToSeq{\depth(x\sigma)}  
    && \text{as in Example~\ref{ex:gppv}} 
    \\
    \rlbl{5}: && \fsn{\m{f}}(\natToSeq{\depth(x\sigma) + 1})
    & \cpopv{ep}[1][3] \fsn{\m{f}}(\natToSeq{\depth(x\sigma)})
    && \text{using \rlbl{4}} 
    \\
    \rlbl{6}: && \fsn{\m{f}}(\natToSeq{\depth(x\sigma) + 1})
    & \cppv{ia}[1][5] \fsn{\m{g}}(\intn(\m{h}(\sn{x\sigma}{})))
    && \text{using $\fsn{\m{f}} \sp \fsn{\m{g}}$ and \rlbl{3}} 
    \\
    \rlbl{7}: && \fsn{\m{f}}(\natToSeq{\depth(x\sigma) + 1})
    & \cpopv{ialst}[1][6] \lst{\fsn{\m{g}}(\intn(\m{h}(\sn{x\sigma}{})))~\fsn{\m{f}}(\natToSeq{\depth(x\sigma)})}
    && \text{using \rlbl{5} and \rlbl{6}}
    \\
    &&& = \ints(\m{g}(\sn{\m{h}(\sn{x\sigma}{})}{\m{f}(\sn{x\sigma}{y\sigma})}))
    \\
    \rlbl{8}: && \ints(\m{f}(\sn{\ms(x\sigma)}{y\sigma})) & = 
    \lst{\fsn{\m{f}}(\natToSeq{\depth(x\sigma) + 1})} \\
    &&& \cpopv{ialst}[1][6] \ints(\m{g}(\sn{\m{h}(\sn{x\sigma}{})}{\m{f}(\sn{x\sigma}{y\sigma})}))
    && \text{using \rlbl{7}}
    \\
    \rlbl{9}: && \fsn{\m{f}}(\natToSeq{\depth(x\sigma) + 1})
    & \cpopv{ialst}[2][6] \lst{\fsn{\m{g}}(\intn(\m{h}(\sn{x\sigma}{})))~\fsn{\m{f}}(\natToSeq{\depth(x\sigma)})~\theconst}
    && \text{by \rlbl{5}, \rlbl{6} and $\fsn{\m{f}}(\dots) \gpopv[1][1] \theconst$}
    \\
    \rlbl{10}: && \intn(\m{f}(\sn{\ms(x\sigma)}{y\sigma})) & = 
    \lst{\fsn{\m{f}}(\natToSeq{\depth(x\sigma) + 1})}\append \natToSeq{\depth(y\sigma) + 1} \\
    &&& \cpopv{ialst}[2][6] \ints(\m{g}(\sn{\m{h}(\sn{x\sigma}{})}{\m{f}(\sn{x\sigma}{y\sigma})})) \append \natToSeq{\depth(y\sigma) + 2}
    && \text{using \rlbl{9} and $\theconst \eqi \theconst$}
    \\
    &&& = \intn(\m{g}(\sn{\m{h}(\sn{x\sigma}{})}{\m{f}(\sn{x\sigma}{y\sigma})})) \tpkt
  \end{align*}
  For the last orientation we employ that the width of the left-hand side 
  is at least $\depth(y\sigma) + 2$, and the length of the right hand side is 
  $\depth(y\sigma) + 4$, as required we have 
  $$
  \len(\intn(\m{g}(\sn{\m{h}(\sn{x\sigma}{})}{\m{f}(\sn{x\sigma}{y\sigma})})))
  \leqslant \width(\intn(\m{f}(\sn{\ms(x\sigma)}{y\sigma}))) + 2 \tpkt
  $$
\end{example}

Observe that as in the above example, 
the parameter $l$ in $\gpopv[k][l]$ controls 
the depth of the proof tree of $a \gpopv[k][l] b$. 
Since leafs of such proof trees hold either due to case $\cpopv{st}$
or the absence of arguments in the right-hand side, it follows that 
the depth of $b$ is bounded linearly 
in $l$ and the depth of $a$.
From the example it should also be clear how the parameter $k$ 
controls the length of right-hand sides, 
compare steps~\rlbl{9} and~\rlbl{10} where we had to increase 
the parameter $k$.

In the example we obtained that the predicative embedding of 
root steps $l\sigma \irew[\RS_f] r\sigma$ 
of predicative recursive TRS $\RS_f$ is possible for $k = 2$, 
independent on the considered substitution $\sigma$. 
The next lemma clarifies that such a global $k$ can always be found, 
and depends on the right-hand sides only. 

\begin{lemma}\label{l:int:len}
  Let $s = f(\pseq{s}) \in \Tb$, $t \in \TERMS$, and $\sigma$ be a normalising substitution. Then
  \begin{enumerate}[labelsep=*,leftmargin=*]
  \item\label{l:int:len:S} $\len(\ints(t\sigma)) \leqslant \size{t}$; and
  \item\label{l:int:len:gsq} if $s \gsq t$  then $\len(\intn(t\sigma)) \leqslant \max \{ \norm{s_1\sigma}, \dots, \norm{s_k\sigma} \} + 2\cdot\size{t}$; and
  \item\label{l:int:len:gpop} if $s \gpop t$ then $\len(\intn(t\sigma)) \leqslant \max \{ \norm{s_1\sigma}, \dots, \norm{s_{k+l}\sigma}\} + 2\cdot\size{t}$.
  \end{enumerate}
\end{lemma}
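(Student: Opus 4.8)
The three items will be handled in order. For~\ref{l:int:len:S} the plan is a routine structural induction on $t$: if $t$ is a variable then $t\sigma \in \NF$ (as $\sigma$ is normalising), so $\ints(t\sigma) = \nil$ and $\len(\ints(t\sigma)) = 0 \leqslant 1 = \size{t}$; if $t = h(\pseq[k][l]{t})$ with $t\sigma \notin \NF$, then unfolding the definition of $\ints$ gives $\len(\ints(t\sigma)) = 1 + \sum_{j=k+1}^{k+l}\len(\ints(t_j\sigma))$, which by the induction hypothesis is at most $1 + \sum_{j=1}^{k+l}\size{t_j} = \size{t}$; and the case $t\sigma \in \NF$ is trivial.

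For~\ref{l:int:len:gsq} and~\ref{l:int:len:gpop} the crux is a bound on $\norm{t\sigma}$, which rests on the following structural observation: since $s \in \Tb$, every argument $s_i$ is a value, and a value contains no defined symbols, so all of its positions are safe. The plan is to first record the auxiliary fact that \emph{whenever $v$ is a value and $v \geqsq t$ or $v \geqpop t$, then $t$ is again a value and $\norm{t\theta} \leqslant \norm{v\theta}$ for every substitution $\theta$}. Indeed, writing $v = c(\seq{v})$ with $c$ a constructor, the non-subterm clauses of Definitions~\ref{d:gsq} and~\ref{d:gpop} — namely \cpop{ia}, \cpop{ep}, and the recursion clause of $\gsq$ — cannot apply, since each of them requires a defined root symbol; hence $v \gsq t$ resp.\ $v \gpop t$ can hold only by the subterm clause, descending into an argument of $v$. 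Iterating, $t \eqis v|_p$ for some position $p$ whose path inside $v$ traverses only constructor nodes; since $\norm{\cdot}$ weakly decreases when one passes to a safe argument, and $\norm{\cdot}$ is $\eqis$-invariant (an easy induction, using that $\eqis$ respects the safe mapping and is stable under substitution), we get $\norm{t\theta} = \norm{v\theta|_p} \leqslant \norm{v\theta}$; that $t$ is a value follows since $\eqis$ preserves values.

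With the auxiliary fact at hand, I would prove by induction on $\size{t}$ the sharper bounds $\norm{t\sigma} \leqslant \max\{\norm{s_1\sigma},\dots,\norm{s_k\sigma}\} + \size{t}$ if $s \gsq t$, and $\norm{t\sigma} \leqslant \max\{\norm{s_1\sigma},\dots,\norm{s_{k+l}\sigma}\} + \size{t}$ if $s \gpop t$. If the relation holds by the subterm clause (\csq{st}, resp.\ \cpop{st}), the witnessing argument $s_i$ is a value, so the auxiliary fact gives $\norm{t\sigma} \leqslant \norm{s_i\sigma}$ directly — and for $\gsq$ one moreover has $i \leqslant k$, which is part of \csq{st} because $f \in \DS$. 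If $s \gpop t$ holds by \cpop{ep}, then by the weak multiset extension every safe argument of $t$ is, modulo $\eqis$, either equal to or dominated by a safe argument of $s$ — itself a value — so the auxiliary fact bounds each $\norm{t_j\sigma}$ by $\max\{\norm{s_1\sigma},\dots,\norm{s_{k+l}\sigma}\}$. In the remaining cases $t = g(\pseq[m][n]{t})$, one has $\norm{t\sigma} = 1 + \max(\{0\} \cup \{\norm{t_j\sigma} \mid j \text{ safe}\})$, the defining clause supplies $s \gsq t_j$ resp.\ $s \gpop t_j$ for the safe arguments $t_j$, the induction hypothesis applies to each, and $1 + \max_j \size{t_j} \leqslant \size{t}$ closes the estimate. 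Finally, since $\NM{t\sigma} = \natToSeq{\norm{t\sigma}}$ has length $\norm{t\sigma}$, we have $\len(\intn(t\sigma)) = \len(\ints(t\sigma)) + \norm{t\sigma}$, and combining this with~\ref{l:int:len:S} and the two sharper bounds yields~\ref{l:int:len:gsq} and~\ref{l:int:len:gpop} with the stated factor~$2$. I expect the single delicate point to be exactly this interaction: $\norm{\cdot}$ is \emph{not} monotone under the subterm relation in general (it ignores normal positions), and the argument goes through only because the arguments of a basic term are values, which possess no normal positions at all.
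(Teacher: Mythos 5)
Your proof is correct and follows the route the paper intends: item~(1) by structural induction on $t$, and items~(2)--(3) by splitting $\len(\intn(t\sigma)) = \len(\ints(t\sigma)) + \norm{t\sigma}$ and bounding $\norm{t\sigma}$ by induction on the order, using that on values both $\gsq$ and $\gpop$ collapse to the safe-subterm relation modulo $\eqis$ --- your auxiliary fact is essentially Lemma~\ref{l:gpop:val}, extended to $\gsq$ and augmented with the norm estimate, so you could also simply have invoked that lemma. You moreover correctly isolate the one delicate point, namely that $\norm{\cdot}$ is monotone only along safe positions and that this suffices precisely because the arguments of the basic term $s$ are values and hence have no normal positions.
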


As a consequence of the above lemma we obtain: if $l\sigma \irew r\sigma$ is a root step of 
a predicative TRS $\RS$, then 
$\len(\intq(r\sigma)) \leqslant \width(\intq(l\sigma)) + 2 \cdot \size{r}$ 
for $\intq \in \{\ints,\intn\}$. 
In the predicative embedding we instantiate 
$k$ by twice the maximum size of right-hand sides of $\RS$.
The side-conditions imposed on $\gpopv[k][l]$ 
allow us to estimate the length of right-hand sides based on 
the width of left-hand sides and the parameter $k$.
This and other frequently used properties are collected in the next lemma,
whose proof is not difficult.

\begin{lemma}\label{l:approx}
  The following properties hold for all $k \geqslant 1$ and $a,b,c_1,c_2 \in \TLS$.
  \begin{enumerate}[labelsep=*,leftmargin=*]
  \item\label{l:approx:kmon} ${\gppv[l]} \subseteq {\gpopv[l]} \subseteq {\gpopv[k]}$ for all $l \leqslant k$;
  \item\label{l:approx:modeqi} ${\eqi} \cdot {\gpopv[k]} \cdot {\eqi} \subseteq {\gpopv[k]}$;
  \item\label{l:approx:bound} $a \gpopv[k] b$ implies $\len(b) \leqslant \width(a) + k$;
  \item\label{l:approx:subseq} $a \gpopv[k] b$ implies ${c_1 \append a \append c_2} \gpopv[k] {c_1 \append b \append c_2}$.
  \end{enumerate}
\end{lemma}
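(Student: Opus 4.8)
The plan is to treat the four items essentially independently, since each follows by a routine induction that exploits the inductive shape of Definitions~\ref{d:gppv} and~\ref{d:gpopv}. For item~\eqref{l:approx:kmon}, the inclusion ${\gppv[k][l]} \subseteq {\gpopv[k][l]}$ is immediate, since $\gpopv[k][l]$ is \emph{by definition} the least extension of $\gppv[k][l]$ closed under the additional clauses. For the monotonicity in the parameters, I would prove simultaneously by induction on the derivation of $a \gppv[k][l] b$ (respectively $a \gpopv[k][l] b$) that if $l \leqslant l'$ and $k \leqslant k'$ then $a \gppv[k'][l'] b$ (respectively $a \gpopv[k'][l'] b$). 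Every case of the definitions is preserved: the side conditions are of the form $m \leqslant k$, $m \leqslant \width(\cdot) + k$, or recursive calls with parameter $l-1$, all of which are only relaxed when the parameters grow; the recursive premises are discharged by the induction hypothesis. Specialising to $l = l'$ gives the displayed chain. The only mild subtlety is that the depth parameter decreases along recursive calls, so one must phrase the induction on the height of the derivation tree rather than on $l$ itself.

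For item~\eqref{l:approx:modeqi}, I would again argue by induction on the derivation of $a \gpopv[k] b$, showing that $a' \eqi a \gpopv[k] b \eqi b'$ implies $a' \gpopv[k] b'$. The key observations are that $\eqi$ (Definition~\ref{d:eqi}) only permutes arguments of a sequence or permutes arguments under $\ep$-equivalent function symbols, and that all clauses of Definitions~\ref{d:gppv} and~\ref{d:gpopv} already quantify over arguments in a permutation-insensitive way (``for some $i$'', ``for all $j$'', multiset comparisons, and the condition $\lseq[m]{b} \eqi c_1 \append \cdots \append c_n$ which is itself stated up to $\eqi$). Transitivity of $\eqi$ and the fact that $\width$, $\len$ are invariant under $\eqi$ handle the side conditions. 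Since $\eqi$ already appears inside $\geqppv[k][l]$ and $\geqpopv[k][l]$, the premises absorb the equivalences directly; the case $a \gpopv[k] b$ via $\cpopv{st}$, for instance, uses that $a' \eqi a = f(\seq{a})$ forces $a' = f(\seq{a'})$ with $a_i \eqi a'_{\pi(i)}$, and then $a'_{\pi(i)} \eqi a_i \geqpopv[k] b \eqi b'$ gives $a'_{\pi(i)} \geqpopv[k] b'$ by the induction hypothesis.

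For item~\eqref{l:approx:bound}, induction on the derivation of $a \gpopv[k] b$: in case $\cpopv{st}$ we have $a = f(\seq{a})$ and $a_i \geqpopv[k] b$, so either $a_i \eqi b$ giving $\len(b) = \len(a_i) \leqslant \width(a_i) \leqslant \width(a) \leqslant \width(a) + k$, or $a_i \gpopv[k] b$ and the induction hypothesis gives $\len(b) \leqslant \width(a_i) + k \leqslant \width(a) + k$; in case $\cpopv{ep}$, $b = g(\seq[m]{b})$ so $\len(b) = m \leqslant k \leqslant \width(a) + k$; in cases $\cpopv{ialst}$ and $\cpopv{ms}$ the side condition $m \leqslant \width(a) + k$ is exactly the claim. (When $a \gpopv[k] b$ holds only via $\gppv[k]$, the same case analysis on Definition~\ref{d:gppv} applies verbatim.) Finally item~\eqref{l:approx:subseq} follows by a single application of $\cpopv{ms}$: writing $\tolst(c_1) = \lseq[p]{c}$, $\tolst(a) = \lseq[q]{a'}$, $\tolst(c_2) = \lseq[r]{c''}$ and comparing $c_1 \append a \append c_2$ with $c_1 \append b \append c_2$ elementwise, the $c$-entries match via $\eqi$, the $a'$-entries against the corresponding decomposition of $b$ witness $a \gpopv[k] b$ at one position (using $\cpopv{ms}$ for sequence $a$, or, if $a \in \TS$, directly since then $\tolst(a) = \lst{a}$), and the width bound $\len(c_1 \append b \append c_2) \leqslant \len(c_1) + (\width(a)+k) + \len(c_2) \leqslant \width(c_1 \append a \append c_2) + k$ follows from item~\eqref{l:approx:bound} together with $\len \leqslant \width$ and additivity of $\width$ over $\append$. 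I expect the only real bookkeeping obstacle to be the $\cpopv{ms}$ clause in items~\eqref{l:approx:modeqi} and~\eqref{l:approx:subseq}, where one must match up the decomposition $\lseq[m]{b} \eqi c_1 \append \cdots \append c_n$ after permuting or re-bracketing; this is purely combinatorial and uses associativity of $\append$ and the permutation-closure built into $\eqi$.
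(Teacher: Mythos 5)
Your proposal is correct; the paper in fact omits the proof entirely (it is dismissed as ``not difficult''), and the routine inductions on the derivations of $\gppv[k][l]$ and $\gpopv[k][l]$ that you describe are exactly the intended argument. One cosmetic slip: in item~\eqref{l:approx:bound}, for $b = g(\seq[m]{b})$ a term one has $\len(b) = 1$ (since $\tolst(b) = \lst{b}$), not $\len(b) = m$; the bound $\len(b) \leqslant \width(a) + k$ holds all the same, so nothing breaks.
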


Following~\cite{AM05} we define a function $\Slow[k]$ that measures the 
$\gpopv[k]$-descending lengths on sequences. To simplify matters, 
we restrict the definition of $\Slow[k]$ to ground sequences.
As images of predicative interpretations are always ground, this suffices for our purposes.
\begin{definition}
We define $\ofdom{\Slow[k]}{\GTLS \to \N}$
as
\begin{equation*}
  \Slow[k](a) \defsym 1 + \max
 \{ \Slow[k](b) \mid b \in \GTLS \text{ and } a \gpopv[k] b
 \}\tpkt
\end{equation*}
\end{definition}

Note that due to Lemma~\eref{l:approx}{modeqi}, $\Slow[k](a) = \Slow[k](b)$
whenever $a \eqi b$. The next lemma confirms that sequences act purely as containers.

\begin{lemma}\label{l:slowsum} 
  For $\lseq{a} \in \GLS$ it holds that $\Slow[k](\lseq{a}) = \sum_{i=1}^n \Slow[k](a_i)$.
\end{lemma}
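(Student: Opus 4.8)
The plan is to establish the identity by a single well-founded induction, after fixing two points. First, by Definition~\ref{d:sequencedterms} the elements of a ground sequence lie in $\GTS$, so such a sequence is literally $A=\lst{t_1~\cdots~t_n}$ with ground \emph{terms} $t_1,\dots,t_n$; in particular there is no nesting of sequences, and all objects occurring below stay in $\GLS$. Second, since $\listsym$ is ranked below every symbol of $\FSn$, the only clauses of Definitions~\ref{d:gppv} and~\ref{d:gpopv} that can fire on a step $A\gpopv[k]b$ are $\cppv{ms}$ and $\cpopv{ms}$; hence $b$ is again a sequence and, up to $\eqi$ (which leaves $\Slow[k]$ unchanged, Lemma~\eref{l:approx}{modeqi}), has the form $c_1\append\dots\append c_n$ with $t_i\geqpopv[k]c_i$ for all $i$ and $t_{i_0}\gpopv[k]c_{i_0}$ for some $i_0$, each $c_i\in\GTLS$. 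I also note that every ground term $\fn(\seq{e})$ has a $\gpopv[k]$-successor (one of its arguments $e_j$ via $\cpopv{st}$ if $\seq{e}$ is non-empty, and $\nil$ otherwise), so $\Slow[k](t)\geqslant1$ for ground terms $t$.

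The induction runs along the transitive closure $\succ^{+}$ of ${\succ}\defsym{\gpopv[k]}\cup{\supertermstrict}$ on $\GTLS$; this is well-founded because $\gpopv[k]$ and the strict subterm relation are both contained in the recursive path order of which $\gpopv[k]$ is a restriction. Fix $A=\lst{t_1~\cdots~t_n}\in\GLS$. If $n=0$ then $A=\nil$, which is $\succ^{+}$-minimal and satisfies $\Slow[k](\nil)=0=\sum_{\varnothing}$; so let $n\geqslant1$. \emph{Upper bound.} For a successor $b\eqi c_1\append\dots\append c_n$ as above, every sequence-valued $c_i$ has $t_i\gpopv[k]c_i$ (a term is never $\eqi$ to a sequence), so $A\supertermstrict t_i\gpopv[k]c_i$ places $c_i$ strictly $\succ^{+}$-below $A$, as does $b$; hence the induction hypothesis applies to $b$ and to each such $c_i$. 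Writing $b$ as a sequence of terms, the hypothesis gives $\Slow[k](b)$ as the sum of $\Slow[k]$ over the terms of $b$, and regrouping that sum by the $c_i$ the terms come from (applying the hypothesis once more inside each sequence-valued $c_i$) yields $\Slow[k](b)=\sum_{i=1}^n\Slow[k](c_i)$. Since $\Slow[k](c_i)\leqslant\Slow[k](t_i)$ throughout — with equality only when $c_i\eqi t_i$, and strict at $i_0$ because a $\gpopv[k]$-successor of $t_i$ has strictly smaller $\Slow[k]$-value — we get $\Slow[k](b)\leqslant\bigl(\sum_i\Slow[k](t_i)\bigr)-1$; maximising over $b$ and using the definition of $\Slow[k]$ gives $\Slow[k](A)\leqslant\sum_i\Slow[k](t_i)$. \emph{Lower bound.} Pick a $\gpopv[k]$-successor $c_1$ of $t_1$ with $\Slow[k](c_1)=\Slow[k](t_1)-1$ (possible as $\Slow[k](t_1)\geqslant1$); by Lemma~\eref{l:approx}{subseq}, $A\gpopv[k]b$ for $b\defsym c_1\append\lst{t_2~\cdots~t_n}$, and the induction hypothesis applied to $b$ (and inside $c_1$, if it is a sequence) gives $\Slow[k](b)=\Slow[k](c_1)+\sum_{i\geqslant2}\Slow[k](t_i)=\bigl(\sum_i\Slow[k](t_i)\bigr)-1$, so $\Slow[k](A)\geqslant1+\Slow[k](b)=\sum_i\Slow[k](t_i)$.

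The one genuine subtlety — and my reason for inducting along $\succ=\gpopv[k]\cup\supertermstrict$ rather than $\gpopv[k]$ alone — is that a term-component $t_i$ may rewrite not to a single term but to an entire sequence $c_i$, which sits $\gpopv[k]$-below $t_i$ but not below $A$; inserting the subterm step $A\supertermstrict t_i$ supplies the missing link for the induction hypothesis. The rest is routine: the width side-conditions $m\leqslant\width(\cdot)+k$ in $\cpopv{ms}$, $\cppv{ms}$ and the $\cpopv{ialst}$-type clauses need no verification here, since in the upper bound $b$ is given as a genuine successor and in the lower bound the step is produced by Lemma~\eref{l:approx}{subseq}, which already absorbs that condition (it holds because $\len(a)\leqslant\width(a)$), and $\eqi$ may be ignored throughout by Lemma~\eref{l:approx}{modeqi}.
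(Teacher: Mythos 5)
Your proof is correct and follows essentially the same route as the paper's: the lower bound comes from Lemma~\eref{l:approx}{subseq} (the paper concatenates whole maximal descending chains where you take a single optimal step in the first component and recurse), and the upper bound from decomposing a $\cpopv{ms}$-successor as $c_1\append\cdots\append c_n$ and bounding each piece by the corresponding component. The only substantive difference is bookkeeping: the paper inducts on the numeric value $\Slow[k](a)$, noting $\Slow[k](b_i)\leqslant\Slow[k](b)<\Slow[k](a)$ to justify the recursive applications, whereas you induct along ${\gpopv[k]}\cup{\supertermstrict}$ — both devices resolve the same issue you flag, namely that the pieces $c_i$ sit $\gpopv[k]$-below the components $t_i$ rather than below the whole sequence.
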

\begin{proof}
  Let $a = \lseq{a} \in \GLS$.
  We first show $\Slow[k](a) \geqslant \sum_{i=1}^n \Slow[k](a_i)$.
  Let $b,c \in \GTLS$ and consider maximal sequences
  $b \gpopv[k] b_1 \gpopv[k] \cdots \gpopv[k] b_o$ and
  $c \gpopv[k] c_1 \gpopv[k] \cdots \gpopv[k] c_p$.
  Using Lemma~\eref{l:approx}{subseq} repeatedly we get
  $$
  b \append c \gpopv[k] b_1 \append c \gpopv[k] \cdots \gpopv[k] b_o \append c
  \gpopv[k] b_o \append c_1 \gpopv[k]  \cdots \gpopv[k] b_o \append c_p \tkom
  $$
  and thus 
  $\Slow[k](b \append c) \geqslant \Slow[k](b) + \Slow[k](c)$ holds
  for all $b,c \in \GTLS$. We conclude
  $
  \Slow[k](a) =  \Slow[k](a_1 \append \cdots \append a_n) \geqslant \sum_{i=1}^n \Slow[k](a_i)
  $ 
  with a straight forward induction on $n$.

  It remains to verify $\Slow[k](a) \leqslant \sum_{i=1}^n \Slow[k](a_i)$.
  For this we show that $a \gpopv[k] b$ implies $\Slow[k](b) < \sum_{i=1}^n \Slow[k](a_i)$
  by induction on $\Slow[k](a)$.
  Consider the base case $\Slow[k](a) = 0$.
  Since $a$ is ground it follows that $a = \nil$, the claim is trivially satisfied.
  For the inductive step $\Slow[k](a) > 1$, 
  let $a \gpopv[k] b$.
  Since $a$ is a sequence, $a \cpopv{ms} b$.
  Hence $b \eqi b_1 \append \cdots \append b_n$ where $a_i \geqpopv[k] b_i$ 
  and thus $\Slow[k](b_i) \leqslant \Slow[k](a_i)$ for all $i =1,\dots,n$.
  Additionally $a_{i_0} \gpopv[k] b_{i_0}$ and hence $\Slow[k](b_{i_0}) < \Slow[k](a_{i_0})$ 
  for at least one $i_0 \in \{1,\dots,n\}$.
  As in the first half of the proof, one verifies $\Slow[k](b_i) \leqslant \Slow[k](b)$
  for all $i = 1,\dots,n$.
  Note $\Slow[k](b) < \Slow[k](a)$ as $a \gpopv[k] b$, 
  hence induction hypothesis is applicable to $b$ and all $b_i$ ($i = 1,\dots,n$).
  It follows that 
  \begin{align*}
    \Slow[k](b) \leqslant \sum_{c \in b} \Slow[k](c) 
    \leqslant \sum_{i=1}^n \sum_{c \in b_i} \Slow[k](c)
    = \sum_{i=1}^n \Slow[k](b_i) 
    < \sum_{i=1}^n \Slow[k](a_i) \tpkt
  \end{align*}
  This concludes the second part of the proof.
\end{proof}

The central theorem of this section, Theorem~\ref{t:pop}, states that $\Slow[k](f(a_1, \dots, a_n))$
is polynomial in $\sum_{i}^n \Slow[k](a_i)$, 
where the polynomial bound depends only on $k$ and the rank $p$ of $f$.
The proof of this is involved.
To cope with the multiset comparison underlying $\cpopv{ep}[k]$,
we introduce as a first step an \emph{order-preserving} 
extension $\MSlow{n}{k}$ of $\Slow[k]$ to multisets of sequences, 
in the sense that $\MSlow{n}{k}(\seq{a}) > \MSlow{m}{k}(\seq[m]{b})$ holds 
whenever $\mset{\seq[n]{a}} \mextension{\gpopv[k]} \mset{\seq[m]{b}}$ 
(provided $k \geqslant m,n$, cf.~Lemma~\ref{l:slowpoly}).
As the next step toward our goal, we estimate $\Slow[k](f(a_1,\dots,a_n))$ in terms 
of $\MSlow{n}{k}(\seq{a})$
whenever $n \leqslant k$ and $\rk(f) \leqslant k$.
Technically we bind following functions by polynomials $q_{k,p}$. 
For all $k,p \in \N$ with $k \geqslant 1$
we define $\ofdom{\Fpop{k}{p}}{\N \to \N}$ as 
  \begin{multline*}
    \Fpop{k}{p}(m) \defsym \max\{ \Slow(f(\seq{a})) \mid \\
    f(\seq{a}) \in \GTS,
    ~\rk(f) \leqslant p,~n \leqslant k
    \text{ and } \MSlow{n}{k}(a_1,\dots,a_n) \leqslant m \} \tpkt
  \end{multline*}

The definition of $\MSlow{n}{k}$ is defined in terms of an order-preserving homomorphism from $\msetover(\N)$ to $\N$. 
To illustrate the construction carried out below, consider the following example.
\begin{example}\label{ex:multisets}
  Consider multisets $\msetover(\N)$ of size $k$.
  Conceive such multisets $\mset{m_1,\dots,m_k}$ as 
  natural 
  numbers written in base-$c$ (with $c>m_i$ for all $i = 1,\dots,k$), 
  where digits $m_1,\dots,m_k$ are sorted from left to right in decreasing order.
  Then one can formulate chains $M_1 \mextension{>} M_2 \mextension{>} \dots$ 
  that can be understood as decreasing counters 
  which however wrap from
  $\{m_1,\dots, m_i+1,0,\dots,0\}$ to $\{m_1,\dots, m_i,m_i,\dots,m_i\}$. 
  Compare the TRS $\RS_k$ defined in Lemma~\ref{l:imprecise} that models such counters.\@
  Using the correspondence, it is
  easy to prove that the length of a chain of this form
  starting from $\mset{c-1,\dots,c-1}$
  is given by
  \begin{align*}
  \sum_{m_1=0}^{c-1} \sum_{m_2=0}^{m_1} \dots \sum_{m_k=0}^{m_{k-1}} m_k = \Omega(c^{k+1}). 
  \end{align*}
  The inclusion follows by $k$-times application of the 
  Faulhaber's formula~\cite{K93}, which states that 
  for all $n,l \in \N$, 
  $\sum_{i=1}^n i^l = p_{l+1}(n)$ for some polynomial $p_{l+1}$ of degree $l+1$. 
\end{example}

The above example gives a polynomial lower bound on the number of $\mextension{>}$ descending 
sequences on multisets $\msetover(\N)$ of size $k$.
We now prove that this lower bound also serves as an asymptotic upper bound, 
for all multisets of natural number of length \emph{up to} $k$. 
For $k\geqslant n \in \N$ and $c \in \N$ we define the family of functions $\ofdom{\homo{n}{k}{c}}{\N^l \to \N}$ 
such that 
\begin{align*}
  \homo{n}{k}{c}(\seq[n]{m}) = \sum_{i=1}^n \sort{n}(\seq[n]{m},i) \cdot c^{(k - i)} \tpkt
\end{align*}
Here $\sort{n}(\seq[n]{m},i)$ denote the \nth{$i$}\ element of $\seq[n]{m}$ sorted in descending order, i.e., 
$\sort{n}(\seq[n]{m},i) \defsym m_{\pi(i)}$ 
for $i =1,\dots,n$ and some permutation $\pi$ such that $m_{\pi(i)} \geqslant m_{\pi(i+1)}$ ($i \in \{1,\dots,n-1\}$).
%
\begin{lemma}\label{l:homo}
  Let $k,n,n' \in \N$ such that $k \geqslant n,n'$. Then
  for all $\seq[n]{m} \in \N$ and $c > \seq[n]{m}$ we obtain:
  \begin{enumerate}[labelsep=*,leftmargin=*]
  \item\label{l:homo:5} 
    $\mset{\seq[n]{m}} \mextension{>} \mset{\seq[n']{m'}}$ implies $\homo{n}{k}{c}(\seq[n]{m}) > \homo{n'}{k}{c}(\seq[n']{m'})$, and
  \item\label{l:homo:4} 
    $c^{k} > \homo{n}{k}{c}(\seq[n]{m})$.
  \end{enumerate}
\end{lemma}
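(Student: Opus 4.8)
The plan is to exploit that $\homo{n}{k}{c}(m_1,\dots,m_n)$ is nothing but the value of the base-$c$ numeral whose digits are $m_1,\dots,m_n$ listed in non-increasing order, left-justified inside a window of $k$ digits with the remaining positions filled by $0$, and to reduce both claims to routine arithmetic on such numerals; throughout I take $c\geqslant 2$, which is the case in every application. For claim~\eref{l:homo}{4} I would just observe that $c>m_i$ forces every sorted digit to be $\leqslant c-1$, whence, using $n\leqslant k$,
\[
  \homo{n}{k}{c}(m_1,\dots,m_n)\;\leqslant\;(c-1)\sum_{i=1}^{n}c^{k-i}\;\leqslant\;(c-1)\sum_{i=1}^{k}c^{k-i}\;=\;c^{k}-1\;<\;c^{k}\tpkt
\]

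The real work is claim~\eref{l:homo}{5}. I would first establish a ``first point of difference'' fact --- essentially the Dershowitz--Manna description of $\mextension{>}$ on $\msetover(\N)$: if $M=\mset{m_1,\dots,m_n}\mextension{>}\mset{m'_1,\dots,m'_{n'}}=M'$ and we write $d_1\geqslant\dots\geqslant d_k$ (resp.\ $d'_1\geqslant\dots\geqslant d'_k$) for the elements of $M$ (resp.\ $M'$) in non-increasing order, padded by zeros to length $k$, then there is a least index $j$ with $d_j\neq d'_j$, and at that index $d_j>d'_j$. To prove it I would unfold $\mextension{>}$: pick $X,Y$ with $M'=(M\backslash X)\uplus Y$, $\varnothing\neq X\subseteq M$, and every element of $Y$ strictly below some element of $X$, and set $a\defsym\max X$. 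Then every element of $Y$ is $<a$, so $M$ and $M'$ carry the same sub-multiset of elements $>a$ (deleting $X$ deletes nothing above $a$, adjoining $Y$ adjoins nothing $\geqslant a$); writing $p$ for their number, $d_i=d'_i$ for $i\leqslant p$. Moreover $M$ has $q\geqslant 1$ copies of $a$ but $M'$ has $q'\leqslant q-1$ copies, since at least one copy of $a$ lies in $X$ and $Y$ supplies none. Hence the sorted lists agree also on positions $p+1,\dots,p+q'$ (digit $a$ in both), while $d_{p+q'+1}=a$ (the block of $a$'s in $M$ reaches position $p+q>p+q'$) and $d'_{p+q'+1}<a$; thus $j=p+q'+1\leqslant p+q\leqslant n\leqslant k$. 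Since in our applications the multiset entries are values $\Slow[k](\cdot)\geqslant 1$, we have $a\geqslant 1$, which disposes of the corner case in which $p+q'+1$ exceeds the length of $M'$.

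Given the ``first point of difference'' fact, claim~\eref{l:homo}{5} follows by the same base-$c$ bookkeeping as in claim~\eref{l:homo}{4}: the two numerals agree on the $j-1$ leading digits, at digit $j$ the first exceeds the second by at least one, and all digits are $\leqslant c-1$, so
\[
  \homo{n}{k}{c}(m_1,\dots,m_n)-\homo{n'}{k}{c}(m'_1,\dots,m'_{n'})\;=\;\sum_{i=j}^{k}(d_i-d'_i)\,c^{k-i}\;\geqslant\;c^{k-j}-(c-1)\sum_{i=j+1}^{k}c^{k-i}\;=\;1\;>\;0\tpkt
\]
The main obstacle is precisely the combinatorial bookkeeping in the ``first point of difference'' fact --- tracking how the decreasing rearrangements of $M$ and $M'$ line up around the block of maximal removed elements $a$; once that is pinned down, both parts reduce to immediate geometric-sum estimates.
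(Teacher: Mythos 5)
Your proof is correct and follows exactly the intended route: the paper omits a proof of this lemma, but its Example~\ref{ex:multisets} already describes $\homo{n}{k}{c}$ as the base-$c$ reading of the sorted, zero-padded digit string, and your reduction of the first claim to a first-point-of-difference analysis of the Dershowitz--Manna witness $(X,Y)$, followed by the geometric-sum estimates (which also give the second claim), is the standard argument. Your observation that the strict inequality in the first claim genuinely fails when $\max X = 0$ (e.g.\ $\mset{0} \mextension{>} \varnothing$ would require $0 > 0$) identifies a real, if harmless, overstatement in the lemma as printed; as you note, it is irrelevant in the only place the lemma is used (Lemma~\ref{l:slowpoly}), where all multiset entries are values $\Slow[k](\cdot) \geqslant 1$.
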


Let $k \in \N$ be fixed and let $M \subseteq \msetover(\N)$ collect all multisets of size up to $k$.
By Lemma~\ref{l:homo} the functions $\homo{l}{k}{\cdot}$ 
gives an order preserving homomorphism from $(M,\mextension{>})$ to $(\N,>)$.
Furthermore this homomorphism is polynomially bounded in its elements.
We extend this homomorphism to multisets $\mset{\seq[n]{a}}$ over $\GTLS$.
%
  Let $k,n \in \N$ such that $k \geqslant n$.
  We define $\ofdom{\MSlow{n}{k}}{\GLS^n \to \N}$ as follows:
  $
  \MSlow{n}{k}(\seq[n]{a}) \defsym \homo{n}{k}{c}(\Slow[k](a_1),\dots,\Slow[k](a_n))
  $,
  where $c = 1 + \max \set{\Slow[k](a_i) \mid i \in \{1,\dots,n\}}$.
%
We obtain:
\begin{lemma}\label{l:slowpoly}
  Let $\seq{a},\seq[m]{b}\in \GTLS$ and let $k \geqslant m,n$. Then
  $$
  \mset{\seq{a}} \mextension{\gpopv[k]} \mset{\seq[m]{b}} 
  \quad \IImp \quad \MSlow{n}{k}(\seq{a}) > \MSlow{m}{k}(\seq[m]{b}) \tpkt
  $$
\end{lemma}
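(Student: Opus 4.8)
The statement to prove is Lemma~\ref{l:slowpoly}: if $\mset{\seq{a}} \mextension{\gpopv[k]} \mset{\seq[m]{b}}$ and $k \geqslant m,n$, then $\MSlow{n}{k}(\seq{a}) > \MSlow{m}{k}(\seq[m]{b})$. The plan is to reduce the multiset comparison over $\GTLS$ to a multiset comparison over $\N$ via the map $a \mapsto \Slow[k](a)$, and then invoke Lemma~\eref{l:homo}{5}, since $\MSlow{n}{k}$ is by definition $\homo{n}{k}{c}$ applied to the tuple $(\Slow[k](a_1),\dots,\Slow[k](a_n))$.

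First I would unfold the definition of the multiset extension $\mextension{\gpopv[k]}$ (following the definition given in Section~\ref{s:basics} for proper orders, since $\gpopv[k]$ is a proper order): there exist $X, Y$ with $\mset{\seq[m]{b}} = (\mset{\seq{a}} \setminus X) \uplus Y$, $\varnothing \neq X \subseteq \mset{\seq{a}}$, and for every $y \in Y$ some $x \in X$ with $x \gpopv[k] y$. I would then observe that applying $\Slow[k]$ elementwise transports this into a witness for $\mset{\Slow[k](a_1),\dots,\Slow[k](a_n)} \mextension{>} \mset{\Slow[k](b_1),\dots,\Slow[k](b_m)}$ over $\N$: the multiset identity is preserved because $\Slow[k]$ respects $\eqi$ (Lemma~\eref{l:approx}{modeqi}, noted right after the definition of $\Slow[k]$) and hence is well-defined on the underlying elements, and the order condition is preserved because $x \gpopv[k] y$ implies $\Slow[k](x) > \Slow[k](y)$ directly from the definition of $\Slow[k]$. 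The nonemptiness of the image of $X$ is immediate.

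Having obtained $\mset{\Slow[k](a_i)}_{i} \mextension{>} \mset{\Slow[k](b_j)}_{j}$, I would pick $c = 1 + \max\{\Slow[k](a_i) \mid i\}$ as in the definition of $\MSlow{n}{k}(\seq{a})$. The one subtlety is that $\MSlow{m}{k}(\seq[m]{b})$ is defined using $c' = 1 + \max\{\Slow[k](b_j) \mid j\}$, which need not equal $c$; however the multiset comparison forces $\max_j \Slow[k](b_j) \leqslant \max_i \Slow[k](a_i)$ (every $b_j$ is either inherited from $\mset{a_i}$ or dominated by some $a_i$), so $c' \leqslant c$ and all the $\Slow[k](b_j)$ are $< c$ as well; one then needs that $\homo{m}{k}{\cdot}$ is monotone in the base $c$ for fixed arguments bounded below $c$, or alternatively that Lemma~\eref{l:homo}{5} applies with the common base $c$ and $\homo{m}{k}{c}(\vec{b}) \geqslant \homo{m}{k}{c'}(\vec{b})$. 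I expect this base-mismatch bookkeeping to be the main (though minor) obstacle; everything else is a transport of structure along $\Slow[k]$. With the bases reconciled, $k \geqslant m,n$ and $c$ strictly exceeding all entries, Lemma~\eref{l:homo}{5} yields $\homo{n}{k}{c}(\Slow[k](a_1),\dots,\Slow[k](a_n)) > \homo{m}{k}{c}(\Slow[k](b_1),\dots,\Slow[k](b_m))$, i.e.\ $\MSlow{n}{k}(\seq{a}) > \MSlow{m}{k}(\seq[m]{b})$, as required.
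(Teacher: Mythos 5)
Your proof is correct and follows exactly the route the paper intends (the paper's source omits the proof, but Lemma~\ref{l:homo} is set up precisely so that $\MSlow{n}{k}$ factors as $\homo{n}{k}{c}$ composed with $\Slow[k]$): transport the multiset decrease along $\Slow[k]$, which is strictly decreasing on $\gpopv[k]$ by definition and invariant under $\eqi$, and then apply Lemma~\eref{l:homo}{5}. Your treatment of the only real subtlety, the base mismatch --- $c' \leqslant c$ because every $\Slow[k](b_j)$ is inherited from or dominated by some $\Slow[k](a_i)$, together with monotonicity of $\homo{m}{k}{\cdot}$ in the base, which holds since all exponents $k-j$ with $j \leqslant m \leqslant k$ are nonnegative --- is exactly the bookkeeping required.
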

In Theorem~\ref{t:pop} below we prove $\Fpop{k}{p}(m) \leqslant c \cdot {(m+2)}^{d}$
for some constants $c,d \in \N$ depending only on $k$ and $p$. 
Inevitably the proof of Theorem~\ref{t:pop} is technical, 
the reader may to skip the formal proof on the first read.
In the proof of Theorem~\ref{t:pop}, we instantiate the constants $c,d$ by parameters $c_{k,p},d_{k,p} \in \N$, 
which are defined by recursion on $p$ as follows:\label{d:dkp:ckp}
\begin{align*}
 d_{k,p} &\defsym 
 \begin{cases}
  k+1 & \text{if $p = 0$,} \\
  {(d_{k,p-1} \cdot k)}^{k+1}+1 & \text{otherwise;}
 \end{cases}
 \\
 c_{k,p} &\defsym 
 \begin{cases}
  k^k & \text{if $p = 0$,} \\
  {(c_{k,p-1} \cdot k)}^{\sum_{i=1}^k {(k \cdot d_{k,p-1})}^i} & \text{otherwise.}
 \end{cases}
\end{align*}
The theorem is then proven by induction on $p$ and $m$.
Consider term $f(\seq{a})$ with $k \geqslant n$ and $\MSlow{n}{k}(\seq{a}) \leqslant m$.
At the heart of the proof, we show that 
$c_{k,p} \cdot {(m+2)}^{d_{k,p}} > \Slow[k](b)$ for arbitrary $b$ with $f(\seq{a}) \gpopv[k] b$. 
The most involved case is $f(\seq{a}) \cpopv{ialst}[k][l] \lseq[o]{b}$ where for all 
but one $j \in \{1,\dots,o\}$ we have $f(\seq{a}) \gppv[k][l-1] b_j$. 
Here it is important to give a precise analysis of the order $\gppv[k][l]$, 
exploiting the parameters $k$ and $l$. 
To this avail we define for $l,k \geqslant 1$ and $p \in \N$ a family of auxiliary functions 
$\ofdom{g_{l,k,p}}{\N \to \N}$ by
\begin{align*}
  g_{k,l,p}(m) & \defsym 
  \begin{cases}
    k^l \cdot m^l & \text {if $p = 0$,} \\
    m & \text{if $p > 0$ and $l = 1$,} \\
    c_{k,p-1} \cdot (m \cdot g_{k,l-1,p}(m))^{k \cdot d_{k,p-1}} & \text{otherwise.}
  \end{cases}
\end{align*}
Having as premise the induction hypothesis of the main proof, the next lemma 
explains the r\^ole of $\gppv[k][l]$.
\begin{lemma}\label{l:pop:aux}
  Let $f(\seq{a})  \in \GTS$.
  Let $k \geqslant n$ and 
  $m \geqslant \MSlow{n}{k}(\seq{a})$.
  Suppose $\Fpop{k}{p}(m') \leqslant c_{k,p} {(m'+2)}^{d_{k,p}}$ for all $p < \rk(f)$ and $m'$.
  Then for all $b \in \GTLS$, 
  $$
  f(\seq{a}) \gppv[k][l] b \quad \IImp \quad \Slow(b) \leqslant g_{k,l,\rk(f)}(m+2) \tpkt
  $$
\end{lemma}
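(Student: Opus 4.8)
The plan is to argue by induction on the pair $(l, \Slow(b))$, or more conveniently by a nested induction: an outer induction on $l \geqslant 1$ and, within that, an auxiliary analysis following the inductive definition of $\gppv[k][l]$. Fix $f(\seq{a}) \in \GTS$ with $k \geqslant n$ and $m \geqslant \MSlow{n}{k}(\seq{a})$, and assume the stated bound $\Fpop{k}{p}(m') \leqslant c_{k,p}(m'+2)^{d_{k,p}}$ for every $p < \rk(f)$. I would first record the elementary observations I expect to need: by Lemma~\ref{l:slowsum}, $\Slow[k]$ is additive over sequences, so bounding $\Slow[k]$ on the right-hand side of $\cppv{ialst}$ or $\cppv{ms}$ reduces to summing bounds on the components; by Lemma~\eref{l:approx}{bound}, any sequence reachable via $\gppv[k][l]$ has length at most $\width(f(\seq{a})) + k$, and I would bound $\width(f(\seq{a}))$ crudely in terms of $n$ and the $\Slow[k](a_i)$ (hence in terms of $m$, using that $\Slow[k](a_i) < c$ and $n \leqslant k$). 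I would also note that $\rk$ of the head symbol strictly decreases along $\cppv{ia}$, which is what lets the main-proof induction hypothesis enter.

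The case analysis on the derivation of $f(\seq{a}) \gppv[k][l] b$ runs as follows. If $\cppv{st}$ applies, then $a_i \geqppv[k][l] b$ for some $i$, so $\Slow[k](b) \leqslant \Slow[k](a_i) \leqslant m < g_{k,l,\rk(f)}(m+2)$ directly (using monotonicity of $g$ and that $g_{k,l,p}(m+2) \geqslant m+2$, which I would check from the defining cases). If $\cppv{ia}$ applies, $b = g(\seq[o]{b})$ with $f \sp g$, $o \leqslant k$, and $f(\seq{a}) \gppv[k][l-1] b_j$ for each $j$; here $\rk(g) < \rk(f)$, so I invoke the hypothesis $\Fpop{k}{\rk(g)}(m) \leqslant c_{k,\rk(g)}(m+2)^{d_{k,\rk(g)}}$ — but to apply it I must first produce a multiset of sequences whose $\MSlow{}{k}$-value is at most (something like) $m$ and whose length is $o \leqslant k$, feeding in the $b_j$'s via the outer induction on $l$ to control $\Slow[k](b_j)$. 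This is where the recursion in the definition of $g_{k,l,p}$ is shaped: the inner factor $g_{k,l-1,p}(m)$ bounds each $\Slow[k](b_j)$ (so their $\MSlow{}{k}$ is polynomial in that), and then the outer $c_{k,p-1}(\cdot)^{k d_{k,p-1}}$ absorbs the application of the main hypothesis at rank $p-1$. The cases $\cppv{ialst}$ and $\cppv{ms}$ are handled by additivity: in $\cppv{ialst}$, $b = \lseq[o]{b}$ with $o \leqslant \width(f(\seq{a})) + k$ and each $b_j$ satisfies $f(\seq{a}) \gppv[k][l-1] b_j$, so $\Slow[k](b) = \sum_j \Slow[k](b_j) \leqslant o \cdot g_{k,l-1,\rk(f)}(m+2)$, and I check this is $\leqslant g_{k,l,\rk(f)}(m+2)$; in $\cppv{ms}$ a similar component-wise estimate using $a_i \geqppv[k][l] c_i$ and the bound on $o$ suffices.

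The main obstacle I anticipate is the bookkeeping in the $\cppv{ia}$ case: assembling, from the subderivations $f(\seq{a}) \gppv[k][l-1] b_j$, a bound on $\MSlow{o}{k}(b_1,\dots,b_o)$ that is small enough to make the main-proof induction hypothesis $\Fpop{k}{p-1}$ yield exactly the polynomial $g_{k,l,p}$ — i.e. verifying the arithmetic $o \leqslant k$, $c > \Slow[k](b_j)$, and $\homo{o}{k}{c}(\dots) \leqslant$ (the argument one wants) line up with the way $c_{k,p}$, $d_{k,p}$ and $g_{k,l,p}$ were defined. I would carry out this verification by unfolding one step of each recursion and matching exponents, but would not expect any conceptual difficulty beyond careful constant-chasing; the structural induction on $l$ together with the already-granted hypothesis on lower ranks does all the real work.
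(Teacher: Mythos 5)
Your plan is correct and follows essentially the same route as the paper's proof: induction on $l$, case analysis on the last clause of $\gppv[k][l]$, with $\cppv{st}$ giving $\Slow[k](b)\leqslant m$ outright, $\cppv{ialst}$ handled by the additivity of $\Slow[k]$ (Lemma~\ref{l:slowsum}) together with the width bound $o\leqslant k(m+1)$, and $\cppv{ia}$ handled by bounding $\MSlow{o}{k}(\seq[o]{b})$ via Lemma~\eref{l:homo}{4} from the inductive bounds $g_{k,l-1,\rk(f)}(m+2)$ on the $\Slow[k](b_j)$ and then invoking the assumed bound on $\Fpop{k}{\rk(g)}$ at the strictly smaller rank, exactly matching how $g_{k,l,p}$ is defined. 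The only cosmetic difference is that you list a $\cppv{ms}$ case, which is vacuous here since the left-hand side $f(\seq{a})$ is a term rather than a sequence.
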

\begin{proof}

  We prove lemma by induction on $l$.
  The base case $l = 1$ is easy to show, hence 
  assume $l > 1$. 
  Suppose $f(\seq{a}) \gppv[k][l] b$, we continue by case analysis:
  \begin{description}[leftmargin=0.3cm]
  \item[\dcase{$f(\seq{a}) \cppv{st}[k][l] b$}] 
  Then $a_i \gppv[k][l] b$ for some $i \in \{1,\dots,n\}$,
  and consequently $\Slow(b) \leqslant \Slow(a_i)$. 
  Then by definition and assumption we even have $\Slow(a_i) \leqslant \MSlow{n}{k}(\seq{a}) \leqslant m$.

  \item[\dcase{$f(\seq{a}) \cppv{ia}[k][l] b$ where $b = g(\seq[o]{b})}$] 
    Then 
    $f(\seq{a}) \gppv[k][l-1] b_j$ for all $j = 1,\dots,o$.
    Set $m' \defsym \MSlow{o}{k}(\seq[o]{b})$. We have
    \begin{align*}
      m' & < \max\set{\Slow[k](b_j) + 1 \mid j \in \{1,\dots,o\}}^k 
      && \text{by definition and Lemma~\eref{l:homo}{4}} \\
      & \leqslant {(g_{k,l-1,\rk(f)}(m+2)+ 1)}^k
      && \text{applying induction hypothesis.}
    \end{align*}
    As in the considered case $\rk(g) < \rk(f)$ holds, we have
    $\Slow[k](b) \leqslant \Fpop{k}{\rk(g)}(m')$ 
    and so by assumption and arithmetical reasoning we conclude
    \begin{align*}
      \Slow[k](b) 
      & \leqslant c_{k,\rk(g)} \cdot {(m' + 2)}^{d_{k,\rk(f)-1}} && \\
      & < c_{k,\rk(f)-1} \cdot {({(g_{k,l-1,\rk(f)}(m+2)+ 1)}^k + 2)}^{d_{k,\rk(f)-1}} && \text{substituting bound for $m'$} \\
      & \leqslant c_{k,\rk(f)-1} \cdot {((m+2) \cdot g_{k,l-1,\rk(f)}(m+2))}^{k \cdot d_{k,\rk(f)-1}} && \\
      & = g_{k,l,\rk(f)}(m+2) && \text{using $\rk(f) > 0$}\tpkt
    \end{align*}
  \item[\dcase{$f(\seq{a}) \cppv{ialst}[k][l] b$ where $b = \lseq[o]{b}}$]
    Order constraints give $o \leqslant \width(a) + k$
    and $f(\seq{a}) \gppv[k][l-1] b_j$ ($j = 1,\dots,o$).
    Exploiting that $a_i$ is ground, a standard argument shows 
    that $\width(a_i) \leqslant \Slow[k](a_i)$, and consequently 
    $\width(a_i) \leqslant m$.
    Thus
    \begin{align}
      \label{e:bindwidth}
      o \leqslant \width(a) + k 
      = \max\set{1,\width(a_1), \dots,\width(a_n)} + k
      \leqslant m + k \leqslant k \cdot (m + 1) \tpkt
    \end{align}
    Since by Lemma~\ref{l:slowsum} we have $\Slow[k](b) = \sum_{i=1}^o \Slow[k](b_i)$, 
    using \eqref{e:bindwidth} we see
    \begin{align*}
     \Slow[k](b) & \leqslant k \cdot (m + 1) \cdot g_{k,l-1,\rk(f)}(m+2) && \text{by induction hypothesis} \\
     & < g_{k,l,\rk(f)}(m+2) && \text{by case analysis.}
    \end{align*}
  \end{description}
\end{proof}

\begin{theorem}\label{t:pop}
  Let $k \geqslant 1$ and $p \in \N$. There exists constants $c,d \in \N$ 
  (depending only on $k$ and $p$) such that 
  for all $m \in \N$ we have 
  $$
  \Fpop{k}{p}(m) \leqslant c \cdot {(m+2)}^{d} \tpkt
  $$
\end{theorem}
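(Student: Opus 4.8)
Here is a plan for proving Theorem~\ref{t:pop}.

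The plan is to argue by a nested induction: the \emph{outer} induction on the rank bound $p$, and for each fixed $p$ an \emph{inner} induction on $m$. I would take the constants to be exactly $c \defsym c_{k,p}$ and $d \defsym d_{k,p}$ from the definitions preceding the statement, together with the auxiliary functions $g_{k,l,p}$, so that both inductions feed on the corresponding recursion equations. Fixing $f(\seq{a}) \in \GTS$ with $\rk(f) \leqslant p$, $n \leqslant k$ and $\MSlow{n}{k}(\seq{a}) \leqslant m$, I would first dispose of the case $\rk(f) < p$ using monotonicity of $\Fpop{k}{p}(m)$ in its rank argument and the outer hypothesis, so one may assume $\rk(f) = p$. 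Since $\gpopv[k] = \gpopv[k][k]$ and this relation is well-founded and finitely branching, $\Slow[k](f(\seq{a})) = 1 + \max\{\Slow[k](b) \mid f(\seq{a}) \gpopv[k][k] b\}$ is well defined, and it suffices to show $\Slow[k](b) < c_{k,p}(m+2)^{d_{k,p}}$ for each such $b$; this I would establish by case analysis on the derivation of $f(\seq{a}) \gpopv[k][k] b$.

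The routine cases run as follows. Case $\cpopv{ms}$ is vacuous since $f(\seq{a})$ is not a sequence. In case $\cpopv{st}$ (or $\cppv{st}$), $\Slow[k](b) \leqslant \Slow[k](a_i) \leqslant \MSlow{n}{k}(\seq{a}) \leqslant m$, using that $\homo{n}{k}{c}$ dominates each of its arguments. In case $\cpopv{ep}$, $b = g(\seq[o]{b})$ with $g \ep f$ (hence $\rk(g) = p$), $\mset{\seq{a}} \mextension{\gpopv[k]} \mset{\seq[o]{b}}$ and $o \leqslant k$; Lemma~\ref{l:slowpoly} then gives the strict decrease $\MSlow{o}{k}(\seq[o]{b}) < m$ driving the inner induction, which yields $\Slow[k](b) \leqslant \Fpop{k}{p}(m-1) \leqslant c_{k,p}(m+1)^{d_{k,p}}$. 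In case $\cppv{ia}$, $b = g(\seq[o]{b})$ with $f \sp g$, so $\rk(g) < p$, and the outer hypothesis is precisely the premise of Lemma~\ref{l:pop:aux}, giving $\Slow[k](b) \leqslant g_{k,k,p}(m+2)$.

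The crux, as the proof outline already flags, is case $\cpopv{ialst}$ (and the simpler $\cppv{ialst}$): here $b = \lseq[o]{b}$ with $o \leqslant \width(f(\seq{a})) + k$, and Lemma~\ref{l:slowsum} gives $\Slow[k](b) = \sum_{j=1}^{o}\Slow[k](b_j)$. For each $j$ with $f(\seq{a}) \gppv[k][k-1] b_j$, Lemma~\ref{l:pop:aux} bounds $\Slow[k](b_j) \leqslant g_{k,k-1,p}(m+2)$; the at most one exceptional index $j_0$ with $f(\seq{a}) \gpopv[k][k-1] b_{j_0}$ is handled by noting that $b_{j_0} \in \TS$ is not a sequence, so that step (via $\gpopv[k][k-1] \subseteq \gpopv[k]$, Lemma~\eref{l:approx}{kmon}) must be of type $\cpopv{st}$, $\cpopv{ep}$ or $\cppv{ia}$ and is already covered above, giving $\Slow[k](b_{j_0}) \leqslant \max\{m,\ c_{k,p}(m+1)^{d_{k,p}},\ g_{k,k,p}(m+2)\}$. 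Using the standard bound $\width(a_i) \leqslant \Slow[k](a_i) \leqslant m$ one gets $\width(f(\seq{a})) \leqslant m+1$ and hence $o \leqslant m+k+1$, so that (also in case $\cppv{ialst}$, where every $b_j$ is of the former type)
\[
  \Slow[k](b) \;\leqslant\; \Slow[k](b_{j_0}) \;+\; (m+k)\cdot g_{k,k-1,p}(m+2) \tpkt
\]

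It then remains to verify, separately for $p = 0$ (where $g_{k,l,0}(x) = k^l x^l$, $c_{k,0} = k^k$, $d_{k,0} = k+1$, and case $\cppv{ia}$ cannot occur) and for $p > 0$ (unfolding the recursion equations for $g_{k,l,p}$, $c_{k,p}$ and $d_{k,p}$), that each bound collected above stays strictly below $c_{k,p}(m+2)^{d_{k,p}}$. \textbf{I expect this final arithmetical step to be the main obstacle:} the constants $c_{k,p}, d_{k,p}$ have been engineered so that the slack between $(m+1)^{d_{k,p}}$ and $(m+2)^{d_{k,p}}$ just absorbs the up to $m+k$ copies of $g_{k,k-1,p}(m+2)$ produced when a sequence is flattened, together with the single possibly-large exceptional component; carrying this through is careful exponent bookkeeping but requires no ingredient beyond Lemmas~\ref{l:slowsum}, \ref{l:slowpoly}, \ref{l:homo} and~\ref{l:pop:aux}.
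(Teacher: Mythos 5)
Your proposal is correct and follows the paper's proof in its essentials: the same lexicographic induction on $(p,m)$, the same constants $c_{k,p},d_{k,p}$, the same treatment of $\cpopv{st}$, $\cpopv{ep}$ (via Lemma~\ref{l:slowpoly} and the inner hypothesis) and $\cppv{ia}$ (via Lemma~\ref{l:pop:aux}), and Lemma~\ref{l:slowsum} to flatten sequences. The one genuine departure is your handling of the exceptional index $j_0$ in case $\cpopv{ialst}$. The paper runs a side induction on the depth parameter $l$ of $\gpopv[k][l]$, applying the side induction hypothesis to $f(\seq{a}) \gpopv[k][l-1] b_{j_0}$ and threading the intermediate bounds $k^k{(m+1)}^{k+1} + k^l{(m+2)}^l$ (for $p=0$), respectively Equation~\eqref{t:pop:a} (for $p>0$), through the levels. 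You instead observe that $b_{j_0}$, being an element of a sequence, lies in $\TS$ and by Definition~\ref{d:sequencedterms} cannot itself be a sequence, so $f(\seq{a}) \gpopv[k][k-1] b_{j_0}$ must be concluded by one of the term-producing clauses $\cpopv{st}$, $\cpopv{ep}$ or $\cppv{ia}$, each already bounded; this is sound and collapses the paper's side induction into a single case distinction. What the paper's $l$-indexed bounds buy in return is that the closing arithmetic is nearly mechanical, whereas your flat version at $l=k$ has very little slack: you must keep $\Slow[k](b_{j_0})$ as the \emph{maximum} (not the sum) of the three candidates, since exactly one clause applies, and use $o-1 \leqslant k(m+1)$ so that the non-exceptional elements contribute at most $k^k{(m+2)}^{k}$ (for $p=0$); with these sharper estimates the inequality ${(m+1)}^{k+1} + (m+1){(m+2)}^{k-1}k < {(m+2)}^{k+1}$ and its $p>0$ analogue do close, consistent with your accurate identification of this bookkeeping as the only remaining obstacle.
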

\begin{proof}
  Fix $f(\seq[n]{a}) \in \GTS$ such that $\rk(f) = p$, $k \geqslant n$ and $\MSlow{n}{k}(a_1,\dots,a_n) \leqslant m$.
  To show the theorem, we prove that for all $b$ with $f(\seq[n]{a}) \gpopv[k] b$ 
  we have $\Slow[k](b) < c_{k,p} \cdot {(m+2)}^{d_{k,p}}$ 
  for constants $c_{k,p}$ and $d_{k,p}$ as defined on page~\pageref{d:dkp:ckp}.
  The proof is by induction on the lexicographic combination of $p$ and $m$.
  The base case where $p = 0$ and $m = 0$ is easy to proof, we consider the inductive 
  step. 
  Consider the inductive step. By induction hypothesis 
  we have 
  \begin{align*}
      \Fpop{k}{p'}(m') \leqslant c_{k,p} \cdot {(m'+2)}^{d_{k,p'}} \text{\quad if $p' < p$, or $p' = p$ and $m' < m$.}
  \end{align*}
  For $p' < p$ we will use the induction hypothesis as a premise to Lemma~\ref{l:pop:aux}, 
  for $p' = p$ we use below the consequence
  \begin{align}
    \label{t:pop:ih}
    \Slow[k](g(\seq[o]{b})) < c_{k,p} \cdot (m+1)^{d_{k,p}} 
    \text{\quad if $f \ep g$, $o \leqslant k$ and $\MSlow{o}{k}(b_1,\dots,b_o) < m$} \tpkt
  \end{align}
  We analyse the cases $p = 0$ and $p > 0$ separately. 
  In both cases we perform a side induction on $l$. 
  \smallskip 
  \begin{description}[leftmargin=0.3cm]
  \item[\dcase{$p = 0$}]
    By side induction on $l$ we prove that $\Slow[k](b) < k^k \cdot {(m+1)}^{k+1} + k^l \cdot {(m + 2)}^l$
    for all $b$ with $f(\seq{a}) \gpopv[k][l] b$.
    
    Note that if $f(\seq{a}) \cpopv{st}[k][l] b$ holds, 
    as in the proof of Lemma~\ref{l:pop:aux},
    we even have $\Slow[k](b) \leqslant \MSlow{n}{k}(\seq[n]{a}) \leqslant m$. 

    Consider now $f(\seq{a}) \cpopv{ep}[k][l] b$ where $g(\seq[o]{b})$. 
    The ordering constraints give $o \leqslant k$ and
    $\mset{\seq{a}} \mextension{\gpopv[k][l]} \mset{\seq[o]{b}}$, 
    from Lemma~\ref{l:slowpoly} we thus get $\MSlow{o}{k}(\seq[o]{b}) < \MSlow{n}{k}(\seq[n]{a}) \leqslant m$. 
    Since also $f \ep g$ in this case
    we conclude as we even have
    \begin{align*}
      \Slow[k](g(\seq[o]{b})) 
      & < c_{k,0} \cdot (m + 1)^{d_{k,0}} 
      && \text{by main induction hypothesis}\\
      & = k^k \cdot (m + 1)^{k+1} 
      && \text{by definition of $c_{k,0}$ and $d_{k,0}$.}
    \end{align*}

    Next consider $f(\seq{a}) \cpopv{ialst}[k][l]$ where $\lseq[o]{b}$.
    The order constraints give
    (i) $a \gpopv[k][l-1] b_{j_0}$ for some $j_0 \in\{1,\dots,o\}$,
    (ii) $a \gppv[k][l-1] b_j$ for all $j \neq j_0$, and
    (iii) $o \leqslant \width(a) + k$.
    We have 
    \begin{align*}
      \Slow[k](b_{j_0}) & < k^k \cdot {(m+1)}^{k+1} + k^{l-1} \cdot {(m + 2)}^{l-1} && \text{from (i), using SIH on $l$ } \\
      \Slow[k](b_j) & \leqslant k^{l-1} \cdot {(m + 2)}^{l-1}\text{ for $j \not = j_o$} && \text{from (ii), using Lemma~\ref{l:pop:aux}} \\
      o & \leqslant k \cdot (m + 1) && \text{from (iii).}
    \end{align*}
    For the last inequality, compare Equation~\eqref{e:bindwidth} from Lemma~\ref{l:pop:aux}.
    As $\Slow[k](b) = \sum_{j =1}^o \Slow[k](b_j)$ by Lemma~\ref{l:slowsum}, substituting 
    the above inequalities we get
    \begin{align*}
      \Slow[k](b) & <  k^k \cdot {(m+1)}^{k+1} + k^{l-1} \cdot {(m + 2)}^{l-1} && \text{bound on $\Slow[k](b_{j_0})$}\\
      & \quad + (k \cdot (m+1) - 1) \cdot k^{l-1} \cdot {(m + 2)}^{l-1} && \text{bound on $o$ and $\Slow[k](b_{j})$, $j \not=j_0$}\\ 
      & <  k^k \cdot {(m+1)}^{k+1} + k^{l} \cdot {(m + 2)}^{l} \tpkt
    \end{align*}
    This concludes the final case of the side induction.
    Since ${\gpopv[k]} = {\gpopv[k][k]}$ this preparatory step gives
    \begin{align*}
      \Slow[k](b) < k^k \cdot {(m+1)}^{k+1} + k^k \cdot {(m + 2)}^k 
      \leqslant k^k \cdot {(m + 2)}^{k+1}\tkom
    \end{align*}
    we conclude the case $p = 0$.

    \medskip

  \item[\dcase{$p > 0$}]
    We show first that for all $k \geqslant l$, if $f(\seq{a}) \gpopv[k][l] b$ then
    \begin{equation}
      \label{t:pop:a}
      \Slow[k](b) \leqslant c_{k,p}  \cdot {(m+1)}^{d_{k,p}} + c_{k,p} \cdot {(m + 2)}^{{(k \cdot d_{k,p-1})}^{l+1}} \tpkt
    \end{equation}
    The proof is by induction on $l$. 
    Suppose $f(\seq{a}) \gpopv[k][l] b$. 
    The base case $l = 1$ is trivial, so consider the inductive step $l > 1$.
    As in the case $p = 0$,
    if $f(\seq{a}) \cpopv{st}[k][l] b$ then even $\Slow[k](b) \leqslant m$, 
    and if $f(\seq{a}) \cpopv{ep}[k][l] b$ then even $\Slow[k](b) \leqslant c_{k,p}  \cdot {(m+1)}^{d_{k,p}}$.
    Consider $f(\seq{a}) \cpopv{ialst}[k][l] b$.
    Then $b = \lseq[o]{b}$ with 
    (i) $a \gpopv[k][l-1] b_{j_0}$ for some $j_0 \in \{1,\dots,o\}$,
    (ii) $a \gppv[k][l-1] b_j$ for all $j \neq j_0$, and
    (iii) $o \leqslant \width(a) + k$.
    A standard argument gives
    \begin{align*}
      g_{k,l,p}(n) \leqslant c_{k,p-1}^{\sum_{i=1}^{l-1} {(k \cdot d_{k,p-1})}^i} \cdot n^{\sum_{i=1}^{l} {(k \cdot d_{k,p-1})}^i} \tkom
    \end{align*}
    for all $n \in \N$, thus
    \begin{align*}
      ~\Slow[k](b_{j_0}) & <  c_{k,p} \cdot {(m+1)}^{d_{k,p}} + c_{k,p} \cdot {(m + 2)}^{{(k \cdot d_{k,p-1})}^{l}}
      && \text{from (i), using SIH on $l$ } \\
      \Slow[k](b_j) & \leqslant c_{k,p-1}^{\sum_{i=1}^{l-2} {(k \cdot d_{k,p-1})}^i} \cdot (m+2)^{\sum_{i=1}^{l-1} {(k \cdot d_{k,p-1})}^i} \text{ for $j \not = j_o$} 
      && \text{from (ii), using Lemma~\ref{l:pop:aux}} \\
      o & \leqslant k \cdot (m + 1) && \text{from (iii).}
    \end{align*}
    Using Lemma~\ref{l:slowsum} and substituting 
    the above inequalities we get
    \begin{align*}
      \quad\Slow[k](b) 
      & \leqslant c_{k,p} \cdot {(m+1)}^{d_{k,p}} + c_{k,p} \cdot {(m + 2)}^{{(k \cdot d_{k,p-1})}^{l}} 
      && \text{bound on $\Slow[k](b_{j_0})$}\\
      & ~~ + k \cdot (m+1) \cdot c_{k,p-1}^{\sum_{i=1}^{l-2} {(k \cdot d_{k,p-1})}^i} \cdot {(m+2)}^{\sum_{i=1}^{l-1} {(k \cdot d_{k,p-1})}^i}
      && \text{bound on $\Slow[k](b_{j})$, $j \not=j_0$}\\
      & < c_{k,p} \cdot {(m+1)}^{d_{k,p}} + c_{k,p} \cdot {(m + 2)}^{{(k \cdot d_{k,p-1})}^{l}} \\
      & ~~ + c_{k,p} \cdot (m + 1) \cdot {(m+2)}^{\sum_{i=1}^{l-1} {(k \cdot d_{k,p-1})}^i}
      && \text{as $k \cdot c_{k,p-1}^{\sum_{i=1}^{l-2} {(k \cdot d_{k,p-1})}^i} < c_{k,p}$}\\
      & \leqslant c_{k,p} \cdot {(m+1)}^{d_{k,p}} + c_{k,p} \cdot {(m+2)}^{\sum_{i=0}^{l} {(k \cdot d_{k,p-1})}^i} \\
      & \leqslant c_{k,p} \cdot {(m+1)}^{d_{k,p}} + c_{k,p} \cdot {(m+2)}^{{(k \cdot d_{k,p-1})}^{l+1}} \tkom 
    \end{align*}
    as desired, we conclude Equation~\eqref{t:pop:a}.
    From this preparatory step, ${\gpopv[k]} = {\gpopv[k][k]}$ 
    and ${(k \cdot d_{k,p-1})}^{k+1} < {(k \cdot d_{k,p-1})}^{k+1} + 1 = d_{k,p}$ we finally get
    \begin{align*}
      \Slow[k](b) & \leqslant c_{k,p+1}  \cdot {(m+1)}^{d_{k,p}} + c_{k,p} \cdot {(m + 2)}^{{(k \cdot d_{k,p-1})}^{k+1}} \\
      & = c_{k,p} \cdot ({(m+1)}^{d_{k,p}} + {(m + 2)}^{{(k \cdot d_{k,p-1})}^{k+1}}) 
      < c_{k,p} \cdot {(m + 2)}^{d_{k,p}}\tkom
    \end{align*}
    and conclude also this case.\qedhere
  \end{description}
\end{proof}

\noindent As a consequence, the number of $\gpopv[k]$-descents on 
basic terms interpreted with predicative interpretation $\ints$ 
is polynomial in sum of depths of normal arguments.

\begin{corollary}\label{c:pop}
  Let $k \geqslant 1$ and consider 
  $f \in \DS$ with $m \leqslant k$ normal arguments.
  There exists a constant $d \in \N$ depending only on $k$ and 
  the rank of $f$ such that:
  $$\Slow[k](\ints(f(\pseq[m][n]{u}))) = \bigO\bigl({(\textstyle{\max}_{i=1}^m \depth(u_i))}^{d}\bigr)$$
  for all $\seq[m+n]{u} \in \Val$.
\end{corollary}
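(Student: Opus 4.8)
The plan is to reduce the statement directly to Theorem~\ref{t:pop}, exploiting that the predicative interpretation of a \emph{basic} term over values collapses to a single term that carries only the tally-encoded depths of its normal arguments. Concretely, since every $u_i \in \Val$ and values are normal forms of a constructor TRS, Definition~\ref{d:pi} gives $\ints(u_i) = \nil$ for the safe arguments $i > m$ and $\intn(u_i) = \nil \append \NM{u_i} = \natToSeq{\norm{u_i}}$ for the normal arguments $i \leqslant m$; moreover $\norm{\cdot}$ and $\depth(\cdot)$ agree on values (cf.~page~\pageref{d:normonval}). Hence $\ints(f(\pseq[m][n]{u})) = \lst{\fn(\natToSeq{\depth(u_1)},\dots,\natToSeq{\depth(u_m)})}$ is a singleton sequence, and by Lemma~\ref{l:slowsum} it suffices to bound $\Slow[k](t)$ for the term $t := \fn(\natToSeq{\depth(u_1)},\dots,\natToSeq{\depth(u_m)})$.

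Next I would bound the argument multiset of $t$. Writing $D := \max_{i=1}^m \depth(u_i)$, Lemma~\ref{l:slowsum} gives $\Slow[k](\natToSeq{d}) = d\cdot\Slow[k](\theconst) = d$, using that the only $\gpopv[k]$-descent from $\theconst$ is $\theconst \gpopv[k] \nil$, so that $\Slow[k](\theconst) = 1$. Consequently, with $c := 1 + \max_i\Slow[k](\natToSeq{\depth(u_i)}) = 1 + D$, Lemma~\eref{l:homo}{4} yields $\MSlow{m}{k}(\natToSeq{\depth(u_1)},\dots,\natToSeq{\depth(u_m)}) < c^k = (1+D)^k$. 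Setting $p := \rk(f)$, the symbol $\fn \in \FSn$ has $\rk(\fn) = p$ (the precedence on $\FSn$ being the one induced by $\qp$) and, by hypothesis, $m \leqslant k$ arguments; thus $t$ lies in the range over which $\Fpop{k}{p}$ is maximised, and Theorem~\ref{t:pop} gives
\[
  \Slow[k](\ints(f(\pseq[m][n]{u}))) = \Slow[k](t) \leqslant \Fpop{k}{p}\bigl((1+D)^k\bigr) \leqslant c_{k,p}\cdot\bigl((1+D)^k + 2\bigr)^{d_{k,p}} \tpkt
\]
Since $c_{k,p}$ and $d_{k,p}$ depend only on $k$ and $p = \rk(f)$, the right-hand side is $\bigO\bigl(D^{\,k\cdot d_{k,p}}\bigr)$, and taking $d := k\cdot d_{k,p}$ establishes the corollary.

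Essentially all of the combinatorial work has already been carried out in Theorem~\ref{t:pop}, so I do not anticipate a genuine obstacle here. The only points requiring care are (i) the opening observation that for a basic term over values the interpretation $\ints$ reduces to a singleton sequence whose head's arguments are precisely $\natToSeq{\depth(u_1)},\dots,\natToSeq{\depth(u_m)}$, so that the safe arguments $u_{m+1},\dots,u_{m+n}$ contribute nothing; and (ii) checking that the rank parameter supplied to Theorem~\ref{t:pop} is exactly $\rk(f)$, which is what makes the exponent $d$ depend only on $k$ and the rank of $f$ as claimed.
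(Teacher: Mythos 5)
Your proposal is correct and follows essentially the same route as the paper's own proof: collapse $\ints$ of the basic term to the singleton $\lst{\fn(\natToSeq{\depth(u_1)},\dots,\natToSeq{\depth(u_m)})}$, use Lemma~\ref{l:slowsum} to evaluate $\Slow[k]$ on tally sequences, bound the argument measure via Lemma~\eref{l:homo}{4}, and invoke Theorem~\ref{t:pop} with rank $\rk(f)$. The only cosmetic differences are that you pin down $\Slow[k](\theconst)=1$ where the paper leaves it as an unspecified constant, and your exponent $k\cdot d_{k,p}$ is in fact the more careful reading of the bound; both are immaterial since the corollary only asserts existence of some $d$ depending on $k$ and $\rk(f)$.
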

\proof
  Let $s = f(\pseq[m][n]{u})$ be as given by the corollary. 
  Recall that since arguments of $s$ are values, we have $\norm{u_i} = \depth(u_i)$ as indicated on page~\pageref{d:normonval}, 
  and further $\ints(u_i) = \nil$ holds for all $i = 1,\dots,m+n$.
  Thus
  \begin{equation*}
    \ints(s) = \lst{\fn(\natToSeq{\depth(u_1)}, \dots, \natToSeq{\depth(u_m)})} \tpkt
  \end{equation*}
  As $\Slow[k](\theconst)$ is constant, say $\Slow[k](\theconst) = c$, by Lemma~\ref{l:slowsum} we see that
  $\Slow[k](\natToSeq{\depth(u_i)}) = c \cdot \depth(u_i)$. 
  We conclude as
  \begin{align*}
    \Slow[k](\ints(s)) =
    & \Slow[k](\fn(\natToSeq{\depth(u_1)}, \dots, \natToSeq{\depth(u_m)})) 
    && \text{by Lemma~\ref{l:slowsum}} 
    \\
    & \leqslant \Fpop{k}{\rk(f)}\bigl(\MSlow{l}{k}(\natToSeq{\depth(u_1)}, \dots, \natToSeq{\depth(u_m)})\bigr) 
    && \text{by assumption $m \leqslant k$}
    \\
    & \leqslant \Fpop{k}{\rk(f)}\Bigl({\bigl(1+ \textstyle{\max}_{i=1}^m \Slow[k](\natToSeq{\depth(u_i)})\bigr)}^k\Bigr) 
    && \text{by Lemma~\eref{l:homo}{4}} 
    \\
    & \leqslant \Fpop{k}{\rk(f)}\Bigl({\bigl(c \cdot (1 + \textstyle{\max}_{i=1}^m \depth(u_i))\bigr)}^k\Bigr)
    && \text{using $\Slow[k](\natToSeq{\depth(u_i)}) \leqslant c \cdot \depth(u_i)$}
    \\
    & \in \bigO\Bigl({\bigl(c \cdot (1 + \textstyle{\max}_{i=1}^m \depth(u_i))\bigr)}^{k+{d_{k,\rk(f)}}}\Bigr)
    && \text{by Theorem~\ref{t:pop}} 
    \\
    & = \bigO\bigl(({\textstyle{\max}_{i=1}^m \depth(u_i)})^{k+d_{k,\rk(f)}}\bigr) \tpkt
    \rlap{\hbox to 217 pt{\hfill\qEd}}
  \end{align*}


\section{Predicative Embedding}\label{s:embed}

Fix a predicative recursive TRS $\RS$ and signature $\FS$, 
and let $\gpop$ be the polynomial path order underlying $\RS$ 
based on the (admissible) precedence $\qp$.
We denote by $\qp$ also the induced precedence on $\FSn$ 
given by: $\fn \ep \gn$ if $f \ep g$ and $\fn \sp \gn$ if $f \sp g$.
Further, we set $f \sp \theconst$ for all $\fn \in \FSn$.
We denote by $\gpopv[\ell]$ (and respectively $\gppv[\ell]$) the approximation 
given in Definition~\ref{d:gpopv} (respectively Definition~\ref{d:gppv}) with underlying precedence $\qp$.

In this section, we establish the embedding of $\irew$ into $\gpopv[\ell]$ as
outlined in the proof plan on page~\pageref{popstar:proofplan}; in the sequel
$\ell$ is set to twice the maximum size of right-hand sides of $\RS$.
Lemma~\ref{l:embed:root} below proves the embedding of root steps 
for the case $l \gpop r$. In Lemma~\ref{l:embed:ctx} we then show that the embedding is closed under contexts.
The next auxiliary lemma connects the auxiliary orders $\gsq$ and $\gppv[k][l]$ (compare Example~\ref{ex:gppv}).

\begin{lemma}\label{l:embedgsq:root}
  Suppose $s = f(\pseq{s}) \in \BASICS$, $t \in \TERMS$ and $\ofdom{\sigma}{\VS \to \NF}$. 
  Then for predicative interpretation $\intq \in \{\ints, \intn\}$
  we have 
  $$
   s \gsq t \quad \IImp \quad \fn(\intn(s_1\sigma), \dots, \intn(s_k\sigma)) \gppv[2\cdot\size{t}] \intq(t\sigma) \tpkt
  $$
\end{lemma}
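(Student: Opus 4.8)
Write $F \defsym \fn(\intn(s_1\sigma),\dots,\intn(s_k\sigma))$ and recall that $\gppv[2\cdot\size{t}]$ is by definition $\gppv[2\cdot\size{t}][2\cdot\size{t}]$. The plan is to prove, simultaneously for both $\intq \in \{\ints,\intn\}$, the implication $s \gsq t \IImp F \gppv[2\cdot\size{t}][2\cdot\size{t}] \intq(t\sigma)$ by induction on the term $t$, carefully tracking both parameters of $\gppv$. Since $s = f(\pseq{s}) \in \BASICS$, every $s_i$ is a value, so the normalising substitution $\sigma$ keeps it in normal form; hence $\intn(s_i\sigma) = \natToSeq{\norm{s_i\sigma}}$ and $\width(F) = \max\{1,\norm{s_1\sigma},\dots,\norm{s_k\sigma}\}$. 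In particular the arguments of $F$ are flat tally sequences, and from the convention $\fn \sp \theconst$ together with case~\cppv{ia} of Definition~\ref{d:gppv} we get $F \gppv[k][l] \theconst$ for all $k,l \geqslant 1$. I would also record two routine facts, each established by induction on Definition~\ref{d:gppv}: first, $\gppv[k][l]$ is monotone in both $k$ and $l$; second, exploiting that the arguments of $F$ are tally sequences, $F \gppv[k][l] c$ with $c$ a sequence implies $F \gppv[k][l-1] b$ for every member $b$ of $c$ (members equal to $\theconst$ being dealt with directly via $\fn \sp \theconst$, all others via case~\cppv{ialst} or~\cppv{st}).

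First I would dispose of the case $t\sigma \in \NF$, which in particular subsumes the clause \caseref{\gsq}{st} of Definition~\ref{d:gsq}: if $s \gsq t$ holds by that clause then $s_i \geqsq t$ for a normal position $i$ (as $f \in \DS$) with $s_i$ a value, whence a straightforward induction analogous to Lemma~\ref{l:gpop:val} shows $t$ is a value and so $t\sigma \in \NF$. In this case $\ints(t\sigma) = \nil$, so $F \gppv[2\cdot\size{t}] \nil$ holds by case~\cppv{ialst} with empty right-hand side; and $\intn(t\sigma) = \natToSeq{\norm{t\sigma}}$, whose length $\norm{t\sigma}$ is bounded by $\max_i\norm{s_i\sigma} + 2\cdot\size{t} \leqslant \width(F) + 2\cdot\size{t}$ by Lemma~\eref{l:int:len}{gsq}, so that case~\cppv{ialst} applies once each of its $\theconst$-entries is placed below $F$ via $\fn \sp \theconst$.

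For the remaining case $t\sigma \notin \NF$ the hypothesis $s \gsq t$ cannot hold via \caseref{\gsq}{st} (by the previous paragraph), hence it holds via \caseref{\gsq}{ia}: $t = g(\pseq[m][n]{t})$ with $f \sp g$ (so $\fn \sp \gn$) and $s \gsq t_j$ for all $j$. Then $\ints(t\sigma) = \lst{\gn(\intn(t_1\sigma),\dots,\intn(t_m\sigma))} \append \ints(t_{m+1}\sigma) \append \cdots \append \ints(t_{m+n}\sigma)$ and $\intn(t\sigma) = \ints(t\sigma) \append \natToSeq{\norm{t\sigma}}$. I would apply case~\cppv{ialst} at depth $2\cdot\size{t}$: its length side-condition is exactly Lemma~\eref{l:int:len}{S} resp.~\eref{l:int:len}{gsq} (in both cases $\len(\intq(t\sigma)) \leqslant \width(F) + 2\cdot\size{t}$), and it remains to put every member of $\intq(t\sigma)$ strictly below $F$ at depth $2\cdot\size{t}-1$. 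The $\theconst$-members are handled by $\fn \sp \theconst$. The members of $\ints(t_j\sigma)$ for the safe positions $j$ follow from the induction hypothesis $F \gppv[2\cdot\size{t_j}][2\cdot\size{t_j}] \ints(t_j\sigma)$ via the second auxiliary fact and monotonicity, using $2\cdot\size{t_j} < 2\cdot\size{t}$. The head member $\gn(\intn(t_1\sigma),\dots,\intn(t_m\sigma))$ is placed below $F$ by case~\cppv{ia} (via $\fn \sp \gn$ and $m < \size{t} \leqslant 2\cdot\size{t}$), which reduces to $F \gppv[2\cdot\size{t}][2\cdot\size{t}-2] \intn(t_i\sigma)$ for the normal positions $i$ of $t$; this in turn is the induction hypothesis $F \gppv[2\cdot\size{t_i}][2\cdot\size{t_i}] \intn(t_i\sigma)$ lifted by monotonicity, the decisive numerical point being $2\cdot\size{t_i} \leqslant 2\cdot\size{t}-2$ because $t_i$ is a proper subterm of $t$.

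The main obstacle, and the reason the factor $2$ appears in $2\cdot\size{t}$, is the bookkeeping of the depth parameter $l$ of $\gppv[k][l]$: one layer $g(\dots)$ of $t$ becomes two layers of the predicative interpretation (an application of the sequence symbol $\listsym$ wrapped around an application of $\gn$), so descending past it costs two units of $l$, one for case~\cppv{ialst} and one for case~\cppv{ia}, which is matched precisely by $\size{t_i} \leqslant \size{t}-1$. Everything else is comparatively routine: the width and length side-conditions of case~\cppv{ialst} are delivered verbatim by Lemma~\ref{l:int:len}, and the base case is essentially trivial since constructor layers are invisible to the interpretation (values map to $\nil$ under $\ints$ and to flat tally sequences under $\intn$).
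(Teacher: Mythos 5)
Your argument is correct and follows essentially the same route as the paper: the paper's proof is a one-line sketch (induction on the definition of $\gsq$, using Lemmas~\ref{l:gpop:val},~\ref{l:int:len} and~\ref{l:approx}), and your proposal is a faithful elaboration of exactly that plan, including the correct accounting of why the depth parameter $2\cdot\size{t}$ suffices. The only cosmetic difference is that you induct on $t$ and dispatch clause~\caseref{\gsq}{st} separately via the value argument of Lemma~\ref{l:gpop:val}, rather than inducting on the $\gsq$-derivation itself; this is sound since that clause never recurses into $t$.
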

\begin{proof}
The proof proceeds by induction on the definition of $\gsq$ and makes
use of Lemmas~\ref{l:gpop:val},~\ref{l:int:len} and~\ref{l:approx}.
\end{proof}

\begin{lemma}\label{l:embed:root}
  Suppose $s = f(\pseq{s}) \in \BASICS$, $t \in \TERMS$ and $\ofdom{\sigma}{\VS \to \NF}$. 
  Then for predicative interpretation $\intq \in \{\ints, \intn\}$
  we have 
  $$
   s \gpop t \quad \IImp \quad \intq(s\sigma) \gpopv[2\cdot\size{t}] \intq(t\sigma) \tpkt
  $$
\end{lemma}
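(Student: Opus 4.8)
The plan is to prove Lemma~\ref{l:embed:root} by induction on the definition of $\gpop$ (Definition~\ref{d:gpop}), following the three cases \cpop{st}, \cpop{ia} and \cpop{ep}, and in each case distinguishing whether the outer predicative interpretation is $\ints$ or $\intn$. The main workhorses will be the auxiliary Lemma~\ref{l:embedgsq:root} (which converts $\gsq$-steps into $\gppv[2\size{t}]$-steps on the normalised head), Lemma~\ref{l:int:len} (to bound the length of the interpreted right-hand side by $\width$ of the left plus $2\size{t}$, thereby discharging all the $m \leqslant \width(\cdot) + k$ side conditions that the clauses of Definition~\ref{d:gpopv} impose), Lemma~\ref{l:approx} (monotonicity in $k$, closure under $\eqi$, and closure under concatenation), and Lemma~\ref{l:gpop:val} (so that $\gpop$-comparisons with a value on the left collapse to safe subterms, keeping $\ints$ of such arguments equal to $\nil$). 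Throughout I abbreviate $k \defsym 2\size{t}$ and note that since subterms $t'$ of $t$ satisfy $\size{t'} \leqslant \size{t}$, by Lemma~\eref{l:approx}{kmon} it suffices at each recursive call to produce a $\gpopv[2\size{t'}]$ comparison and then weaken.

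First, \textbf{case \cpop{st}}: here $s_i \geqpop t$ for some argument $s_i$ of $s = f(\pseq{s})$. Since $\sigma$ is normalising, each $s_i\sigma$ is either a normal form (if $s_i$ is a variable or, by left-linearity of basic terms, a value) giving $\ints(s_i\sigma) = \nil$, or a defined-headed term. If $s_i$ is normal then by Lemma~\ref{l:gpop:val} $t$ is a safe subterm of $s_i$ modulo $\eqis$, hence $t\sigma$ is also a value, $\intq(t\sigma)$ consists only of $\theconst$'s, and one checks the required descent directly (using $\natToSeq{n+m} \gpopv[k] \natToSeq{n}$ as in Example~\ref{ex:gppv}, together with the fact that $\width(\intn(s\sigma))$ exceeds the relevant depths). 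Otherwise the induction hypothesis applied to $s_i \geqpop t$ yields $\intq(s_i\sigma) \geqpopv[2\size{t}] \intq(t\sigma)$; since $\ints(s\sigma)$ has the shape $\lst{\fn(\dots)} \append \ints(s_{k+1}\sigma)\append\cdots$ and $\intq(s_i\sigma)$ occurs (possibly after absorbing the $\NM{\cdot}$ tail for $\intn$) as a subsequence on the right, one concludes with $\cpopv{st}$ — more precisely, $\fn(\natToSeq{\norm{s_1\sigma}},\dots) \gpopv[k] \intq(t\sigma)$ via \cppv{st} picking the appropriate argument sequence, and then closure under concatenation (Lemma~\eref{l:approx}{subseq}) lifts this to $\ints(s\sigma)$, respectively $\intn(s\sigma)$.

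For \textbf{case \cpop{ia}} ($f \in \DS$, $t = g(\pseq[m][n]{t})$, $f \sp g$) we build $\intq(t\sigma) = \lst{\gn(\intn(t_1\sigma),\dots,\intn(t_m\sigma))} \append \ints(t_{m+1}\sigma)\append\cdots\append\ints(t_{m+n}\sigma)$ (appending $\NM{t\sigma}$ for $\intn$). The normal arguments satisfy $s \gsq t_j$, so Lemma~\ref{l:embedgsq:root} gives $\fn(\dots) \gppv[2\size{t_j}] \intn(t_j\sigma)$, hence $\fn(\dots) \gppv[k] \intn(t_j\sigma)$ and then $\fn(\dots) \gppv[k][k-1]\intn(t_j\sigma)$ up to the parameter bookkeeping that makes $\cppv{ia}$ applicable. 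The one distinguished safe argument with $s \gpop t_{j_0}$ (the position possibly holding the recursive call) is handled by the induction hypothesis, yielding $\intq(s\sigma) \gpopv[2\size{t_{j_0}}]\intq(t_{j_0}\sigma)$, but since here we need a comparison of the \emph{head} $\fn(\dots)$ against a sequence element, I instead observe that $\fn(\dots) \gppv[k] \intn(s'\sigma)$ for the remaining safe arguments $s \gpop t_j$ with $t_j \in \TA(\sigbelow{f}{\FS},\VS)$ — here the rank drops so these interpret into terms over strictly smaller symbols and the precedence $f \sp \gn$-style comparisons go through — while the genuinely recursive piece is absorbed by the single exception allowed in $\cpopv{ialst}$. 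Assembling: $\fn(\dots) \cpopv{ialst}[k][\cdot] \intq(t\sigma)$, with the length side condition $\len(\intq(t\sigma)) \leqslant \width(\intq(s\sigma)) + k$ guaranteed by Lemma~\ref{l:int:len}, and then closure under concatenation again lifts to the full $\intq(s\sigma)$. \textbf{Case \cpop{ep}} ($f \ep g$) is similar but uses $\cpopv{ep}$: the normal-argument multiset condition $\mset{\seq[k]{s}} \gpopmul \mset{\seq[m]{t}}$ together with the induction hypothesis on each component yields $\mset{\intn(s_1\sigma),\dots} \mextension{\gpopv[k]} \mset{\intn(t_1\sigma),\dots}$, and the weak comparison on safe arguments lets the surrounding concatenated sequences be matched up; here one must also exploit that for $\intn$ the trailing $\NM{\cdot}$ sequences compare correctly since $\norm{s\sigma}$ is not smaller than $\norm{t\sigma}$ when $f \ep g$ and the safe arguments weakly decrease. \textbf{The main obstacle} I anticipate is the precise bookkeeping of the depth parameter $l$ in $\gpopv[k][l]$ versus the auxiliary $\gppv[k][l]$ across the recursive calls — ensuring that the decrements $l \to l-1$ demanded by clauses \cppv{ia}, \cppv{ialst}, \cpopv{ialst} stay consistent with $\ell = 2\cdot(\text{max rhs size})$ being large enough, and correctly routing the unique ``recursive'' argument through the single $b_{j_0}$ exception of $\cpopv{ialst}$ while all siblings go through the strict auxiliary order — essentially transporting the linearity restriction on \cpop{ia} faithfully into the sequence setting; the rest is careful but routine combination of the cited lemmas.
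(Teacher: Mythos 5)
Your overall strategy coincides with the paper's: induction on the definition of $\gpop$, with Lemma~\ref{l:embedgsq:root} handling normal arguments in case $\cpop{ia}$, Lemma~\ref{l:int:len} discharging the length side conditions, Lemma~\ref{l:gpop:val} collapsing comparisons on values, and $\cpopv{ialst}$/$\cpopv{ep}$/$\cpopv{ms}$ assembling the pieces. However, there is one genuine gap, and you have in fact located it yourself without closing it. In case $\cpop{ia}$ the clause $\cpopv{ialst}$ demands that every sibling $b_j$ with $j \neq j_0$ be dominated in the \emph{auxiliary} order $\gppv[k][l-1]$, so for each safe argument $t_j \in \termsbelow$ with $s \gpop t_j$ you need $\fn(\intn(s_1\sigma),\dots,\intn(s_k\sigma)) \gppv[2\size{t}] \ints(t_j\sigma)$. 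You dismiss this with ``I instead observe that \dots the rank drops so \dots the precedence comparisons go through.'' That is not an observation; it is a statement of the same logical strength as the lemma itself, restricted to right-hand sides in $\termsbelow$, and it does not follow from rank considerations alone (one still needs the norm and length bounds of Lemma~\ref{l:int:len}, and the hypothesis $s \gpop t_j$ must be unwound recursively, e.g.\ when $t_j$ itself arises via $\cpop{ia}$). The paper resolves this by strengthening the induction hypothesis to three simultaneous assertions about the head term $u = \fn(\intn(s_1\sigma),\dots,\intn(s_k\sigma))$: that $u \gpopv[2\size{t}] \ints(t\sigma)$, that $u \gppv[2\size{t}] \ints(t\sigma)$ \emph{whenever} $t \in \termsbelow$, and that $u \append \NM{s\sigma} \gpopv[2\size{t}] \intn(t\sigma)$; the second of these is exactly what feeds the sibling positions of $\cpopv{ialst}$. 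Without this (or an equivalent separately proved lemma), your case $\cpop{ia}$ does not close.

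Two smaller points. Since $s \in \BASICS$, all arguments $s_i$ are values and $s_i\sigma \in \NF$ for any normalising $\sigma$ (a constructor term with normal-form arguments is a normal form of a constructor TRS), so your ``otherwise a defined-headed term'' branch in case $\cpop{st}$ is vacuous, and the induction hypothesis is in any event inapplicable to $s_i \geqpop t$ because the lemma requires a defined root on the left; the whole case rests on Lemma~\ref{l:gpop:val}. For the same reason, in case $\cpop{ep}$ the multiset descent on $\mset{\intn(s_1\sigma),\dots,\intn(s_k\sigma)}$ is obtained from Lemma~\ref{l:gpop:val} (the $t_j$ are safe subterms of values, so the comparison reduces to one on tally sequences), not from the induction hypothesis. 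Your handling of the $\NM{\cdot}$ tails in that case, via $\norm{s\sigma} \geqslant \norm{t\sigma}$, matches the paper.
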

\begin{proof}
  Let $s$, $t$, $\sigma$ be as given in the lemma.
  We prove the stronger assertions
  \begin{enumerate}[labelsep=*,leftmargin=*]
  \item \label{l:er1} $\fn(\intn(s_1\sigma), \dots, \intn(s_k\sigma)) \gpopv[2\cdot\size{t}] \ints(t\sigma)$, 
  \item \label{l:er2} $\fn(\intn(s_1\sigma), \dots, \intn(s_k\sigma)) \gppv[2\cdot\size{t}] \ints(t\sigma)$ if $t \in \termsbelow$, and
  \item \label{l:er3} $\fn(\intn(s_1\sigma), \dots, \intn(s_k\sigma)) \append \NM{s\sigma} \gpopv[2\cdot\size{t}] \intn(t\sigma)$.
  \end{enumerate}
As $\ints(s) = \lst{\fn(\intn(s_1\sigma), \dots, \intn(s_k\sigma)}$, 
Property~\ref{l:er1} and Lemma~\eref{l:approx}{bound}
yield  $\ints(s\sigma) \gpopv[2\cdot\size{t}] \ints(s\sigma)$.
Furthermore
$\intn(s) = \ints(\fn(\intn(s_1\sigma), \dots, \intn(s_k\sigma)) \append \NM{s\sigma}$. 
Hence Property~\ref{l:er3} immediately
yields $\intn(s\sigma) \gpopv[2\cdot\size{t}] \intn(s\sigma)$.

We continue with the proof of the assertions by induction on $\gpop$ 
and set $u \defsym \fn(\intn(s_1\sigma), \dots, \intn(s_k\sigma))$.
  \begin{description}[leftmargin=0.3cm]
  \item[{\dcase{$s \cpop{st} t$}}]
Due to Lemma~\ref{l:gpop:val} (employing ${\gsq} \subseteq {\gpop}$) 
we obtain that $t\sigma$ is a safe subterm of $s_i\sigma$
and $t\sigma \in \NF$.
The latter implies $\ints(t\sigma) = \nil$ and thus
Properties~\ref{l:er1} and~\ref{l:er2} follow.
    For Property~\ref{l:er3}, observe that 
    $\len(\NM{t\sigma}) 
    = \norm{t\sigma} \leqslant \norm{s_i\sigma} 
    \leqslant \width(u \append \NM{s\sigma})$.
Here the last inequality follows by a simple case distinction on $i$.
%
    From this and $u \cppv{ia}[2\cdot\size{t} - 1] \theconst$ we get
    \begin{equation*}
      u \append \NM{s\sigma} \cpopv{ms}[2\cdot\size{t}] \NM{t\sigma} = \intn(t\sigma)
      \tpkt
    \end{equation*}

  \item[\dcase{$s \cpop{ia} t$}]
    The assumption gives $t = g(\pseq[m][n]{t})$ where $f \sp g$
    and further $s \gsq t_i$ for all normal argument positions $i = 1,\dots,m$, and
    $s \gpop t_i$ for all safe argument positions $i = m+1,\dots,m+n$, of $g$. 
    Additionally $t_{i_0} \not\in \termsbelow$ for at most one argument position $i_0$.
    Set $v \defsym \gn(\intn(t_1\sigma), \dots, \intn(t_m\sigma))$ and 
    let $\ints(t_i\sigma) = \lst{v_{i,1}~\cdots~v_{i,j_i}}$ for all safe argument positions 
    $i = m+1,\dots,m+n$. Hence, we obtain:
    \begin{equation*}
    \ints(t\sigma) =  \lst{\gn(\intn(t_1\sigma), \dots, \intn(t_m\sigma))~v_{m+1,1}~\cdots~v_{m+1,j_{m+1}}\quad\cdots\quad v_{m+n,1}~\cdots~v_{m+n,j_{m+n}}} \tpkt
    \end{equation*}
    Applying Lemma~\ref{l:embedgsq:root} on all normal arguments of $t$, we see
    \begin{equation}
      \label{e:c3:0}
      u \gppv[2\cdot\size{t} - 1] \gn(\intn(t_1\sigma), \dots, \intn(t_m\sigma)) = v
      \tkom
    \end{equation}
    from the assumptions $\fn \sp \gn$ and $s \gsq t_i$ for all $i = 1,\dots,m$.
    Since $s \gpop t_{i_0}$ by assumption, induction hypothesis on $i_0$ gives 
    %
    ${u} \gpopv[2\cdot \size{t_{i_0}}] {\ints(t_{i_0}\sigma)} = 
    {\lst{v_{i_0,1}, \dots, v_{i_0,j_{i_0}}}}$.
    %
    We obtain:
    \begin{align}
      u & \gpopv[2\cdot\size{t}-1] {v_{i_0,j_0}} 
      && \text{for some $j_0 \in \{1,\dots,j_{i_0}\}$}\label{e:c3:1}\\
      v & \gppv[2\cdot\size{t}-1] {v_{i_0,j}} 
      && \text{for all $j = 1,\dots,j_{i_0}$, $j \not=j_0$.}\label{e:c3:2}
    \end{align}
    Induction hypothesis on safe argument positions $i$ gives:
    \begin{align}
      u & \gppv[2\cdot\size{t}-1] {v_{i,j}} 
      && \text{for all $i = m+1,\dots,m+n$, $i \not=i_0$ and $j = 1,\dots,j_{i}$.} \label{e:c3:3}
    \end{align}
Due to Lemma~\eref{l:int:len}{S}, $\len(\ints(t\sigma)) \leqslant \size{t}$. 
Hence property~~\ref{l:er1} follows by $\cpopv{ialst}[2\cdot\size{t}]$
using equations \eqref{e:c3:0}--\eqref{e:c3:3}.
Likewise, Property~\ref{l:er3} follows by an additional use of
$u \cppv{ia}[2\cdot\size{t}-1] \theconst$ and
    \begin{align*}
      \len(\intn(t\sigma)) 
      & \leqslant 2\cdot\size{t} + \max \{ \norm{s_1\sigma}, \dots, \norm{s_{k+l}\sigma}\}
      \\
      & \leqslant 2\cdot\size{t} + \width(\fn(\intn(s_1\sigma), \dots, \intn(s_k\sigma)) \append \NM{s\sigma}) \tpkt
    \end{align*}
Here the first inequality follows by Lemma~\eref{l:int:len}{gpop}.
For Property~\ref{l:er2} we proceed as above, but strengthen inequality~\eqref{e:c3:1} 
to $u \gppv[2\cdot\size{t}-1] {v_{i_0,j_0}}$.


  \item[\dcase{$s \cpop{ep} t$}]
    Then $t = g(\pseq[m][n]{t})$ where $f \ep g$.
    Further, the assumption gives
    $\mset{\seq[k]{s}} \gpopmul \mset{\seq[m]{t}}$
    and $\mset{\seq[k+l][k+1]{s}} \geqpopmul \mset{\seq[m+n][m+1]{t}}$.
    Hence $t \not \in \termsbelow$ and Property~\ref{l:er2} is vacuously 
    satisfied. 
    We prove Properties~\ref{l:er1} and~\ref{l:er3}.
    Using $s_i \in \Val$ for all normal argument positions $i = 1,\dots,m$
    and employing Lemma~\ref{l:gpop:val} we see that
    $\mset{\seq[k]{s}} \gpopmul \mset{\seq[m]{t}}$ implies
    \begin{equation*}
      \mset{\intn(s_1\sigma), \dots, \intn(s_k\sigma)} \mextension{\gpopv[2\cdot\size{t}-1]} \mset{\intn(t_1\sigma), \dots, \intn(t_m\sigma)}
      \tpkt
    \end{equation*}
    Hence due to 
    $\fn \ep \gn$ and $m \leqslant \size{t} \leqslant 2\cdot\size{t} - 1$
    we obtain:
    \begin{equation}
      \label{eq:root:ep}
      \fn(\intn(s_1\sigma), \dots, \intn(s_k\sigma)) \cpopv{ep}[2\cdot\size{t} -1] \gn(\intn(t_1\sigma), \dots, \intn(t_m\sigma))
      \tpkt
    \end{equation}
Assumption $\mset{\seq[k+l][k+1]{s}} \geqpopmul \mset{\seq[m+n][m+1]{t}}$ together with
    $s_i \in \Val$ for all $i = k+1,\dots,k+l$ gives
    $t_j \in \Val$. As a consequence we have $\ints(t_j\sigma) = \nil$ for all
    $j=m+1,\dots,m+n$ and we obtain:
    $$
    \fn(\intn(s_1\sigma), \dots, \intn(s_k\sigma)) \cpopv{ialst}[2\cdot\size{t}] 
    \lst{\gn(\intn(t_1\sigma), \dots, \intn(t_m\sigma))} = \ints(t\sigma) 
    $$
    which concludes the argument for property~\ref{l:er1}.
    For property~\ref{l:er3}, 
    we see that the order constraints on safe arguments 
    imply $\norm{s\sigma} \geqslant \norm{t\sigma}$. 
    Thus $\NM{s\sigma} \geqpopv[2\cdot\size{t}] \NM{t\sigma}$, 
    using this and Equation~\eqref{eq:root:ep} we obtain 
    $$\fn(\intn(s_1\sigma), \dots, \intn(s_k\sigma)) \append \NM{s\sigma} 
    \cpopv{ms}[2\cdot\size{t}]  \gn(\intn(t_1\sigma), \dots, \intn(t_m\sigma)) \append \NM{t\sigma} = \intn(t\sigma)
    \tkom
    $$
    by Lemma~\eref{l:approx}{kmon} and Lemma~\eref{l:approx}{subseq}.\qedhere
  \end{description}
\end{proof}

\begin{lemma}
  \label{l:embed:ctx}
  Let $\ell \geqslant \max\{\ar(\fn) \mid \fn \in \FSn \}$ and $s,t \in \TERMS$. Then for $\intq \in \{\intn,\ints\}$, 
  $$
  \intq(s) \gpopv[\ell] \intq(t) \quad \IImp \quad \intq(C[s]) \gpopv[\ell] \intq(C[t]) \tpkt
  $$
\end{lemma}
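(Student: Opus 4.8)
The plan is to prove Lemma~\ref{l:embed:ctx} by induction on the context $C$, reducing to a single layer of context and then invoking the structural properties of predicative interpretations together with the subsequence-compatibility of $\gpopv[\ell]$ (Lemma~\eref{l:approx}{subseq}). First I would handle the base case $C = \Box$, where the claim is immediate since $\intq(C[s]) = \intq(s) \gpopv[\ell] \intq(t) = \intq(C[t])$ by hypothesis. For the inductive step, write $C = f(\pseq[k][l]{u})[\,\cdot\,]$ where the hole occurs in some argument $u_i$; that is, $C[s] = f(u_1,\dots,u_{i-1}, D[s], u_{i+1}, \dots, u_{k+l})$ with $D$ a strictly smaller context, and the induction hypothesis gives $\intq(D[s]) \gpopv[\ell] \intq(D[t])$. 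It then suffices to prove the one-step claim: if $\intq(p) \gpopv[\ell] \intq(q)$ then $\intq(f(\dots,p,\dots)) \gpopv[\ell] \intq(f(\dots,q,\dots))$, where $p$ sits in argument position $i$, with the position being either normal or safe.

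The key observation is how $\ints$ and $\intn$ act on a term with a distinguished argument. Unfolding Definition~\ref{d:pi}, for a non-normal-form term $f(\pseq[k][l]{w})$ we have
\begin{equation*}
  \ints(f(\pseq[k][l]{w})) = \lst{\fn(\intn(w_1),\dots,\intn(w_k))} \append \ints(w_{k+1}) \append \cdots \append \ints(w_{k+l}) \tkom
\end{equation*}
and $\intn$ appends a tally sequence $\NM{\cdot}$ to this. I would split on whether position $i$ is safe or normal. If $i$ is safe (so $i = k+j$ for some $j$), then $\ints(f(\dots,p,\dots))$ and $\ints(f(\dots,q,\dots))$ agree everywhere except that the subsequence $\ints(p)$ is replaced by $\ints(q)$; since $\ints(p) \gpopv[\ell] \ints(q)$ by hypothesis (note $\intq = \ints$ on the relevant argument; if $\intq = \intn$ restricted to a safe argument one first peels off $\NM{\cdot}$, which is unchanged since $\norm{\cdot}$ of the whole term is unaffected by a change below a safe position — actually $\norm{\cdot}$ \emph{can} be affected, so one uses $\intn(p) \gpopv[\ell] \intq(q)$ only as given in the hypothesis), Lemma~\eref{l:approx}{subseq} with $c_1 = \lst{\fn(\dots)} \append \ints(w_{k+1}) \append \cdots$ and $c_2$ the tail immediately gives the result, and a change by $\eqi$ — if the hypothesis only yields $\geqpopv[\ell]$ — is handled by Lemma~\eref{l:approx}{modeqi}. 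If $i$ is normal (so $i \leqslant k$), then the interpretations differ only inside the head term $\fn(\dots)$, at argument position $i$, where $\intn(p)$ is replaced by $\intn(q)$; here I apply case $\cpopv{st}[\ell]$ is wrong — rather one uses that $\gpopv[\ell]$ (via $\gppv[\ell]$, case $\cpopv{ms}$ for sequences inside) is a congruence on argument positions of $\fn$, which follows because $\intn(p) \gpopv[\ell] \intn(q)$ propagates through the single-argument multiset comparison; more carefully, $\fn(\dots,\intn(p),\dots) \gpopv[\ell] \fn(\dots,\intn(q),\dots)$ should be obtained from the fact that $\gpopv[\ell]$ restricted to terms with the same head is compatible with replacing one argument by a smaller one — this is a basic property of recursive path orders with multiset status and is essentially Lemma~\eref{l:approx}{subseq} applied to the sequence arguments of $\fn$. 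Then wrapping this in $\lst{\cdot} \append (\text{unchanged tail})$ and possibly re-appending the (now possibly different) tally $\NM{\cdot}$ finishes the case, using that $\norm{f(\dots,p,\dots)} \geqslant \norm{f(\dots,q,\dots)}$ need not hold in general — but since only normal positions changed and $\norm{\cdot}$ ignores normal positions, in fact $\norm{f(\dots,p,\dots)} = \norm{f(\dots,q,\dots)}$, so the tally is literally unchanged.

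The main obstacle I anticipate is the normal-position case: establishing that $\gpopv[\ell]$ is compatible with replacing a single argument of a symbol $\fn$ by a $\gpopv[\ell]$-smaller one requires a small argument about the internal structure of $\gppv[\ell]$/$\gpopv[\ell]$ on terms-with-sequence-arguments, essentially that one may "dig into" the sequence arguments. This is standard for path orders but must be stated carefully because the order here has the nonstandard side conditions on widths and the parameter $\ell$. The cleanest route is to observe that $\ints(p) \gpopv[\ell] \ints(q)$ or $\intn(p) \gpopv[\ell] \intn(q)$ is a comparison of \emph{sequences}, handled by case $\cpopv{ms}[\ell]$, and that placing these as the $i$-th argument of $\fn$ and comparing via the multiset extension underlying $\cpopv{ep}[\ell]$ — with $\fn \ep \fn$ — preserves the descent, the width side-condition being trivially met since the arity $\ar(\fn) \leqslant \ell$ by hypothesis and the number of arguments does not change. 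I would also double-check the normal-form corner case: if $f(\dots,p,\dots) \in \NF(\RS)$ then also $f(\dots,q,\dots) \in \NF(\RS)$ is false in general, but since by hypothesis $\intq(s) \gpopv[\ell] \intq(t)$ with $\ints$ of a normal form being $\nil$, the hypothesis already forces $s \notin \NF(\RS)$, hence $C[s] \notin \NF(\RS)$, so the "otherwise" branch of Definition~\ref{d:pi} applies throughout and no degenerate case arises.
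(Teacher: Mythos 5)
Your overall strategy coincides with the paper's: induction on the context, reduction to a single layer $f(\dots,p,\dots)$ versus $f(\dots,q,\dots)$, Lemma~\eref{l:approx}{subseq} for the safe-position case and the multiset clause $\cpopv{ep}$ with $\fn \ep \fn$ (followed by wrapping in $\lst{\cdot}$ and re-appending the unchanged tail) for the normal-position case. Your normal-argument case is sound, including the observation that $\norm{\cdot}$ ignores normal positions so the outer tally is literally unchanged, and the width side condition is indeed discharged by $\ar(\fn) \leqslant \ell$.

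The genuine gap is in the case that is actually the delicate one, namely a \emph{safe} argument position under the interpretation $\intn$. You notice mid-sentence that $\norm{\cdot}$ of the enclosing term \emph{can} change when a safe argument is rewritten, but you never resolve the resulting mismatch of the appended tallies: the hypothesis $\intn(p) \gpopv[\ell] \intn(q)$ compares $\ints(p)\append\NM{p}$ with $\ints(q)\append\NM{q}$, whereas the goal compares sequences ending in $\NM{f(\dots,p,\dots)}$ and $\NM{f(\dots,q,\dots)}$, and the right-hand tally may well be \emph{longer} than $\NM{q}$ relative to the left. The paper closes exactly this hole: if $\norm{f(\dots,p,\dots)} \geqslant \norm{f(\dots,q,\dots)}$ one is done by Lemma~\eref{l:approx}{subseq} alone; otherwise the changed position must be safe and must realise the new maximum, so $\norm{f(\dots,q,\dots)} = \norm{q}+1$ while $\norm{f(\dots,p,\dots)} \geqslant \norm{p}+1$. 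Appending a single $\theconst$ to both sides of the hypothesis (again Lemma~\eref{l:approx}{subseq}) gives $\ints(p)\append\NM{p}\append\theconst \gpopv[\ell] \ints(q)\append\NM{q}\append\theconst = \ints(q)\append\NM{f(\dots,q,\dots)}$, and the surplus occurrences of $\theconst$ needed to pad the left-hand side up to $\NM{f(\dots,p,\dots)}$ only strengthen it, yielding $\ints(p)\append\NM{f(\dots,p,\dots)} \gpopv[\ell] \ints(q)\append\NM{f(\dots,q,\dots)}$; the claim then follows by Lemmas~\eref{l:approx}{modeqi} and~\eref{l:approx}{subseq}. Without this tally bookkeeping your safe/$\intn$ case does not go through.
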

\begin{proof}
We proceed by induction on the context $C$. 
It suffices to consider the inductive step.
Consider terms $s = f(s_1,\dots,s_i, \dots, s_{k+l})$ and 
$t = f(s_1,\dots,t_i, \dots, s_{k+l})$. We restrict our attention
the predicative interpretation $\intn$ and show $\intn(f(s_1,\dots,s_i, \dots, s_{k+l})) \gpopv[\ell] \intn(f(s_1,\dots,t_i, \dots, s_{k+l}))$, whenever 
$\intn(s_i) \gpopv[\ell] \intn(t_i)$. 

Recall that $\intn(s) = \ints(s) \append \NM{s}$ and $\intn(t) = \ints(t) \append \NM{t}$.
If $\norm{s} \geqslant \norm{t}$ then $\intn(s) \gpopv[l] \intn(t)$ follows 
from $\ints(s) \gpopv[l] \ints(t)$ and Lemma~\eref{l:approx}{subseq}.
Hence suppose $\norm{s} < \norm{t}$.
We consider only the case where $t \not \in \NF$.
The assumption $\norm{s} < \norm{t}$ 
implies that $i$ is a safe argument position of $f$. 
Hence we obtain:
\begin{align*}
        \intn(s) & = \lst{\fn(\intn(s_1), \dots, \intn(s_k))} 
                         \append \ints(s_{k+1}) \append \cdots \append \ints(s_i) \append \cdots \append \ints(s_{k+l}) \append \NM{s} 
\intertext{and} 
        \intn(t) & = \lst{\fn(\intn(s_1), \dots, \intn(s_k))} 
                         \append \ints(s_{k+1}) \append \cdots \append \ints(t_i) \append \cdots \append \ints(s_{k+l})  \append \NM{t} 
       \tpkt
\end{align*}
By definition we have $\norm{s_i} < \norm{s}$. This together
with $\norm{s} < \norm{t}$ yields $\norm{t} = \norm{t_i} + 1$ 
by the shape of $s$ and $t$.
Using Lemma~\eref{l:approx}{subseq} and the assumption $\intn(s_i) \gpopv[\ell] \intn(t_i)$
we obtain:
\begin{equation*}
        \ints(s_i) \append \NM{s_i} \append \theconst \gpopv[\ell] \ints(t_i) \append \NM{t_i} \append \theconst 
        \tpkt
\end{equation*}
From this we have $\ints(s_i) \append \NM{s} \gpopv[\ell] \ints(t_i) \append \NM{t}$
and thus due to Lemma~\eref{l:approx}{modeqi}
and Lemma~\eref{l:approx}{subseq} we obtain 
$\intn(s) \gpopv[\ell] \intn(t)$.\qedhere
\end{proof}

We have established our first main result.
\begin{proof}[Proof of Theorem~\ref{t:popstar}]
  Let $\RS$ be a predicative recursive TRS and fix an 
  arbitrary basic term $s = f(\pseq[m][n]{u})$. 
  Set the parameter $\ell$ as follows:
  \begin{equation*}
    \ell \defsym \max\{\ar(\fn) \mid \fn \in \FSn \} \cup \{2\cdot\size{r} \mid {l \to r} \in \RS\} \tpkt
  \end{equation*}
  As $\FSn$ and $\RS$ are finite, $\ell$ is well-defined.
  Consider a maximal $\RS$-derivation
  $$
  \mparbox[c]{1cm}{s} 
  \mparbox[c]{1cm}{\irew[\RS]} \mparbox[c]{1cm}{s_1} 
  \mparbox[c]{1cm}{\irew[\RS]} \mparbox[c]{1cm}{s_2} 
  \mparbox[c]{1cm}{\irew[\RS]} \mparbox[c]{1cm}{\cdots}
  \mparbox[c]{1cm}{\irew[\RS]} \mparbox[c]{1cm}{s_k\tkom} 
  $$
  starting from an arbitrary term $s$, that is, $k = \dheight[\RS](s)$.
  Using Lemma~\ref{l:embed:root} together with Lemma~\ref{l:embed:ctx} $k$-times 
  we get 
  $$
   \mparbox[c]{1cm}{\ints(s)} 
   \mparbox[c]{1cm}{\gpopv[\ell]} \mparbox[c]{1cm}{\ints(s_1)} 
   \mparbox[c]{1cm}{\gpopv[\ell]} \mparbox[c]{1cm}{\ints(s_2)}
   \mparbox[c]{1cm}{\gpopv[\ell]} \mparbox[c]{1cm}{\cdots}
   \mparbox[c]{1cm}{\gpopv[\ell]} \mparbox[c]{1cm}{\ints(s_k)}
   \tpkt
  $$
  As a consequence, we have $k \leqslant \Slow[\ell](\ints(s))$ by definition
of $\Slow[\ell]$ and thus:
\begin{equation*}
\dheight(s, \irew) \leqslant \Slow[\ell](\ints(s)) 
= \bigO\bigl((\max_{i=1}^m \depth(u_i))^{d}\bigr)
\tkom
\end{equation*}
where the asymptotic estimation follows by Corollary~\ref{c:pop}.
Note that the degree $d$ depends only on~$\ell$.
\end{proof}


\section{An Order-Theoretic Characterisation of the Polytime Functions}\label{s:icc}

We now present the application of polynomial path orders
in the context of \emph{implicit computational complexity}.
As by-product of Proposition~\ref{p:invariance} and Theorem~\ref{t:popstar} we immediately obtain
that $\POPSTAR$ is \emph{sound} for $\FNP$ respectively $\FP$.
\begin{theorem}\label{t:icc:soundness}
  Let $\RS$ be a predicative recursive (constructor) TRS.\@ 
  For every relation $\sem{f}$ defined by $\RS$, 
  the function problem $F_f$ associated with $\sem{f}$ is in $\FNP$.
  Moreover, if $\RS$ is confluent then $\sem{f} \in \FP$.
\end{theorem}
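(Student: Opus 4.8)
The plan is to derive the statement as an immediate corollary of Theorem~\ref{t:popstar} and Proposition~\ref{p:invariance}; all the genuine work has already been done. First I would record that, by Theorem~\ref{t:popstar}, a predicative recursive constructor TRS $\RS$ has \emph{polynomial innermost runtime complexity}. Concretely, Theorem~\ref{t:popstar} bounds the innermost derivation height of any basic term $f(\svec{u}{v})$ by a polynomial (depending only on $\RS$ and $\FS$) in the maximal depth of the normal arguments $\vec{u}$; since $\depth(t) \leqslant \size{t}$ for every term $t$, this is in particular a polynomial bound in $\size{f(\svec{u}{v})}$, so $\rc[\RS]$ is bounded by a polynomial --- which is exactly the final assertion of Theorem~\ref{t:popstar}.

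Second, I would feed this into Proposition~\ref{p:invariance}. That proposition --- which itself rests on the polynomial invariance of the unitary cost model of (innermost) rewriting, adapted as in~\cite{AM10b}, together with Proposition~\ref{p:fnp} --- states that whenever $\RS$ has polynomial runtime, every function problem associated with a relation $\sem{f}$ defined by $\RS$ lies in $\FNP$, and that $\sem{f}$ is a (partial) function in $\FP$ when $\RS$ is confluent. Applying it to the relations $\sem{f}$ of Definition~\ref{d:computation} (which already incorporate the set $\NA$ of non-accepting patterns) yields that $F_f \in \FNP$, and $\sem{f} \in \FP$ under confluence. Recall that confluence is in particular implied by orthogonality, so this also covers the orthogonal case relevant to Section~\ref{s:icc}.

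Since both ingredients are available, there is no real obstacle in this proof; the delicate parts were the predicative embedding into $\gpopv[\ell]$ underlying Theorem~\ref{t:popstar} and the invariance result behind Proposition~\ref{p:invariance}. The only point that needs a moment's care is to confirm that the ``polynomial runtime'' hypothesis of Proposition~\ref{p:invariance} refers to precisely the innermost runtime complexity function $\rc[\RS]$ delivered by Theorem~\ref{t:popstar} --- which it does, given the conventions fixed in Section~\ref{s:basics} (call-by-value, i.e.\ innermost, semantics together with the unitary cost model) --- and that $\RS$ is finite, as assumed throughout the paper.
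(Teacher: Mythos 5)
Your proposal is correct and matches the paper's own argument exactly: the paper derives Theorem~\ref{t:icc:soundness} as an immediate by-product of Theorem~\ref{t:popstar} (polynomial innermost runtime complexity of predicative recursive TRSs) combined with Proposition~\ref{p:invariance}. The extra remarks on $\depth(t)\leqslant\size{t}$ and on the innermost/unitary-cost conventions are sound and only make explicit what the paper leaves implicit.
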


Although it is decidable whether a TRS $\RS$ is predicative recursive (we 
present a sound and complete automation in Section~\ref{s:exps}), 
confluence is undecidable in general. To get a decidable result for $\FP$, 
one can replace confluence by an decidable criteria, for instance orthogonality. 

We will now also establish that \POPSTAR\ is \emph{complete} for $\FP$, that is, 
every function $f \in \FP$ is computed by some confluent (even orthogonal) 
predicative recursive TRS.\@
For this we employ Beckmann and Weiermann's \emph{term rewriting characterisation} of the 
Bellantoni and Cook's class $\B$.

\begin{definition}{\cite[Definition~2.2]{BW96}}
\label{d:Rb}
For each $k,l \in \N$ the set of function symbols $\Fb^{k,l}$ 
with $k$ normal and $l$ safe argument positions is the 
least set of function symbols such that
\begin{enumerate}[labelsep=*,leftmargin=*]
\item $\epsilon \in \Fb^{0,0}$, $\mS_1,\mS_2 \in \Fb^{0,1}$, $\m{P} \in \Fb^{0,1}$, $\m{C} \in \Fb^{0,4}$
  and $\m{I}^{k,l}_j, \m{O}^{k,l} \in \Fb^{k,l}$, where $j = 1, \dots, k+l$; 
\item if $\vec{r} = \seq[m]{r} \in \Fb^{k,0}$, $\vec{s} = \seq[n]{s} \in \Fb^{k,l}$
  and $h \in \Fb^{m,n}$ then $\m{SC}[h,\vec{r}, \vec{s}] \in \Fb^{k,l}$; 
\item if $g \in \Fb^{k,l}$ and $h_1,h_2 \in \Fb^{k+1,l+1}$ then $\m{SRN}[g,h_1,h_2] \in \Fb^{k+1,l}$; 
\end{enumerate}
The \emph{predicative signature} is given by $\Fb \defsym \bigcup_{k,l \in \N} \Fb^{k,l}$.
Only the constant $\epsilon$ and \emph{dyadic successors} $\mS_1,\mS_2$, 
which serve the purpose of encoding natural numbers in binary, 
are constructors. The remaining symbols from $\Fb$ are defined symbols.
\end{definition}

In Figure~\ref{fig:1} we recall from~\cite[Definition~2.7]{BW96} 
the (infinite) schema of rewrite rules $R_\B$ that 
form a term rewriting characterisation of the class $\B$.
Here we let $k,l$ range over $\N$ and set
$\vec{x} \defsym \seq[k]{x}$ and 
$\vec{y} \defsym \seq[l]{y}$ for $k$ respectively $l$ distinct variables. 

\begin{figure}[ht]
\begin{tabular}{@{\quad}r@{~}c@{~}l@{\hspace{-10mm}}r}
  \multicolumn{4}{@{}l}{\textbf{Initial Functions}}\\[2mm]
  $\m{P}(\sn{}{\epsilon})$ & $\to$ & $\epsilon$ \\[1mm]
  $\m{P}(\sn{}{\mS_i(\sn{}{x})})$ & $\to$ & $x$ &  for $i=1,2$ \\[1mm]
  $\m{I}^{k,l}_j(\svec{x}{y})$ & $\to$ & $ x_j$ & for all $j = 1,\dots,k$ \\[1mm]
  $\m{I}^{k,l}_j(\svec{x}{y})$ & $\to$ & $y_{j-k}$ & for all $j =k+1, \dots, l+k$ \\[1mm]
  $\m{C}(\sn{}{\epsilon, y, z_1, z_2})$ & $\to$ & $y$ \\[1mm]
  $\m{C}(\sn{}{\mS_i(\sn{}{x}), y, z_1, z_2})$ & $\to$ & $z_i$ & for $i = 1, 2$ \\[1mm]
  $\m{O}(\svec{x}{y})$ & $\to$ & $\epsilon$ 
  \\[3mm]
  \multicolumn{4}{@{}l}{\textbf{Safe Composition} ($\m{SC}$)} \\[2mm]
    $\m{SC}[h,\vec{r}, \vec{s}](\svec{x}{y})$ & $\to$ & $h(\sn{\vec{r}(\sn{\vec{x}}{})}{\vec{s}(\svec{x}{y})})$
  \\[3mm]
  \multicolumn{4}{@{}l}{\textbf{Safe Recursion on Notation} ($\m{SRN}$)} \\[2mm]
    $\m{SRN}[g,h_1,h_2](\sn{\epsilon, \vec x}{\vec y})$ & $\to$ & $ g(\svec{x}{y})$ \\[1mm]
    $\m{SRN}[g,h_1,h_2](\sn{\mS_i (\sn{}{z}), \vec x}{\vec y}) 
    $ & $\to$ & $h_i(\sn{z, \vec x}{\vec y, \m{SRN}[g,h_1,h_2](\sn{z, \vec x}{\vec y})})$ & for $i = 1, 2$
\end{tabular}
\caption{Term Rewriting Characterisation of the Class~$\B$}
\label{fig:1}
\end{figure}

\begin{remark}
We emphasise that the system $R_\B$ is called \emph{infeasible} in~\cite{BW96}.
Indeed $R_\B$ admits an exponential lower bound on the derivation height if
one considers full rewriting. This is induced by duplicating redexes as
explained already in Example~\ref{ex:dup} on page~\pageref{ex:dup}. 
However, this should rather be understood as a miss-configuration 
of the evaluation strategy, rather than a defect of the rewrite system. 
Indeed, in our completeness argument below, we exploit that
$R_\B$ is predicative recursive, thus the \emph{innermost} runtime complexity
is polynomial, as expected.
\end{remark}

We emphasise that the above rules are all orthogonal and 
the following proposition verifies that $R_\B$ generates only polytime computable functions.
\begin{proposition}\label{prop:Rf}{\cite[Lemma~5.2]{BW96}}
  Let $f \in \FP$. There exists a finite restriction $\RS_f \subsetneq R_\B$
  such that $\RS_f$ computes $f$.
\end{proposition}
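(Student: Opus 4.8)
The plan is to combine the Bellantoni--Cook characterisation $\FP = \B$ with a routine check that the schematic system $R_\B$ faithfully simulates the closure operations of $\B$ under call-by-value (innermost) evaluation. So the genuinely non-trivial content is imported from~\cite{BC92}, and what remains is bookkeeping.

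First I would invoke~\cite{BC92} to get $f \in \B$ and fix one concrete derivation witnessing this: a finite construction whose leaves are initial functions and whose internal nodes are applications of safe composition~\eqref{scheme:sc} and safe recursion on notation~\eqref{scheme:srn}. By Definition~\ref{d:Rb} this construction is literally a ground term $\widehat{f}$ over the predicative signature $\Fb$, each leaf being one of $\epsilon,\mS_1,\mS_2,\m{P},\m{C},\m{I}^{k,l}_j,\m{O}^{k,l}$, each safe-composition node of the form $\m{SC}[h,\vec r,\vec s]$, and each safe-recursion node of the form $\m{SRN}[g,h_1,h_2]$. I would then let $\RS_f$ consist of exactly those rules of $R_\B$ whose left-hand side is headed by a symbol occurring in $\widehat{f}$, and take $\NA = \varnothing$. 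Since $\widehat{f}$ is finite, only finitely many symbols occur, and as $R_\B$ contributes at most two rules per symbol, $\RS_f$ is finite; as $R_\B$ is infinite (its schema ranges over all $k,l \in \N$) we get $\RS_f \subsetneq R_\B$. Being a subset of the orthogonal system $R_\B$, the TRS $\RS_f$ is again left-linear and non-overlapping, hence confluent, so $\sem{g}$ is a partial function for every defined symbol $g$ of $\RS_f$ and ``$\RS_f$ computes $f$'' is unambiguous.

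The heart of the argument is then to show $\sem{\widehat{f}} = f$ modulo the standard coding of dyadic words (natural numbers in binary) as values over $\{\epsilon,\mS_1,\mS_2\}$. I would prove, by induction on the structure of $\widehat{f}$, the stronger statement that for every subterm $\widehat{g}$ of $\widehat{f}$ --- denoting $g \in \B$ with $k$ normal and $l$ safe argument positions --- and for all values $\vec u,\vec v$ coding inputs, one has $\widehat{g}(\svec{u}{v}) \irsn[\RS_f] w$ with $w$ the value coding the output of $g$. The base case is immediate from the rules for the initial functions in Figure~\ref{fig:1}. For a safe-composition node $\widehat{g} = \m{SC}[h,\vec r,\vec s]$ the unique $\m{SC}$-rule rewrites $\widehat{g}(\svec{u}{v})$ to $h(\sn{\vec r(\sn{\vec u}{})}{\vec s(\svec{u}{v})})$; since the reduction is innermost the argument terms are normalised first, which by the induction hypothesis yields the values coding $\vec r(\vec u)$ and $\vec s(\vec u,\vec v)$, after which the induction hypothesis for $h$ supplies the final value. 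For a safe-recursion node $\widehat{g} = \m{SRN}[g_0,h_1,h_2]$ I would perform an inner induction on the length of the dyadic word coded by the recursion argument: when it is $\epsilon$ the first $\m{SRN}$-rule together with the induction hypothesis for $g_0$ applies, and when it is $\mS_i(\sn{}{z})$ the second $\m{SRN}$-rule produces $h_i(\sn{z,\vec x}{\vec y,\m{SRN}[g_0,h_1,h_2](\sn{z,\vec x}{\vec y})})$, whose inner recursive call is first normalised using the inner induction hypothesis and whose result is then fed into $h_i$ via the induction hypothesis for $h_i$.

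I expect the main obstacle to be not any single deep step but the careful packaging of this last part: fixing the coding of data as values precisely, verifying that innermost evaluation really contracts the argument terms of $\m{SC}$ and $\m{SRN}$ before the head redex (so the induction hypotheses are applicable), and checking that along all these reductions no rule outside $\RS_f$ is ever needed --- which holds because every defined symbol that can appear while unfolding $\widehat{f}$ already occurs in $\widehat{f}$. The equivalence $\FP = \B$ itself is used purely as a black box from~\cite{BC92}.
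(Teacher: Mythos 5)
The paper offers no proof of this proposition: it is imported verbatim as Lemma~5.2 of Beckmann and Weiermann~\cite{BW96}, with the Bellantoni--Cook equivalence $\FP=\B$ used as a black box exactly as you do. Your reconstruction is correct and is the standard argument: identifying the $\B$-derivation of $f$ with a symbol of $\Fb$, restricting $R_\B$ to the rules of its sub-symbols (a set closed under unfolding, since the right-hand side for $\m{SC}[h,\vec r,\vec s]$ and $\m{SRN}[g,h_1,h_2]$ only mentions their constituent symbols), and simulating by structural induction with an inner induction on the recursion argument for $\m{SRN}$. The only blemish is the claim that $R_\B$ contributes at most two rules per symbol --- $\m{P}$, $\m{C}$ and $\m{SRN}[g,h_1,h_2]$ each have three --- which is immaterial to finiteness.
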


We arrive at our completeness result.
\begin{theorem}\label{t:icc:completeness}
  For every $f \in \FP$ there exists a finite, orthogonal, and 
  predicative recursive (constructor) TRS $\RS_f$ that computes $f$.
\end{theorem}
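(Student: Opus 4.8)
The plan is to combine Proposition~\ref{prop:Rf} with the observation that the Beckmann--Weiermann system $R_\B$ is predicative recursive. First I would note that by Proposition~\ref{prop:Rf}, given $f \in \FP$ there is a finite $\RS_f \subsetneq R_\B$ computing $f$. Since $\RS_f$ inherits left-linearity and non-overlappingness from $R_\B$ (the left-hand sides in Figure~\ref{fig:1} have distinct variables and their root symbols are pairwise distinct, except for the two $\m{C}$-rules and the two $\m{SRN}$-recursion rules, which are non-overlapping because $\mS_1$ and $\mS_2$ are distinct constructors), $\RS_f$ is orthogonal. It is also a constructor TRS: by Definition~\ref{d:Rb} only $\epsilon,\mS_1,\mS_2$ are constructors, and every left-hand side in Figure~\ref{fig:1} is a basic term.

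The heart of the argument is to exhibit a safe mapping and an admissible precedence witnessing $\RS_f \subseteq {\gpop}$. The safe mapping is already dictated by the notation: for a symbol in $\Fb^{k,l}$ the first $k$ positions are normal and the remaining $l$ are safe, and constructors have all positions safe as required by Definition~\ref{d:safemapping}. For the precedence, I would let $\sp$ be induced by the subterm ordering on the \emph{names} of the defined symbols: $\m{SC}[h,\vec r,\vec s] \sp h, r_1,\dots,r_m, s_1,\dots,s_n$ and $\m{SRN}[g,h_1,h_2] \sp g,h_1,h_2$, together with $\m{SRN}[g,h_1,h_2] \sp \m{SRN}[g,h_1,h_2]$ being excluded (it is irreflexive), and all defined symbols above no constructor is needed except where a right-hand side exposes one. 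This is well-founded because each composition/recursion symbol is syntactically larger than its parameters. Admissibility is immediate since $\ep$ is just syntactic equality on symbol names here, so it never identifies a defined symbol with a constructor.

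Then I would check rule by rule that $\RS_f \subseteq {\gpop}$. The initial-function rules ($\m{P}$, the projections $\m{I}^{k,l}_j$, $\m{C}$, $\m{O}$) are all oriented by $\cpop{st}$: each right-hand side is a safe subterm of a (normal or safe, as appropriate) argument of the left-hand side root — and one must verify that for $\m{P}$ the argument used is safe (it is, as $\m{P} \in \Fb^{0,1}$), and similarly for $\m{C}$ and the projections that the argument position chosen is safe when the symbol is defined, which it is by the indexing in Figure~\ref{fig:1}. The safe composition rule $\m{SC}[h,\vec r,\vec s](\svec{x}{y}) \to h(\sn{\vec r(\sn{\vec x}{})}{\vec s(\svec x y)})$ is oriented by $\cpop{ia}$: one needs $\m{SC}[\dots] \sp h$, then $\m{SC}[\dots] \gsq r_i(\sn{\vec x}{})$ for each normal argument (this holds by $\caseref{\gsq}{ia}$ since $\m{SC}[\dots]\sp r_i$ and $\m{SC}[\dots]\gsq x_j$ for every normal $x_j$ by $\caseref{\gsq}{st}$), and $\m{SC}[\dots] \gpop s_j(\svec x y)$ for each safe argument; the final side condition of $\cpop{ia}$ (at most one safe argument not below $f$ in the precedence) is satisfied because $R_\B$'s composition has no recursive call, so in fact all arguments are below $f$. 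The two $\m{SRN}$ rules: the base case is again $\cpop{st}$ (right-hand side $g(\svec x y)$ — wait, it is not a subterm; rather it is oriented by $\cpop{ia}$ using $\m{SRN}[g,h_1,h_2] \sp g$, $\m{SRN}[\dots]\gsq x_j$ and $\m{SRN}[\dots]\gpop y_j$), and the recursion step $\m{SRN}[g,h_1,h_2](\sn{\mS_i(\sn{}z),\vec x}{\vec y}) \to h_i(\sn{z,\vec x}{\vec y, \m{SRN}[g,h_1,h_2](\sn{z,\vec x}{\vec y})})$ is the characteristic case: $\cpop{ia}$ with $\m{SRN}[\dots]\sp h_i$, on normal arguments $z,\vec x$ use $\gsq$ via $\caseref{\gsq}{st}$ (since $z$ is a subterm of $\mS_i(\sn{}z)$ sitting in a normal position), on the safe arguments $\vec y$ use $\cpop{st}$, and on the single remaining safe argument holding the recursive call use $\m{SRN}[g,h_1,h_2](\sn{\mS_i(\sn{}z),\vec x}{\vec y}) \cpop{ep} \m{SRN}[g,h_1,h_2](\sn{z,\vec x}{\vec y})$, which holds by $\caseref{\gpop}{ep}$ because the normal arguments strictly decrease ($\mset{\mS_i(\sn{}z),\vec x} \gpopmul \mset{z,\vec x}$ via $\mS_i(\sn{}z)\csq{st} z$, hence $\mS_i(\sn{}z)\cpop{st} z$) and the safe arguments $\vec y$ are unchanged. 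The ``at most one safe argument not in $\TA(\sigbelow{f}{\FS},\VS)$'' clause is met precisely because there is exactly one recursive call.

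Having established $\RS_f \subseteq {\gpop}$ for this admissible precedence and safe mapping, $\RS_f$ is predicative recursive by definition; together with orthogonality, the constructor property, and finiteness (from Proposition~\ref{prop:Rf}), and with the fact that $\RS_f$ computes $f$, this is exactly the statement of the theorem. I expect the main obstacle to be the bookkeeping in the $\m{SRN}$ recursion rule — specifically making sure the data-tiering constraints line up so that the recursive call genuinely sits in a safe position of $h_i$ (it does, as $h_i \in \Fb^{k+1,l+1}$ and the recursion uses its last safe slot) and that $\caseref{\gpop}{ep}$ applies with the normal-argument multiset strictly decreasing; this is where predicative recursion and safe recursion on notation must be matched up carefully. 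One should also double-check that no right-hand side of $R_\B$ contains more than one recursive occurrence, so that the linearity restriction on $\cpop{ia}$ is never violated — this is clear from Figure~\ref{fig:1}.
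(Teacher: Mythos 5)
Your proposal is correct and follows essentially the same route as the paper: invoke Proposition~\ref{prop:Rf}, read the safe mapping off the indexing of $\Fb^{k,l}$, take a precedence in which each $\m{SC}[\dots]$ and $\m{SRN}[\dots]$ symbol lies strictly above its parameters (the paper realises this via the numeric measure $\lh$, you via the transitive closure of the parameter relation---interchangeable here), and then verify compatibility rule by rule. The one slip is the rule $\m{O}(\svec{x}{y}) \to \epsilon$: its right-hand side is not a subterm of any argument (all arguments are variables), so it cannot be handled by $\cpop{st}$ and must instead be oriented by $\cpop{ia}$, which requires $\m{O} \sp \epsilon$ in the precedence; your parenthetical remark that defined symbols need to sit above constructors wherever a right-hand side exposes one addresses exactly this, so the repair is already implicit in your write-up.
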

\begin{proof}
  Take the finite TRS $\RS_f \subsetneq R_\B$ from Proposition~\ref{prop:Rf} that computes $f$.
  Obviously $\RS_f$ is orthogonal hence confluent. 

  It remains to verify that $\RS_f$ is compatible with some instance $\gpop$.
  To define $\gpop$ we use the separation of normal from safe argument positions
  as indicated in the rules.
  To define the precedence underlying $\gpop$, we 
  define a mapping $\lh$ from the signature of $\Fb$ 
  into the natural numbers as follows:
  \begin{itemize}
  \item $\lh(f) \defsym 0$ if $f$ is one of $\epsilon$, $\mS_0$, $\mS_1$, $\m{C}$, $\m{P}$, $\m{I}^{k,l}_j$ or $\m{O}^{k,l}$;
  \item $\lh(\m{SC}[h,\vec{r}, \vec{s}]) \defsym 1 + \lh(h) + \sum_{r \in \vec{r}} \lh(r) + \sum_{s \in \vec{s}}  \lh(s)$;
  \item $\lh(\m{SRN}[g,h_1,h_2]) \defsym 1 + \lh(g) + \lh(h_1) + \lh(h_2)$. 
  \end{itemize}
  Finally for each pair of function symbol $f$ and $g$ occurring in $\RS_f$, we set
  $f \sp g$ if $\lh(f) > \lh(g)$.
  Then $\sp$ defines an admissible precedence. 
 
 It is straight forward to verify that $\RS_f \subseteq {\gpop}$ where
  $\gpop$ is based on the precedence~$\sp$ and the safe mapping as indicated in
  Definition~\ref{d:Rb}.
\end{proof}

Observe that compatibility of $R_\B$ with \POPSTAR\ together with 
Theorem~\ref{t:popstar} yields a 
strengthened version of Theorem~4.3 in~\cite{BW96},
as due to our result the innermost derivation height is polynomially
bounded in the depth of the normal arguments only. The latter result can 
be obtained directly, by a simplification of 
the semantic argument given in~\cite[Section~4]{BW96}, see~\cite{AM04}.

By Theorem~\ref{t:icc:soundness} and Theorem~\ref{t:icc:completeness} we obtain a
precise characterisation of the class of polytime computable functions and
thus arrive at the second main result of the paper.

\begin{corollary}\label{c:FP}
  The following class of functions are equivalent:
  \begin{enumerate}[labelsep=*,leftmargin=*]
  \item The class of functions computed by confluent predicative recursive (constructor) TRSs.
  \item The class of polytime computable functions $\FP$. 
  \end{enumerate}
\end{corollary}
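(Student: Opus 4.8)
The plan is to derive the corollary directly from the two main results of this section, namely Theorem~\ref{t:icc:soundness} and Theorem~\ref{t:icc:completeness}, together with the fact recalled in Section~\ref{s:basics} that every orthogonal TRS is confluent. So the proof amounts to unfolding the definitions and chaining these two theorems; no new technical machinery is needed.

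For the inclusion of class~(1) in class~(2), I would take an arbitrary confluent predicative recursive (constructor) TRS $\RS$ together with an arbitrary defined symbol $f$ of $\RS$. Since $\RS$ is confluent, the relation $\sem{f}$ is a (partial) function, and the second part of Theorem~\ref{t:icc:soundness} immediately gives $\sem{f} \in \FP$. As $f$ and $\RS$ were arbitrary, every function computed by a confluent predicative recursive constructor TRS lies in $\FP$, which establishes~(1)~$\subseteq$~(2).

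For the converse, I would fix an arbitrary $f \in \FP$ and apply Theorem~\ref{t:icc:completeness} to obtain a finite, orthogonal, predicative recursive (constructor) TRS $\RS_f$ that computes $f$. Since orthogonality implies confluence, $\RS_f$ is in particular a confluent predicative recursive constructor TRS, so $f$ is a function computed by a TRS of the kind described in statement~(1); hence~(2)~$\subseteq$~(1). Combining the two inclusions yields the claimed equivalence.

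I do not expect any real obstacle here, as the entire substance has already been discharged in Theorems~\ref{t:icc:soundness} and~\ref{t:icc:completeness}. The only point worth keeping in mind is that confluence is exactly what makes $\sem{f}$ a genuine (partial) function, so that the phrase ``the function computed by $\RS$'' in statement~(1) is well-defined; this is precisely the hypothesis under which the $\FP$ half of Theorem~\ref{t:icc:soundness} applies, and in the completeness direction it is supplied for free by orthogonality.
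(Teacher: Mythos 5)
Your proposal is correct and matches the paper exactly: the corollary is obtained by combining Theorem~\ref{t:icc:soundness} (confluence makes $\sem{f}$ a function in $\FP$) with Theorem~\ref{t:icc:completeness} (orthogonality, hence confluence, of the witnessing TRS $\RS_f$). No further commentary is needed.
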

We note that it is not decidable whether a rewrite system is confluent. However, 
to get a decidable characterisation we could replace confluence by orthogonality, compare 
Theorem~\ref{t:icc:completeness}.


\section{A Non-Trivial Closure Property of the Polytime Computable Functions}\label{s:popstarps}

Bellantoni~\cite{B:92} already observed that the class $\B$ is closed under
\emph{predicative recursion on notation with parameter substitution} (scheme \eqref{scheme:srnps}). 
Essentially this recursion scheme allows substitution on \emph{safe} argument positions. More precise, 
a new function $f$ is defined by the equations
\begin{equation}
  \label{scheme:srnps} \tag{\ensuremath{\mathsf{SRN_{PS}}}}
  \begin{aligned}
   f(\sn{0,\vec{x}}{\vec{y}}) & = g(\sn{\vec{x}}{\vec{y}}) \\
   f(\sn{2z + i,\vec{x}}{\vec{y}}) & = 
   h_i(\sn{z,\vec{x}}{\vec{y},f(\sn{z,\vec{x}}{\vec{p}(\svec{x}{y})})}) 
   \quad i \in \set{1,2} \tpkt
  \end{aligned}
\end{equation}

Bellantoni's result has been reobtained by Beckmann and 
Weiermann~\cite[Corollary~5.4]{BW96} employing a similar
term rewriting characterisation. In this section, we introduce
the \emph{polynomial path order with parameter substitution} (\emph{\POPSTARP} for short).
\POPSTARP\ provides an order-theoretic characterisation of
predicative recursion with parameter substitution, that again precisely
captures the class $\FP$. Furthermore \POPSTARP\ induces polynomial
innermost runtime complexity. As a consequence, we obtain yet another proof
of Bellantoni's result.

The next definition introduces $\POPSTARP$.
It is a variant of $\POPSTAR$, where 
clause $\cpop{ep}$ has been modified and allows computation at safe argument positions.
\begin{definition}\label{d:gpopps}
  Let ${\qp}$ denote a precedence.  
  Consider terms $s, t \in \TERMS$ such that $s = f(\pseq[k][l]{s})$.
  Then $s \gpopps t$ if one of the following alternatives holds:
  \begin{enumerate}[labelsep=*,leftmargin=*]
  \item\label{d:gpopps:st} $s_i \geqpopps t$ for some $i \in \{1,\dots,k+l\}$, or
  \item\label{d:gpopps:ia} $f \in \DS$, $t = g(\pseq[m][n]{t})$ where $f \sp g$ 
    and the following conditions hold:
    \begin{itemize}
    \item $s \gsq t_j$ for all normal argument positions $j = 1,\dots,m$;
    \item $s \gpopps t_j$ for all safe argument positions $j = m+1,\dots,m+n$;
    \item $t_j \not\in \TA(\sigbelow{f}{\FS},\VS)$ for at most one safe argument position $j \in \{m+1,\dots,m+n\}$;
    \end{itemize}
  \item\label{d:gpopps:ep} $f \in \DS$, $t = g(\pseq[m][n]{t})$ where $f \ep g$
    and the following conditions hold:
    \begin{itemize}
    \item $\mset{\seq[k]{s}} \gpopmulps \mset{\seq[m]{t}}$;
    \item $s \gpopps t_j$ and $t_j \in \TA(\sigbelow{f}{\FS},\VS)$ for all safe argument positions $j = m+1, \dots, m+n$.
    \end{itemize}
  \end{enumerate}
  Here ${\geqpopps} \defsym {\gpopps \cup \eqis}$.
\end{definition}

The next lemma shows that $\POPSTARP$ extends the analytic power of $\POPSTAR$.
\begin{lemma}\label{l:psextends}
  For any underlying admissible precedence $\qp$, ${\gpop} \subseteq {\gpopps}$.
\end{lemma}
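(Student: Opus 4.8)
The plan is to prove ${\gpop} \subseteq {\gpopps}$ by induction on the derivation of $s \gpop t$ (equivalently, on $\size{s}+\size{t}$, which is convenient for the subterm‑style recursion in the first clause), by a clause‑by‑clause comparison of Definition~\ref{d:gpop} with Definition~\ref{d:gpopps}. The two definitions differ only in how safe arguments are handled in the $\ep$‑case, and both refer to the \emph{same} auxiliary order $\gsq$ and the \emph{same} equivalence $\eqis$, which makes such a comparison feasible. Before the induction I would record the routine remark that the multiset extension construction is monotone in its underlying relation: once the induction hypothesis is available, any strict comparison $s_i \gpop t_j$ that witnesses a multiset step may be replaced by $s_i \gpopps t_j$, so a step of the form $\mset{\seq[k]{s}}\gpopmul\mset{\seq[m]{t}}$ entails $\mset{\seq[k]{s}}\gpopmulps\mset{\seq[m]{t}}$ for the comparisons that actually arise.

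Now fix $s = f(\pseq[k][l]{s})$. If $s \cpop{st} t$, then $s_i \geqpop t$ for some $i$; the induction hypothesis gives $s_i \geqpopps t$, whence $s \gpopps t$ by case~\ref{d:gpopps:st} of Definition~\ref{d:gpopps}, which carries no side conditions. If $s \cpop{ia} t$ with $t = g(\pseq[m][n]{t})$, then $f \sp g$; the condition on normal arguments is already phrased through $\gsq$, the conditions on safe arguments transfer by the induction hypothesis, and the $\TA(\sigbelow{f}{\FS},\VS)$ side condition is identical in both definitions, so $s \gpopps t$ follows from case~\ref{d:gpopps:ia}. The crux is $s \cpop{ep} t$, where $t = g(\pseq[m][n]{t})$, $f \ep g$, $\mset{\seq[k]{s}} \gpopmul \mset{\seq[m]{t}}$ and $\mset{\seq[k+l][k+1]{s}} \geqpopmul \mset{\seq[m+n][m+1]{t}}$. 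Here I would try to reprove $s \gpopps t$ via case~\ref{d:gpopps:ep}. Its first premise $\mset{\seq[k]{s}} \gpopmulps \mset{\seq[m]{t}}$ follows from the first $\gpop$‑premise by the monotonicity remark. For its second premise, the weak multiset domination supplies, for every safe argument $t_j$ of $t$, a safe argument $s_i$ of $s$ with $s_i \geqpop t_j$; the induction hypothesis upgrades this to $s_i \geqpopps t_j$, and then $s \gpopps t_j$ by case~\ref{d:gpopps:st} (which may use any argument position, in particular a safe one).

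The step I expect to be the main obstacle is the remaining half of that second premise: case~\ref{d:gpopps:ep} additionally demands $t_j \in \TA(\sigbelow{f}{\FS},\VS)$ for every safe argument $t_j$ of $t$, a requirement the $\ep$‑case of $\gpop$ does not make explicitly. The intended route is an auxiliary ``no new symbols'' invariant for $\gpop$, proved by a short side induction: if $u \gpop v$ then every function symbol occurring in $v$ has $\rk$ bounded by that of some symbol occurring in $u$ (clause $\cpop{st}$ only reaches a subterm of an argument, clause $\cpop{ia}$ strictly decreases the root, clause $\cpop{ep}$ keeps the root inside its $\ep$‑class). Feeding the dominating safe arguments $s_i$ of $s$ into this invariant, and combining it with Lemma~\ref{l:gpop:val} — which says $\gpop$ collapses to the safe‑subterm relation on values, so that the relevant $s_i$, hence the $t_j$, are pinned down in shape — should yield $t_j \in \TA(\sigbelow{f}{\FS},\VS)$. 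Getting this bookkeeping exactly right, and checking that the strict $\ep$‑comparison on the normal arguments does not spoil it, is where I expect the genuine work to lie; everything else is a mechanical matching of the two definitions.
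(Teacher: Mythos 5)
The paper offers no proof of this lemma (it is stated as immediate), so your argument has to stand on its own. Your handling of $\cpop{st}$ and $\cpop{ia}$ is correct, and you have put your finger on exactly the right difficulty in the $\cpop{ep}$ case, namely the extra requirement $t_j \in \TA(\sigbelow{f}{\FS},\VS)$ on the safe arguments in clause~\ref{d:gpopps:ep}. But the repair you sketch does not close that gap. Your ``no new symbols'' invariant only bounds the symbols of $t_j$ by those of the covering safe argument $s_i$ of $s$, and nothing prevents $s_i$ from containing $f$ itself or a symbol $\ep$-equivalent to $f$; Lemma~\ref{l:gpop:val} does not help unless $s_i$ is a value, which is not guaranteed for arbitrary $s$. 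Indeed, read literally as an inclusion of relations on all terms the statement fails: take $f \in \DS$ with one normal and one safe position, constructors $c,a$, and set $s = f(c(a);f(a;a))$ and $t = f(a;f(a;a))$. Then $s \cpop{ep} t$, since $\mset{c(a)} \gpopmul \mset{a}$ on the normal arguments and the safe arguments are equal, hence $\geqpopmul$-related; but $s \gpopps t$ is underivable, because $\cpopps{ep}$ demands $f(a;a) \in \TA(\sigbelow{f}{\FS},\VS)$, which is false since $f \sp f$ fails, and no other clause of Definition~\ref{d:gpopps} applies.

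What makes the lemma sound in the form in which it is actually used (transferring compatibility of a constructor TRS from $\gpop$ to $\gpopps$) is the restriction to rule orientation. There the term heading any $\cpop{ep}$-comparison is a basic left-hand side, so its safe arguments are values; Lemma~\ref{l:gpop:val} then forces the covered $t_j$ to be values as well, and after normalising the precedence so that constructors lie strictly below all defined symbols --- harmless, since $\gpop$ is blind on constructors --- values do belong to $\TA(\sigbelow{f}{\FS},\VS)$. So the statement you should prove is ``$\RS \subseteq {\gpop}$ implies $\RS \subseteq {\gpopps}$ for constructor TRSs $\RS$'' rather than the unrestricted inclusion; your induction then goes through once you add the invariant that every recursive $\cpop{ep}$-comparison has a basic term (or a value, where no case applies) on the left.
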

Note that \POPSTARP~is strictly more powerful than \POPSTAR, as witnessed by 
the following example.

\begin{example}\label{ex:rsrev}
Consider the constructor TRS $\RSrev$ defining the reversal of
lists in a tail recursive fashion:
\begin{alignat*}{4}
  \rlabel{RSrev:revt:b} &\;& \mrevt(\sn{\nil}{ys}) & \to ys
  & \qquad
  \rlabel{RSrev:revt:r} &&\mrevt(\sn{\mcs(x, xs)}{ys}) & \to \mrevt(\sn{xs}{\mcs(x,ys)}) 
  \\
  \rlabel{RSrev:rev}&& \mrev(\sn{xs}{}) & \to \mrevt(\sn{xs}{\mnil}) 
  \tpkt
\end{alignat*}
It is not difficult to see that $\RSrev$ is compatible with \POPSTARP,
if we use the precedence $\mrev \sp \mrevt \sp \mnil \ep \mcs$.
Note that orientation of rule~\rref{RSrev:revt:r}
breaks down to $\mcs(x, xs) \gpopps xs$ and 
$\mrevt(\sn{\mcs(x, xs)}{ys}) \gpopps \mcs(x,ys)$.
On the other hand, $\gpop$ fails as the corresponding clause $\cpop{ep}$ 
requires $ys \geqpop \mcs(x,ys)$.
\end{example}

Due to Lemma~\ref{l:psextends}, \POPSTARP\ is complete for the class of
polytime computable functions.
To show that it is sound, we prove that 
\POPSTARP~induces polynomially bounded runtime complexity in the sense of Theorem~\ref{t:popstar}.
The crucial observation is that the embedding of $\irew$ into $\gpopv$ does not break
if we relax compatibility constraints to $\RS \subseteq {\gpopps}$.

\begin{lemma}\label{l:embed:root:ps}
  Suppose $s = f(\pseq{s}) \in \Tb$, $t \in \TERMS$ and $\ofdom{\sigma}{\VS \to \Val}$. 
  Then for predicative interpretation $\intq \in \{\ints, \intn\}$
  we have 
  $$
   s \gpopps t \quad \IImp \quad \intq(s\sigma) \gpopv[2\cdot\size{t}] \intq(t\sigma) \tpkt
  $$
\end{lemma}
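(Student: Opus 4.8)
The plan is to mirror the proof of Lemma~\ref{l:embed:root} almost verbatim, exploiting that $\gpopps$ differs from $\gpop$ only in the clause $\cpop{ep}$ (now permitting substitution into safe positions). Fixing $s = f(\pseq{s}) \in \Tb$, $t \in \TERMS$ and $\ofdom{\sigma}{\VS \to \Val}$ — so in particular $\sigma$ maps into $\NF$, since values are normal forms of a constructor TRS — and abbreviating $u \defsym \fn(\intn(s_1\sigma), \dots, \intn(s_k\sigma))$, I would prove simultaneously, by induction on the definition of $\gpopps$, the three strengthened assertions
\begin{enumerate}[labelsep=*,leftmargin=*]
\item $u \gpopv[2\cdot\size{t}] \ints(t\sigma)$;
\item $u \gppv[2\cdot\size{t}] \ints(t\sigma)$ whenever $t \in \TA(\sigbelow{f}{\FS},\VS)$; and
\item $u \append \NM{s\sigma} \gpopv[2\cdot\size{t}] \intn(t\sigma)$.
\end{enumerate}
As $s$ is basic, $\ints$ erases its value arguments, so $\ints(s\sigma) = \lst{u}$ and $\intn(s\sigma) = \lst{u} \append \NM{s\sigma}$; hence assertions~(1) and~(3) yield the lemma for $\intq = \ints$ and $\intq = \intn$ respectively (using Lemma~\eref{l:approx}{bound} to pass from $\lst{u}\gpopv[2\size{t}]\ints(t\sigma)$ to $\ints(s\sigma)\gpopv[2\size{t}]\ints(t\sigma)$, exactly as in the original proof).

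For the cases $s \cpop{st} t$ and $s \cpop{ia} t$ there is essentially nothing new: these clauses of Definition~\ref{d:gpopps} coincide with $\cpop{st}$ and $\cpop{ia}$ of Definition~\ref{d:gpop} up to replacing recursive occurrences of $\gpop$ by $\gpopps$, and the auxiliary results they invoke transfer unchanged. Lemma~\ref{l:embedgsq:root} speaks only of the auxiliary order $\gsq$, common to $\POPSTAR$ and $\POPSTARP$; Lemma~\ref{l:approx} concerns $\gpopv$ and $\gppv$ alone; and Lemma~\ref{l:gpop:val} and Lemma~\ref{l:int:len} hold with $\gpopps$ in place of $\gpop$ by the same arguments. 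For the $\gpopps$-analogue of Lemma~\ref{l:gpop:val} one notes that clause $\cpop{ep}$ of Definition~\ref{d:gpopps} cannot apply when the left-hand side is a value — it requires a defined root symbol and a non-empty multiset of normal arguments — so on values $\gpopps$ still collapses to the safe-subterm relation modulo $\eqis$. For the $\gpopps$-analogue of Lemma~\ref{l:int:len} one repeats the induction, the only additional case being $\cpop{ep}$, discharged by applying the induction hypothesis to the safe arguments $t_j$ (which satisfy $s \gpopps t_j$) together with $\len(\ints(t_j\sigma)) \leqslant \size{t_j}$ from Lemma~\eref{l:int:len}{S}; it is convenient to carry along the auxiliary bound $\norm{t\sigma} \leqslant \max\{\norm{s_1\sigma},\dots,\norm{s_{k+l}\sigma}\} + \size{t}$, which also follows by this induction and from which $\len(\intn(t\sigma)) = \len(\ints(t\sigma)) + \norm{t\sigma} \leqslant \max_i \norm{s_i\sigma} + 2\size{t}$ drops out, so the constant $2\cdot\size{t}$ is preserved. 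Granting this, the cases $\cpop{st}$ and $\cpop{ia}$ reproduce the corresponding cases of the proof of Lemma~\ref{l:embed:root} line for line.

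The only genuinely new case is $s \cpop{ep} t$ of Definition~\ref{d:gpopps}, where $t = g(\pseq[m][n]{t})$, $f \ep g$, $\mset{\seq[k]{s}} \gpopmulps \mset{\seq[m]{t}}$, and for every safe position $j$ one has both $s \gpopps t_j$ and $t_j \in \TA(\sigbelow{f}{\FS},\VS)$; here $t \notin \TA(\sigbelow{f}{\FS},\VS)$, so assertion~(2) is vacuous and only~(1) and~(3) must be shown. Since $s$ is basic its normal arguments $s_1,\dots,s_k$ are values, and as the multiset comparison ranges over normal positions, the $\gpopps$-analogue of Lemma~\ref{l:gpop:val} forces each $t_j$ with $j \leqslant m$ to be a value as well; thus $\ints(t_j\sigma) = \nil$ and $\intn(t_j\sigma) = \natToSeq{\depth(t_j\sigma)}$, and the comparison lifts to $\mset{\intn(s_1\sigma),\dots,\intn(s_k\sigma)} \mextension{\gpopv[2\size{t}-1]} \mset{\intn(t_1\sigma),\dots,\intn(t_m\sigma)}$, whence $u \cpopv{ep}[2\size{t}-1] \gn(\intn(t_1\sigma),\dots,\intn(t_m\sigma)) =: w$ using $\fn \ep \gn$ and $m \leqslant \size{t}$. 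For the safe arguments, each $t_j$ ($j > m$) is a proper subterm of $t$ lying in $\TA(\sigbelow{f}{\FS},\VS)$, so $2\size{t_j} \leqslant 2\size{t}-1$ and the induction hypothesis, in the shape of assertion~(2) for $t_j$, gives $u \gppv[2\size{t_j}] \ints(t_j\sigma)$; unfolding this sequence comparison ($\cppv{ialst}$) and bumping parameters via Lemma~\ref{l:approx} shows that $u$ dominates in $\gppv[2\size{t}-1]$ every element occurring in each $\ints(t_j\sigma)$ — precisely the situation of the $\cpop{ia}$ case in the proof of Lemma~\ref{l:embed:root}. Writing $\ints(t\sigma) = \lst{w} \append \ints(t_{m+1}\sigma) \append \cdots \append \ints(t_{m+n}\sigma)$, assertion~(1) then follows by $\cpopv{ialst}[2\size{t}]$ with $w$ as the unique $\gpopv$-dominated element, the length side condition holding because $\len(\ints(t\sigma)) \leqslant \size{t}$ by Lemma~\eref{l:int:len}{S}.

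Assertion~(3) is the crux, and I expect it to be the main obstacle, precisely because of parameter substitution: unlike in Lemma~\ref{l:embed:root}, where the safe arguments of $t$ are values and hence $\norm{t\sigma} \leqslant \norm{s\sigma}$, here $\norm{t\sigma}$ may exceed $\norm{s\sigma}$, so the tally sequence $\NM{t\sigma}$ cannot be matched against $\NM{s\sigma}$ alone, and its surplus copies of $\theconst$ must be generated by the head term $u$ (which is legitimate since $\fn \sp \theconst$ gives $u \cppv{ia}[k][l] \theconst$). Concretely, I would verify $u \append \NM{s\sigma} \cpopv{ms}[2\size{t}] \intn(t\sigma)$ by splitting $\intn(t\sigma) = \ints(t\sigma) \append \NM{t\sigma}$ and matching $u$ against $\ints(t\sigma) \append \natToSeq{e}$, where $e \defsym \max\{0,\norm{t\sigma}-\norm{s\sigma}\}$ of the copies of $\theconst$ in $\NM{t\sigma}$ are pulled under $u$, while the copies of $\theconst$ in $\NM{s\sigma}$ absorb the rest (excess matched to $\nil$); in the easy case $e = 0$ this is just the original argument. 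The step $u \gpopv[2\size{t}] \ints(t\sigma) \append \natToSeq{e}$ is obtained from the facts above via $\cpopv{ialst}[2\size{t}]$, its length condition reading $\len(\ints(t\sigma)) + e \leqslant \width(u) + 2\size{t}$, and together with the outer width condition of $\cpopv{ms}[2\size{t}]$, namely $\len(\intn(t\sigma)) \leqslant \width(u \append \NM{s\sigma}) + 2\size{t}$, both reduce to $\len(\intn(t\sigma)) \leqslant \max\{\norm{s_1\sigma},\dots,\norm{s_{k+l}\sigma}\} + 2\size{t}$ — the adapted Lemma~\eref{l:int:len}{gpop} — once one observes that $\norm{s_i\sigma} \leqslant \width(u)$ for normal $i$ (as $\natToSeq{\norm{s_i\sigma}}$ is part of $\intn(s_i\sigma)$) and $\norm{s_i\sigma} < \norm{s\sigma} = \width(\NM{s\sigma})$ for safe $i$, so that $\max_i \norm{s_i\sigma} \leqslant \width(u \append \NM{s\sigma})$. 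With these estimates, assertion~(3) closes by $\cpopv{ms}[2\size{t}]$, completing the induction.
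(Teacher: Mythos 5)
Your proposal is correct and follows essentially the same route as the paper: re-verify Lemma~\ref{l:int:len} for $\gpopps$, run the induction of Lemma~\ref{l:embed:root} with the same three strengthened assertions, and in the new $\cpop{ep}$ case obtain $\cpopv{ep}[2\size{t}-1]$ on the head via the multiset comparison of normal arguments, use assertion~(2) of the induction hypothesis on the safe arguments $t_j \in \TA(\sigbelow{f}{\FS},\VS)$ to close assertion~(1) by $\cpopv{ialst}[2\size{t}]$, and close assertion~(3) by $\cpopv{ms}$ with the surplus occurrences of $\theconst$ covered by the head term via $\cppv{ia}[2\size{t}-1]$. Your treatment of the length and width side conditions is merely a more explicit spelling-out of what the paper leaves terse.
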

\proof
  First one verifies that 
  Lemma~\ref{l:int:len} holds even if we replace 
  $\gpop$ by $\gpopps$. 
  In particular, the assumptions give 
  \begin{equation}
    \label{l:embed:root:ps:len}
    \len(\intn(t\sigma)) \leqslant 2\cdot\size{t} + \max \{ \norm{s_1\sigma}, \dots, 
    \norm{s_{k+l}\sigma}\}
    \tpkt
  \end{equation}
  The proof follows the pattern of the proof of Lemma~\ref{l:embed:root}, i.e.,
  we proceed by induction on $\gpopps$. 

  We cover only the new case $s \cpopps{ep} t$.
  Let $s$, $t$, and $\sigma$ be as given in the lemma.
    Then $t = g(\pseq[m][n]{t})$ where $f \ep g$.
    Further, the assumption gives
    $\mset{\seq[k]{s}} \gpopmul \mset{\seq[m]{t}}$.
    As $t \not \in \termsbelow$ 
    it suffices to verify Property~\ref{l:er1} and Property~\ref{l:er3}
    from Lemma~\ref{l:embed:root}.
    As before, we obtain:
    \begin{equation}
      \label{eq:root:ep:ps}
      \fn(\intn(s_1\sigma), \dots, \intn(s_k\sigma)) \cpopv{ep}[2\cdot\size{t} -1] \gn(\intn(t_1\sigma), \dots, \intn(t_m\sigma)) \tpkt
    \end{equation}
    By assumption $s \gpopps t_j$ and $t_j \in \termsbelow$, induction hypothesis gives 
    \begin{equation}
      \label{eq:ti:ep:ps}
      \fn(\intn(s_1\sigma), \dots, \intn(s_k\sigma)) \gppv[2\cdot\size{t} -1] \ints(t_j\sigma) \tpkt
    \end{equation}
    As $\len(\ints(t\sigma)) \leqslant \size{t}$ by Lemma~\eref{l:int:len}{S}, 
    we obtain $\fn(\intn(s_1\sigma), \dots, \intn(s_k\sigma)) \cpopv{ialst}[2\cdot\size{t}] \ints(t\sigma)$
    from equations~\eqref{eq:root:ep:ps} and~\eqref{eq:ti:ep:ps}.
    Likewise, from this assertion~\ref{l:er3} follows by $\cpopv{ms}$ using additionally 
    $\fn(\intn(s_1\sigma), \dots, \intn(s_k\sigma)) \cppv{ia}[2\cdot\size{t}-1] \theconst$
    and
    \begin{align*}
      \len(\intn(t\sigma)) 
      & \leqslant 2\cdot\size{t} + \max \{ \norm{s_1\sigma}, \dots, \norm{s_{k+l}\sigma} \}
      && \text{by Equation~\eqref{l:embed:root:ps:len}} \\
      & \leqslant 2\cdot\size{t} + \width(\fn(\intn(s_1\sigma), \dots, \intn(s_k\sigma)) \append \NM{s\sigma}) \tpkt\rlap{\hbox to 129 pt{\hfill\qEd}}
    \end{align*}\vspace{-3 pt}

\noindent Following the pattern of the proof of Theorem~\ref{t:popstar}, replacing
the use of Lemma~\ref{l:embed:root} by Lemma~\ref{l:embed:root:ps} we obtain:
\begin{theorem}\label{t:popstarps}
  Let $\RS$ be a constructor TRS compatible with an instance of $\POPSTARP$.
  Then the innermost derivation height of any basic term 
  $f(\svec{u}{v})$ is bounded by a polynomial in the 
  maximal depth of normal arguments $\vec{u}$.
  The polynomial depends only on $\RS$ and the signature $\FS$.
\end{theorem}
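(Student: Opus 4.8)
The plan is to mirror the proof of Theorem~\ref{t:popstar} essentially verbatim, substituting the parameter-substitution variant of the predicative embedding. Recall that the proof of Theorem~\ref{t:popstar} had three ingredients: (i)~the bound on $\gpopv[k]$-descending sequences starting from images of basic terms under $\ints$ (Theorem~\ref{t:pop} and Corollary~\ref{c:pop}), which is purely about the order on sequences and makes no reference to $\POPSTAR$ at all; (ii)~the embedding of root steps (Lemma~\ref{l:embed:root}); and (iii)~the closure of the embedding under contexts (Lemma~\ref{l:embed:ctx}). Only ingredient~(ii) mentions $\gpop$, and the excerpt has already supplied its replacement, Lemma~\ref{l:embed:root:ps}, covering the single new case $s \cpopps{ep} t$. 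Ingredient~(iii) is stated purely in terms of $\intn,\ints$ and $\gpopv[\ell]$, so it is reused unchanged. Ingredient~(i) is likewise untouched.

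Concretely, first I would fix a constructor TRS $\RS$ compatible with an instance $\gpopps$ of $\POPSTARP$, and an arbitrary basic term $s = f(\svec{u}{v})$. I would set
$$
  \ell \defsym \max\{\ar(\fn) \mid \fn \in \FSn \} \cup \{2\cdot\size{r} \mid {l \to r} \in \RS\}
$$
exactly as before; finiteness of $\FSn$ and $\RS$ makes $\ell$ well-defined. Given a maximal innermost derivation $s \irew s_1 \irew \cdots \irew s_k$ with $k = \dheight[\RS](s)$, I would apply Lemma~\ref{l:embed:root:ps} at the redex position together with Lemma~\ref{l:embed:ctx} to propagate through the surrounding context, $k$ times, obtaining
$$
  \ints(s) \gpopv[\ell] \ints(s_1) \gpopv[\ell] \cdots \gpopv[\ell] \ints(s_k)\tpkt
$$
Hence $k \leqslant \Slow[\ell](\ints(s))$ by definition of $\Slow[\ell]$, and by Corollary~\ref{c:pop} we get $\Slow[\ell](\ints(s)) = \bigO\bigl((\max_{i=1}^m \depth(u_i))^{d}\bigr)$, with $d$ depending only on $\ell$ and $\rk(f)$. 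This yields the claimed polynomial bound in the maximal depth of the normal arguments.

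One subtlety deserving explicit mention is the substitution domain. Lemma~\ref{l:embed:root} was stated for $\ofdom{\sigma}{\VS \to \NF}$, whereas Lemma~\ref{l:embed:root:ps} is stated for $\ofdom{\sigma}{\VS \to \Val}$; under innermost rewriting the matching substitution at a redex has its image in $\NF$, and for a constructor TRS applied to a basic starting term every such normal form is in fact a value, so the hypotheses are met along the derivation. I would also note, as in the original proof, that Lemma~\ref{l:embed:ctx} requires $\ell \geqslant \max\{\ar(\fn)\}$, which holds by the choice of $\ell$. I do not expect a genuine obstacle here: all the heavy lifting — Theorem~\ref{t:pop} and the verification that the $\cpopps{ep}$ case still embeds — has already been done in the excerpt, and what remains is the routine assembly. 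The one place warranting care is confirming that Lemma~\ref{l:int:len} indeed carries over with $\gpop$ replaced by $\gpopps$ (used inside Lemma~\ref{l:embed:root:ps}); this is immediate since the inductive structure of $\gpopps$ differs from $\gpop$ only in clause $\cpopps{ep}$, where the normal arguments still strictly decrease in the multiset order and the safe arguments are controlled by $s \gpopps t_j$, so the size estimates of Lemma~\ref{l:int:len} go through unchanged.
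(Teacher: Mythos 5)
Your proposal is correct and follows exactly the route the paper takes: the paper's proof of Theorem~\ref{t:popstarps} is literally ``following the pattern of the proof of Theorem~\ref{t:popstar}, replacing the use of Lemma~\ref{l:embed:root} by Lemma~\ref{l:embed:root:ps}'', with Lemma~\ref{l:embed:ctx}, Theorem~\ref{t:pop} and Corollary~\ref{c:pop} reused unchanged. Your additional remarks on the substitution domain and on Lemma~\ref{l:int:len} carrying over to $\gpopps$ only make explicit what the paper leaves implicit.
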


As a corollary we get the following variant of~Theorem~5.3 in~\cite{BW96}.
\begin{corollary}
Let $\RS$ be the rewrite system based on the defining equations from Figure~\ref{fig:1} 
and the Schema~\eqref{scheme:srnps}. Then 
the innermost derivation height of any basic term $f(\svec{u}{v})$ 
is bounded by a polynomial in the maximal depth of normal arguments $\vec{u}$.
\end{corollary}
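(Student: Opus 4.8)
The plan is to reduce the statement to Theorem~\ref{t:popstarps} by exhibiting an instance of $\POPSTARP$ compatible with the rewrite system at hand, in the same spirit in which Theorem~\ref{t:icc:completeness} was reduced to Theorem~\ref{t:popstar}. Write $\RS$ for the union of $R_\B$ from Figure~\ref{fig:1} with the rules realising the schema \eqref{scheme:srnps}; following~\cite{BW96} these introduce, for all already-available $g,h_1,h_2$ and parameter functions $\vec p$, a fresh defined symbol $\m{SRN}[g,h_1,h_2,\vec p]$, abbreviated $F$ below, together with the rules
\begin{align*}
  F(\sn{\epsilon, \vec x}{\vec y}) & \to g(\svec{x}{y})\\
  F(\sn{\mS_i(\sn{}{z}), \vec x}{\vec y}) & \to h_i(\sn{z,\vec x}{\vec y, F(\sn{z,\vec x}{\vec p(\svec{x}{y})})}) \qquad i \in \set{1,2} \tpkt
\end{align*}
The underlying signature is built analogously to $\Fb$ (Definition~\ref{d:Rb}), with this extra operator; since only $\epsilon,\mS_1,\mS_2$ are constructors, $\RS$ is a constructor TRS, and we fix the separation of normal and safe argument positions exhibited by the rules.

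For the precedence I would reuse the length measure $\lh$ of the proof of Theorem~\ref{t:icc:completeness}, put $\lh(F) \defsym 1 + \lh(g) + \lh(h_1) + \lh(h_2) + \sum_{p \in \vec p}\lh(p)$, declare the three constructors minimal (so that the precedence is admissible), and otherwise set $f \sp g$ iff $\lh(f) > \lh(g)$. The decisive property of this choice is that whenever $F$ as above occurs, each $h_i$ and each parameter function $p_j$ has strictly smaller $\lh$-value, hence $F \sp h_i$ and $F \sp p_j$; in particular every parameter function lies below the recursion symbol that uses it. For the rules of $R_\B$, compatibility with the induced $\gpop$ is already established in the proof of Theorem~\ref{t:icc:completeness}, and since ${\gpop} \subseteq {\gpopps}$ by Lemma~\ref{l:psextends}, these rules are oriented by $\gpopps$ as well.

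It remains to orient the two new rules with $\gpopps$. The base rule is handled by the safe-composition case of Definition~\ref{d:gpopps} using $F \sp g$, $\csq{st}$ on the normal and subterm steps on the safe arguments of $g$. For the step rule I would again use the composition case with $F \sp h_i$: the normal arguments $z,\vec x$ of $h_i$ are reached via $\csq{st}$ (as $\mS_i(\sn{}{z})$ has $z$ as a subterm and these are normal positions of $F$), the safe arguments $\vec y$ via the subterm case, and the single remaining safe argument, namely the recursive call $F(\sn{z,\vec x}{\vec p(\svec{x}{y})})$, via the modified case~\ref{d:gpopps:ep}: one has $\mset{\mS_i(\sn{}{z}), \vec x} \gpopmulps \mset{z,\vec x}$ since $\mS_i(\sn{}{z}) \gpopps z$, and for each parameter term $p_j(\svec{x}{y})$ one checks both $p_j(\svec{x}{y}) \in \TA(\sigbelow{F}{\FS},\VS)$ (from $F \sp p_j$, the remaining subterms being variables) and $F(\sn{\mS_i(\sn{}{z}),\vec x}{\vec y}) \gpopps p_j(\svec{x}{y})$ (by the composition case, using $F \sp p_j$ and that the arguments of $p_j$ are among the variables $\vec x,\vec y$). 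The \emph{at most one} side condition of the composition case is met because, among the safe arguments of $h_i$, only the recursive call $F(\sn{z,\vec x}{\vec p(\svec{x}{y})})$ fails to lie in $\TA(\sigbelow{F}{\FS},\VS)$. With $\RS \subseteq {\gpopps}$ established, the corollary follows by applying Theorem~\ref{t:popstarps} to $\RS$. The step I expect to be the main obstacle is exactly the verification of this nested inequality: the precedence must simultaneously be admissible and place every parameter function strictly below the recursion symbol that invokes it, and one then has to check carefully that $p_j(\svec{x}{y})$ is recognised as a term over $\sigbelow{F}{\FS}$, which is where the inductive layering of the predicative signature has to be tracked.
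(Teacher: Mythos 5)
Your proposal is correct and follows exactly the route the paper takes: the paper's own proof simply asserts that ``by construction'' there is an instance $\gpopps$ with ${\RS} \subseteq {\gpopps}$ and then invokes Theorem~\ref{t:popstarps}, and your argument is precisely a detailed verification of that compatibility claim (extending the $\lh$-based precedence of Theorem~\ref{t:icc:completeness} to the parameter-substitution operator and checking the clauses of Definition~\ref{d:gpopps}, with the recursive call handled by the modified case~\ref{d:gpopps:ep}). The details you supply, including the check that the parameter terms lie in $\TA(\sigbelow{F}{\FS},\VS)$, are sound.
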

\begin{proof}
By construction there exists an instance $\gpopps$ such that ${\RS} \subseteq \gpopps$.
Thus by the theorem, the result follows.  
\end{proof}

Applying Proposition~\ref{p:invariance} yields that predicative recursion is closed under parameter substitution.
We can even show a stronger result from Theorem~\ref{t:popstarps}.

\begin{corollary}
  Let $\RS$ be a constructor TRS compatible with an instance of $\POPSTARP$.\@ 
  For every relation $\sem{f}$ defined by $\RS$, 
  the function problem $F_f$ associated with $\sem{f}$ is in $\FNP$.
  Moreover, if $\RS$ is confluent than $\sem{f} \in \FP$.
\end{corollary}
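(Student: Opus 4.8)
The plan is to obtain this corollary from Theorem~\ref{t:popstarps} in exactly the way Theorem~\ref{t:icc:soundness} was obtained from Theorem~\ref{t:popstar}. First I would note that, since $\RS$ is a constructor TRS compatible with an instance of $\POPSTARP$, Theorem~\ref{t:popstarps} applies and bounds the innermost derivation height of every basic term $f(\svec{u}{v})$ by a polynomial in $\max_i \depth(u_i)$, the polynomial depending only on $\RS$ and the signature $\FS$. Since $\depth(u_i) \leqslant \size{u_i} \leqslant \size{f(\svec{u}{v})}$ and computations start only from basic terms, this immediately gives that the innermost runtime complexity function $\rc[\RS]$ is bounded by a polynomial, i.e.\ $\RS$ has polynomial runtime in the sense of Section~\ref{s:basics}.

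Second, I would invoke Proposition~\ref{p:invariance} for the rewrite system $\RS$ with polynomial runtime: it yields that every function problem associated with a relation $\sem{f}$ defined by $\RS$ lies in $\FNP$, and that $\sem{f}$ is a (partial) function in $\FP$ whenever $\RS$ is confluent. It remains only to observe that passing from $\irsn[\RS]$ to $\sem{f}$ via the set $\NA$ of non-accepting patterns (Definition~\ref{d:computation}) does not affect membership in $\FNP$ respectively $\FP$: whether a computed value $w$ is accepting is decidable in linear time in $\size{w}$, so restricting outputs to accepting ones is a harmless post-processing step that the classes $\FNP$ and $\FP$ are closed under.

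I expect no real obstacle here; the corollary is a packaging of results already established. The one point worth a remark is that there is no circularity: Proposition~\ref{p:invariance} concerns rewrite systems of polynomial \emph{innermost} runtime, which is precisely the output of Theorem~\ref{t:popstarps}, and the latter is proved independently by rerunning the predicative embedding of Section~\ref{s:embed} with Lemma~\ref{l:embed:root} replaced by Lemma~\ref{l:embed:root:ps} (which only needs $\RS \subseteq {\gpopps}$ rather than $\RS \subseteq {\gpop}$).
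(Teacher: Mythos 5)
Your proposal is correct and matches the paper's own (implicit) derivation exactly: the corollary is obtained by combining Theorem~\ref{t:popstarps} (polynomial innermost runtime complexity for $\POPSTARP$-compatible constructor TRSs, noting $\depth(u_i)\leqslant\size{u_i}$) with Proposition~\ref{p:invariance}, just as Theorem~\ref{t:icc:soundness} was obtained from Theorem~\ref{t:popstar}. Your additional remarks on the harmlessness of the non-accepting patterns and on the absence of circularity are accurate but not needed beyond what the paper already takes for granted.
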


By Lemma~\ref{l:psextends}, parameter substitution extends the power of $\POPSTAR$, 
together with Theorem~\ref{t:icc:completeness} that shows completeness of $\POPSTAR$, 
this shows completeness of \POPSTARP. We obtain our third result.\footnote{Again confluence can be replaced by orthogonality, as in Corollary~\ref{c:FP}.}
\begin{corollary}\label{c:fptime:ps}
  The following class of functions are equivalent:
  \begin{enumerate}[labelsep=*,leftmargin=*]
  \item The class of functions computed by confluent constructor TRS compatible with an instance of $\POPSTARP$. 
  \item The class of polytime computable functions $\FP$. 
  \end{enumerate}
\end{corollary}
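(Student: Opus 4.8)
The plan is to prove the two set inclusions separately; both turn out to be immediate consequences of results already assembled above, so no new machinery is required, only careful bookkeeping.

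For the inclusion of class~(1) in class~(2), I would invoke the (unlabelled) corollary stated immediately above, which is the soundness half of our treatment of \POPSTARP: if $\RS$ is a constructor TRS compatible with an instance of $\POPSTARP$ and $\RS$ is confluent, then for every defined symbol $f$ the relation $\sem{f}$ is a partial function lying in $\FP$. Confluence guarantees that $\sem{f}$ is indeed (the graph of) a function, cf.\ the remark following Definition~\ref{d:computation}, so this says precisely that every function computed by such a TRS is polytime computable. Unravelling the chain, this soundness corollary rests on Theorem~\ref{t:popstarps} (the polynomial runtime bound, obtained by replacing Lemma~\ref{l:embed:root} with Lemma~\ref{l:embed:root:ps} in the embedding argument) together with Proposition~\ref{p:invariance}.

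For the converse inclusion I would start from an arbitrary $f \in \FP$ and apply Theorem~\ref{t:icc:completeness}, which yields a finite, orthogonal, predicative recursive constructor TRS $\RS_f$ that computes $f$ (the construction ultimately going back to the Beckmann--Weiermann system $R_\B$ and Proposition~\ref{prop:Rf}). Orthogonality implies confluence, as recalled in Section~\ref{s:basics}, so it only remains to upgrade ``predicative recursive'' to ``compatible with an instance of $\POPSTARP$''. This is exactly Lemma~\ref{l:psextends}: from $\RS_f \subseteq {\gpop}$ for some admissible instance $\gpop$ of $\POPSTAR$ we get $\RS_f \subseteq {\gpopps}$ for the corresponding instance of $\POPSTARP$, built on the same admissible precedence and safe mapping. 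Hence $\RS_f$ witnesses membership of $f$ in class~(1).

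There is essentially no hard step here: all the analytic content — the polynomial runtime bound underpinning soundness and the explicit rewrite-system construction underpinning completeness — has already been done. The only point requiring a moment's care is that ``an instance of $\POPSTARP$'' ranges over the same data as ``an instance of $\POPSTAR$'', so Lemma~\ref{l:psextends} applies instance-wise; and one should note, as in Corollary~\ref{c:FP}, that if a decidable characterisation is wanted the confluence condition may be replaced by orthogonality without loss, since the TRS produced in the completeness direction is already orthogonal.
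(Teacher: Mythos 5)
Your proposal is correct and follows exactly the route the paper takes: soundness via the corollary preceding the statement (itself a consequence of Theorem~\ref{t:popstarps} and Proposition~\ref{p:invariance}), and completeness via Theorem~\ref{t:icc:completeness} combined with Lemma~\ref{l:psextends}, using orthogonality of the constructed TRS to obtain confluence. No gaps.
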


\section{Automation of Polynomial Path Orders}\label{s:impl}

In this section we present an automation of polynomial path orders,  
for brevity we restrict our efforts to the order $\gpop$.
Consider a constructor TRS $\RS$. Checking whether $\RS$ is predicative 
recursive is equivalent to guessing a precedence $\qp$ and partitioning 
of argument positions so that $\RS \subseteq {\gpop}$ holds for the induces
order $\gpop$. 
As standard for recursive path orders~\cite{ZM07,SFTGACMZ07},
this search can be automated by encoding the constraints imposed by Definition~\ref{d:gpop}
into \emph{propositional logic}. 
To simplify the presentation, we extend the language of propositional 
logic with truth-constants $\top$ and $\bot$ in the obvious way.
In the constraint presented below we employ the following atoms. 

\subsection{Propositional Atoms}
To encode the separation of normal from safe arguments, we
introduce for $f \in \DS$ and $i = 1, \dots, \ar(f)$ the atoms $\esafe{f}{i}$
so that $\esafe{f}{i}$ represents the assertion that the \nth{$i$} argument position of $f$ is safe.
Further we set $\esafe{f}{i} \defsym \top$ for $n$-ary $f \in \CS$ and $i = 1,\dots,n$,
reflecting that argument positions of constructors are always safe. 

Since $\POPSTAR$ is blind on constructors, predicative 
recursive TRSs are even compatible with  $\gpop$
as induced by an admissible precedence where constructors are equivalent and minimal.
For each pair of symbols $f,g \in \DS$, we introduce
propositional atoms $\esp{f}{g}$ and $\eep{f}{g}$ so that $\esp{f}{g}$
represents the assertion $f \sp g$, and likewise 
$\eep{f}{g}$ represents the assertion $f \ep g$.
Overall we define for function symbols $f$ and $g$ the propositional formulas
\begin{equation*}
  \enc{f \sp g} \defsym 
  \begin{cases}
    \top & \text{if $f\in\DS$ and $g\in \CS$,} \\
    \bot & \text{if $f\in\CS$ and $g\in \CS$,} \\
    \esp{f}{g} & \text{otherwise.}
  \end{cases}
  \quad
  \enc{f \ep g} \defsym
  \begin{cases}
    \top & \text{if $f\in\CS$ and $g\in \CS$, } \\
    \bot & \text{if $f\in\DS$ and $g \in \DS$,} \\
    \eep{f}{g} & \text{otherwise.}
  \end{cases}
\end{equation*}

To ensure that the variables $\esp{f}{g}$ and respectively $\eep{f}{g}$ encode a preorder on $\DS$ 
we encode an order preserving homomorphism into the natural order $>$ on $\N$. 
To this extent, to each $f \in \DS$ we associate a natural number $\rk_f$ encoded as binary string
with $\lceil \log_2(\size{\DS}) \rceil$ bits. 
It is straight forward to define Boolean formulas $\enc{\rk_f > \rk_g}$ (respectively $\enc{\rk_f = \rk_g}$) that are satisfiable iff 
the binary numbers $\rk_f$ and $\rk_g$ are decreasing (respectively equal) in the natural order. Using these we set
\begin{equation*}
  \vprec(\DS)
  \defsym \bigwedge_{f,g \in \DS} (\enc{f \sp g} \imp \enc{\rk_f > \rk_g})
  \wedge \bigwedge_{f,g \in \DS} (\enc{f \ep g} \imp \enc{\rk_f = \rk_g})
  \tpkt
\end{equation*}

We say that a propositional assignment $\mu$ \emph{induces} the precedence
$\qp$ if $\mu$ satisfies $\enc{f \sp g}$ when $f \sp g$ and $\enc{f \ep g}$ when $f \ep g$. 
The next lemma verifies that $\vprec$ serves our needs.
\begin{lemma}
  For any assignment $\mu$ that satisfies $\vprec(\DS)$, 
  $\mu$ induces an admissible precedence on $\FS$.
  Vice versa, for any admissible precedence $\qp$ on $\FS$, 
  any valuation $\mu$, satisfying $\mu(\enc{f \sp g})$ iff $f \sp g$ and 
  $\mu(\enc{f \sp g})$ iff $f \ep g$, also satisfies
  the formula $\vprec(\DS)$.
\end{lemma}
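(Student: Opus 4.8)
The plan is to verify both directions of the equivalence by unfolding the definition of $\vprec(\DS)$ together with the abbreviations $\enc{f \sp g}$ and $\enc{f \ep g}$, and checking that the three defining conditions of an admissible precedence (irreflexivity and transitivity of $\sp$, the equivalence-relation properties of $\ep$, and the admissibility requirement that $f \ep g$ forces $f,g$ to be on the same side of the $\DS$/$\CS$ partition) all follow from—or translate into—the satisfied clauses.

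For the forward direction, I would fix an assignment $\mu$ satisfying $\vprec(\DS)$ and define $f \sp g$ to hold iff $\mu$ satisfies $\enc{f \sp g}$, and $f \ep g$ iff $\mu$ satisfies $\enc{f \ep g}$. First I would observe that $\enc{\rk_f > \rk_g}$ and $\enc{\rk_f = \rk_g}$ are, by construction of the binary encodings, exactly satisfied when the natural numbers $\rk_f$, $\rk_g$ (read off from $\mu$) stand in the relation $>$ respectively $=$. Since $\vprec(\DS)$ forces $f \sp g \Rightarrow \rk_f > \rk_g$ and $f \ep g \Rightarrow \rk_f = \rk_g$, the map $f \mapsto \rk_f$ is an order-preserving homomorphism from $({\sp},{\ep})$ into $({>},{=})$ on $\N$. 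Irreflexivity and transitivity of $\sp$, and reflexivity/symmetry/transitivity of $\ep$ on the ``otherwise'' symbols, are then inherited from $>$ and $=$ on $\N$; I would also need to fold in the hard-wired cases from the definitions of $\enc{\cdot \sp \cdot}$ and $\enc{\cdot \ep \cdot}$, namely $\enc{f \sp g} = \top$ when $f \in \DS, g \in \CS$, $\enc{f \ep g} = \top$ when both are constructors, and $\enc{f \ep g} = \bot$ when both are defined, so that $\sp$ restricted to constructors is empty and $\ep$ on constructors is total—this is exactly what yields an admissible precedence in which constructors are minimal and mutually equivalent. Admissibility itself is immediate: $f \ep g$ with $\mu \models \enc{f \ep g}$ cannot hold for $f \in \DS$, $g \in \CS$ or vice versa, since in the mixed case $\enc{f \ep g} = \eep{f}{g}$ is unconstrained by the constants but the rank clause would still apply—actually the cleaner observation is that the constant cases already decide the cross-partition pairs, so $f \ep g$ can only relate two defined symbols (via $\eep{f}{g}$, and then $\enc{f \ep g} \neq \bot$ forces them both defined) or two constructors.

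For the converse direction, I would take an admissible precedence $\qp$ and any valuation $\mu$ with $\mu(\esp{f}{g}) = 1$ iff $f \sp g$ and $\mu(\eep{f}{g}) = 1$ iff $f \ep g$ (for the ``otherwise'' symbol pairs; the constant cases are forced). I need to exhibit a choice of the rank bit-strings $\rk_f$ making every clause of $\vprec(\DS)$ true. The natural choice is to set $\rk_f$ to be the rank (in the sense of the quotient order) of the $\ep$-equivalence class of $f$ among defined symbols; since there are at most $\size{\DS}$ such classes, $\lceil \log_2(\size{\DS})\rceil$ bits suffice. Because $\qp$ is a preorder, $f \sp g$ implies strictly smaller class-rank and $f \ep g$ implies equal class-rank, so $\enc{\rk_f > \rk_g}$ and $\enc{\rk_f = \rk_g}$ are satisfied exactly where needed; the constructor and cross-partition cases trivialise because there $\enc{f \sp g}$ and $\enc{f \ep g}$ are constants whose truth value does not depend on the ranks (or the implication is vacuous).

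The only mildly delicate point—and the step I would spend the most care on—is the bookkeeping around the hard-wired constant cases in $\enc{f \sp g}$ and $\enc{f \ep g}$: one must check that these constants are consistent with genuine admissible precedences (e.g.\ that forcing $\enc{f \ep g} = \bot$ for two defined symbols does not over-restrict, given that admissible precedences may in fact have $f \ep g$ for defined $f,g$—so here the lemma is implicitly about precedences where the $\ep$ is already fixed, and the ``vice versa'' part quantifies $\mu$ to agree with $\qp$, so no contradiction arises). I would state this explicitly to avoid confusion, noting that $\vprec(\DS)$ is designed to be satisfiable precisely by (encodings of) admissible precedences, and that the constants merely pre-commit the uncontroversial comparisons. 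Everything else is routine translation between propositional satisfaction and arithmetic on the ranks.
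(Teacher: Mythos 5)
The paper offers no proof of this lemma (it is treated as routine), so your proposal must stand on its own, and as written it has a real gap in the forward direction. You define the induced precedence by $f \sp g$ iff $\mu \models \enc{f \sp g}$ and $f \ep g$ iff $\mu \models \enc{f \ep g}$, and then claim that irreflexivity, transitivity, symmetry, etc.\ are ``inherited from $>$ and $=$ on $\N$'' via the rank homomorphism. They are not: $\vprec(\DS)$ contains only the implications $\enc{f \sp g} \imp \enc{\rk_f > \rk_g}$ and $\enc{f \ep g} \imp \enc{\rk_f = \rk_g}$, never their converses. So from $\mu \models \esp{f}{g}$ and $\mu \models \esp{g}{h}$ you obtain that $\rk_f > \rk_h$ holds under $\mu$, but nothing forces $\mu \models \esp{f}{h}$; likewise $\mu$ may falsify $\eep{f}{f}$, or satisfy $\eep{f}{g}$ but not $\eep{g}{f}$, all while satisfying $\vprec(\DS)$. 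The atom-defined relation is therefore in general neither transitive nor an equivalence, and compatibility of $\ep$ with $\sp$ can fail as well. The repair is to take as the induced precedence the one determined by the ranks --- $f \sp g$ iff $\rk_f > \rk_g$ and $f \ep g$ iff $\rk_f = \rk_g$ for $f,g \in \DS$, with all constructors placed in a single minimal equivalence class --- which is automatically an admissible precedence, and of which the relation given by the satisfied atoms is a subrelation. That subrelation property ($\mu \models \enc{f \sp g}$ implies $f \sp g$) is exactly what the subsequent lemmas on $\enc{s \eqis t}$ and $\enc{s \gpop t}$ require.

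Your converse direction is essentially correct (choose $\rk_f$ as the rank of the $\ep$-class of $f$ in the quotient order; every implication in $\vprec(\DS)$ is then satisfied), and you are right to be suspicious of the hard-wired case $\enc{f \ep g} = \bot$ for $f,g \in \DS$. However, your attempted resolution does not work: read literally, that clause makes the converse direction false for any admissible precedence that equates two distinct defined symbols, since the requirement $\mu(\enc{f \ep g})$ iff $f \ep g$ would force $\mu$ to satisfy $\bot$. The consistent reading --- matching the sentence ``for each pair of symbols $f,g \in \DS$ we introduce propositional atoms $\esp{f}{g}$ and $\eep{f}{g}$'' --- is that the $\bot$ case is meant for mixed pairs (one symbol defined, one a constructor, as admissibility demands) and the atom $\eep{f}{g}$ for pairs of defined symbols. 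With that correction, and with the rank-based definition of the induced precedence above, your argument goes through.
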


\subsection{Order Constraints}
For concrete pairs of terms $s = f(\seq{s})$ and $t$, we define
$$
\enc{s \gpop t} \defsym \enc{s \cpop{st} t} \vee \enc{s \cpop{ia} t} \vee \enc{s \cpop{ep} t}
$$
which enforces the orientation $f(\seq{s}) \gpop t$ using propositional formulations 
of the three clauses in Definition~\ref{d:gpop}. 
To complete the definition for arbitrary left-hand sides, we set $\enc{x \gpop t} \defsym \bot$ for all $x \in \VS$.
Further, weak orientation is given by
$$
\enc{s \geqpop t} \defsym \enc{s \gpop t} \vee \enc{s \eqis t} \tkom
$$
where the constraint $\enc{s \eqis t}$ refers to a formulation of 
Definition~\ref{d:eqis} in propositional logic, defined as follows.
For $s = t$ we simply set $\enc{s \eqis t} \defsym \top$. 
Consider the case $s = f(\seq{s})$ and $t = g(\seq{t})$. 
Then $s \eqis t$ if $f \ep g$ 
and moreover $s_i \eqis t_{\pi(i)}$ for all $i = 1,\dots,n$ and 
some permutation $\pi$ on argument positions that takes the separation of normal and safe positions into account.
To encode $\pi(i) = j$, we use fresh atoms $\pi_{i,j}$ for $i,j=1,\dots,n$. 
The propositional formula $\vperm(\pi,n) \defsym \bigwedge_{i=1}^{n} \eone(\pi_{i,1}, \dots, \pi_{i,n})$
is used to assert that the atoms $\pi_{i,j}$ reflect a permutation on $\{1,\dots,n\}$. 
Here $\eone(\pi_{i,1}, \dots, \pi_{i,n})$ expresses that exactly one of its arguments evaluates to $\top$.
We set
\begin{equation*}
  \enc{s \eqis t} \defsym 
    \enc{f \ep g} \wedge \vperm(\pi,n)
      \wedge~\bigl({\bigwedge_{j=1}^{n} \pi_{i,j} \imp \enc{s_i \eqis t_j} \wedge (\esafe{f}{i} \iff \esafe{g}{j})}\bigr) \tpkt
\end{equation*}
To complete the definition, we set $\enc{s \eqis t} = \bot$ for 
the remaining cases.
\begin{lemma}
  Suppose the assignment $\mu$ induces an admissible precedence $\qp$ and 
  $\mu$ satisfies $\enc{s \eqis t}$. 
  Then $s \eqis t$ with respect to the precedence $\qp$.
  Vice versa, if $s \eqis t$ then $\enc{s \eqis t}$ is satisfiable by assignments $\mu$ 
  that induce the precedence underlying $\eqis$.
\end{lemma}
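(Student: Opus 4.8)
The plan is to prove both halves of the equivalence simultaneously by induction on $\size{s} + \size{t}$, unwinding the recursive definition of $\enc{\,\cdot \eqis \cdot\,}$ against Definition~\ref{d:eqis}. Throughout, ``$\mu$ induces the (admissible) precedence $\qp$'' is understood to mean that $\mu$ satisfies $\enc{f \sp g}$ exactly when $f \sp g$ and $\enc{f \ep g}$ exactly when $f \ep g$ (the constant values of the encoding on constructors matching the convention that all constructors are mutually equivalent and minimal), and that $\mu$ assigns the atoms $\esafe{f}{i}$ in accordance with the fixed safe mapping $\safe$ underlying $\eqis$, so that $\mu \models (\esafe{f}{i} \iff \esafe{g}{j})$ holds iff ``$i \in \safe(f)$ iff $j \in \safe(g)$''. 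With these conventions the statement becomes a bookkeeping exercise.

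For soundness, assume $\mu$ induces $\qp$ and $\mu \models \enc{s \eqis t}$. If $s = t$ then $s \eqis t$ trivially. Otherwise $\enc{s \eqis t} \neq \bot$ forces $s = f(\seq{s})$, $t = g(\seq{t})$ with $\ar(f) = \ar(g) = n$, and the encoding unfolds into its defining conjunction. From $\mu \models \enc{f \ep g}$ we get $f \ep g$, and from $\mu \models \vperm(\pi,n)$ the atoms $\pi_{i,j}$ determine a permutation $\pi$ of $\{1,\dots,n\}$. For each $i$, instantiating the conjunction at $j = \pi(i)$ and using $\mu \models \pi_{i,\pi(i)}$ yields both $\mu \models \enc{s_i \eqis t_{\pi(i)}}$ and $\mu \models (\esafe{f}{i} \iff \esafe{g}{\pi(i)})$. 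The induction hypothesis applied to $s_i, t_{\pi(i)}$ gives $s_i \eqis t_{\pi(i)}$, while the second conjunct gives the required correspondence of safe positions; by Definition~\ref{d:eqis} we conclude $s \eqis t$.

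For completeness, assume $s \eqis t$ with respect to $\qp$. If $s = t$, any $\mu$ inducing $\qp$ and respecting $\safe$ satisfies $\enc{s \eqis t} = \top$. Otherwise $s = f(\seq{s})$, $t = g(\seq{t})$, $f \ep g$, and there is a permutation $\pi$ with $s_i \eqis t_{\pi(i)}$ and ``$i \in \safe(f)$ iff $\pi(i) \in \safe(g)$'' for all $i$. By the induction hypothesis each constraint $\enc{s_i \eqis t_{\pi(i)}}$ is satisfied by some assignment $\mu_i$ inducing $\qp$. Since the permutation atoms introduced at each recursive step are fresh, the $\mu_i$ overlap only on the precedence atoms and the $\esafe{}{}$ atoms, on which they all agree; hence we may take $\mu$ to be their common extension that additionally sets $\pi_{i,j} \mapsto \top$ iff $j = \pi(i)$. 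Then $\mu$ induces $\qp$; $\mu \models \enc{f \ep g}$; $\mu \models \vperm(\pi,n)$ because $\pi$ is a bijection; and each implication $\pi_{i,j} \imp \bigl(\enc{s_i \eqis t_j} \wedge (\esafe{f}{i} \iff \esafe{g}{j})\bigr)$ holds, vacuously when $j \neq \pi(i)$ and by the choice of $\mu_i$ together with the safe-position correspondence when $j = \pi(i)$. Thus $\mu \models \enc{s \eqis t}$.

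The proof has no deep step; the two points that need care are (i) that $\vperm(\pi,n)$ faithfully encodes bijections, used in both directions to pass between the witnessing permutation of Definition~\ref{d:eqis} and a truth assignment to the $\pi_{i,j}$, and (ii) the gluing in the completeness direction, which relies on the freshness of the permutation atoms across recursive calls and on the fact that the precedence and safe-mapping atoms are globally fixed, so the partial assignments for the argument comparisons are mutually consistent. Once these are made explicit the remaining verifications are immediate.
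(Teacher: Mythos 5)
The paper states this lemma without proof, so there is no ``official'' argument to compare against; your structural induction is exactly the routine unwinding the authors leave implicit. The completeness direction is fine: the gluing of the partial assignments is correctly justified by the freshness of the permutation atoms across recursive calls and by the fact that the precedence atoms and the $\esafe{f}{i}$ atoms are global and agreed upon. Your explicit convention that ``$\mu$ induces $\qp$'' also fixes the $\esafe{f}{i}$ atoms in accordance with the safe mapping underlying $\eqis$ is in fact necessary for the statement to be true at all (the lemma as phrased mentions only the precedence, but $\eqis$ depends on $\safe$), so making it explicit is an improvement rather than a liberty.

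There is, however, one step in the soundness direction that does not follow from the definitions as given: the claim that a satisfying assignment of $\vperm(\pi,n)$ makes the atoms $\pi_{i,j}$ ``determine a permutation''. The paper defines $\vperm(\pi,n) \defsym \bigwedge_{i=1}^{n} \eone(\pi_{i,1},\dots,\pi_{i,n})$, which constrains only the rows, so a satisfying assignment merely determines a function from $\{1,\dots,n\}$ to $\{1,\dots,n\}$ that need not be injective. This is not harmless: for $s = f(a,a)$ and $t = f(a,b)$ where $a \eqis b$ fails, the assignment with $\pi_{1,1} = \pi_{2,1} = \top$ and $\pi_{1,2} = \pi_{2,2} = \bot$ satisfies $\vperm(\pi,2)$ and every guarded conjunct (each reduces to $\enc{a \eqis a} = \top$), yet $s \eqis t$ fails because no genuine permutation witnesses it. You list ``$\vperm(\pi,n)$ faithfully encodes bijections'' as a point needing care but never discharge it, and it cannot be discharged from the formula as written; the repair is to strengthen $\vperm$ with the column constraints $\bigwedge_{j=1}^{n} \eone(\pi_{1,j},\dots,\pi_{n,j})$ (at-most-one per column already suffices by counting), after which your induction goes through verbatim in both directions.
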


We now define the encoding for the different cases underlying the definition of $\gpop$.
Assuming that $\enc{s_i \geqpop t}$ enforces $s_i \gpop t$ clause $\cpop{st}$ is expressible
as 
\begin{equation*}
  \enc{f(\seq{s}) \cpop{st} t} \defsym \bigor_{i=1}^n \enc{s_i \geqpop t}
  \tpkt
\end{equation*}
in propositional logic. 
For clause $\cpop{ia}$ we use propositional atoms $\alpha_i$ ($i = 1,\dots,m$)
to mark the unique argument position of $t = g(\seq[m]{t})$ that allows
$t_i \not\in \termsbelow$. 
The propositional formula $\ezeroone(\seq[m]{\alpha})$ expresses that zero or one
$\alpha_i$ valuates to $\top$. 
Further, we introduce the auxiliary constraint 
\begin{equation*}
\enc{g(\seq[m]{t}) \in \termsbelow[f]} \defsym \enc{f \sp g} \wedge \bigwedge_{j=1}^m \enc{t_j \in \termsbelow[f]}
\tpkt
\end{equation*}
and $\enc{x \in \termsbelow[f]} \defsym \top$ for $x \in \VS$.
Using these, clause $\cpop{ia}$ becomes expressible as 
\begin{multline*}
  \enc{f(\seq{s}) \cpop{ia} g(\seq[m]{t})} \defsym
  \enc{f \in \DS}
  \wedge \enc{f \sp g} \\
  \wedge \bigwedge_{j=1}^m (\esafe{g}{j} \imp \enc{s \gpop t_j})
  \wedge \bigwedge_{j=1}^m (\neg \esafe{g}{j} \imp \enc{s \gsq t_j}) \\
  \wedge \ezeroone(\seq[m]{\alpha}) 
  \wedge \bigwedge_{j=1}^m (\neg \alpha_j \imp \enc{t_j \in \termsbelow}) \tpkt
\end{multline*}
Here $\enc{f \in \DS} = \top$ if $f \in \DS$ and otherwise $\enc{f \in \DS} = \bot$.
The propositional formula $\enc{s \gsq t}$ expresses the orientation with the $\gsq$ and is given by
\begin{align*}
  \enc{f(\seq{s}) \gsq t} \defsym \enc{f(\seq{s}) \csq{st} t} \vee \enc{f(\seq{s}) \csq{ia} t}
\end{align*}
and otherwise $\enc{x \gsq t} = \bot$, where
\begin{align*}
  \enc{f(\seq{s}) \csq{st} t} & \defsym \bigor_{i=1}^n ((\enc{s_i \gsq t} \vee \enc{s_i \eqis t}) \wedge (\enc{f \in \DS} \imp \neg \esafe{f}{i})) \\
  \enc{f(\seq{s}) \csq{ia} t} & \defsym 
  \begin{cases}
    \enc{f \in \DS} \wedge \enc{f \sp g} & \text{ if $t = g(\seq[m]{t})$} \\
    \quad \wedge \bigwedge_{j=1}^m \enc{f(\seq{s}) \gsq t_j} \\
    \bot & \text{ if $t \in \VS$}.
  \end{cases}
\end{align*}
This concludes the propositional formulation of clause $\cpop{ia}$.

The main challenge in formulating clause $\cpop{ep}$
is to deal with the encoding of multiset-comparisons. 
We proceed as in~\cite{SK07} and encode the underlying \emph{multiset cover}.
\begin{definition}
Let $\succ_\mul$ denote the multiset extension of a binary relation ${\succcurlyeq} = {\succ} \uplus {\eqi}$.
Then a pair of mapping $(\gamma, \varepsilon)$ 
where $\ofdom{\gamma}{\set{1,\dots,m} \to \set{1,\dots,n}}$ 
and $\ofdom{\varepsilon}{\set{1,\dots,n} \to \set{\top,\bot}}$
is a multiset cover on multisets $\mset{\seq{a}}$ and $\mset{\seq[m]{b}}$
if the following holds for all $j \in \{1,\dots,m\}$:
\begin{enumerate}[labelsep=*,leftmargin=*]\label{d:mscover}
\item\label{d:mscover:1} if $\gamma(j) = i$ then $a_i \succcurlyeq b_j$, in this case we say that $a_i$ \emph{covers} $b_j$; 
\item\label{d:mscover:2} if $\varepsilon(j) = \top$ then $s_{\tau(j)} \eqi t_j$ and $\tau$ is invective on $\{j\}$, 
  i.e., $a_{\tau(j)}$ covers only $b_j$.
\end{enumerate}
The multiset cover $(\gamma, \varepsilon)$ is said to be \emph{strict} if at least one cover is strict, 
i.e., $\varepsilon(j) = \bot$ for some $j \in \{1,\dots,m\}$.
\end{definition}

It is straight forward to verify that multiset covers characterise the multiset extension
of $\succ$ in the following sense.
\begin{lemma}
  We have $\mset{\seq{a}} \mextension{\succcurlyeq} \mset{\seq[m]{b}}$ if and only if there 
  exists a multiset cover $(\gamma, \varepsilon)$ on $\mset{\seq{a}}$ and $\mset{\seq[m]{b}}$.
  Moreover, $\mset{\seq{a}} \mextension{\succ} \mset{\seq[m]{b}}$ if and only if the cover is strict.
\end{lemma}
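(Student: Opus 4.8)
The plan is to reprove, in the present setting with an underlying equivalence $\eqi$, the standard correspondence between the multiset extension of a (quasi-)order and multiset covers that underlies the encoding of~\cite{SK07}. Recall that $\mset{\seq{a}} \mextension{\succcurlyeq} \mset{\seq[m]{b}}$ unfolds to $\eclass{\mset{\seq{a}}} \mextension{\sqsupset} \eclass{\mset{\seq[m]{b}}}$ or $\eclass{\mset{\seq{a}}} = \eclass{\mset{\seq[m]{b}}}$, and the strict variant to just the first disjunct, where $\sqsupset$ is the proper order induced on equivalence classes. I would therefore argue at the level of these class-multisets, descending to the representatives $a_i, b_j$ only where the index sets of the cover live; throughout, $\varepsilon(i) = \top$ is read as declaring $a_i$ reserved for an equivalence match (it covers exactly one $b_j$, and by $\eqi$), while $\varepsilon(i) = \bot$ permits $a_i$ to strictly dominate every $b_j$ it covers.

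For the direction from a cover $(\gamma,\varepsilon)$ to the multiset extension, I would set $I \defsym \set{i \mid \varepsilon(i) = \top}$, $X \defsym \mset{a_i \mid i \notin I}$ and $Y \defsym \mset{b_j \mid \gamma(j) \notin I}$. The clause for $\varepsilon(i) = \top$ makes $i \mapsto$ (the unique right position covered by $a_i$), for $i \in I$, a bijection onto $\set{j \mid \gamma(j) \in I}$ under which $a_i \eqi b_{\gamma^{-1}(i)}$; quotienting, this gives $\eclass{\mset{\seq[m]{b}}} = (\eclass{\mset{\seq{a}}} \backslash \eclass{X}) \uplus \eclass{Y}$. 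Every $b_j \in Y$ has $\gamma(j) = i \notin I$, hence $a_i \succ b_j$ and so $\eclass{a_i} \sqsupset \eclass{b_j}$ with $a_i \in X$, which is the domination requirement. If the cover is strict, some $i_0$ has $\varepsilon(i_0) = \bot$, so $a_{i_0} \in X \neq \varnothing$ and $\mset{\seq{a}} \mextension{\succ} \mset{\seq[m]{b}}$; if not, $I = \set{1,\dots,n}$, the bijection is total on both sides, so $\eclass{\mset{\seq{a}}} = \eclass{\mset{\seq[m]{b}}}$ and $\mset{\seq{a}} \mextension{\succcurlyeq} \mset{\seq[m]{b}}$ via the equality alternative.

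Conversely, if $\eclass{\mset{\seq{a}}} = \eclass{\mset{\seq[m]{b}}}$ I would fix an $\eqi$-matching bijection $\phi$ between right and left positions and take $\gamma \defsym \phi$, $\varepsilon \equiv \top$, a non-strict cover. If instead $\eclass{\mset{\seq{a}}} \mextension{\sqsupset} \eclass{\mset{\seq[m]{b}}}$, unfolding the definition yields class-multisets $\mathcal X, \mathcal Y$ with $\eclass{\mset{\seq[m]{b}}} = (\eclass{\mset{\seq{a}}} \backslash \mathcal X) \uplus \mathcal Y$, $\varnothing \neq \mathcal X \subseteq \eclass{\mset{\seq{a}}}$, and every class in $\mathcal Y$ strictly $\sqsupset$-below one in $\mathcal X$. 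Picking representatives, I would split the left positions into $I_X$ (a sub-multiset realising $\mathcal X$) and its complement $I_0$, and the right positions into $J_0$ (in $\eqi$-correspondence with $I_0$, thus realising $\eclass{\mset{\seq{a}}} \backslash \mathcal X$) and $J_Y$ (realising $\mathcal Y$); then define $\gamma$ on $J_0$ via this correspondence and on $J_Y$ by sending each $b_j$ to some $i \in I_X$ with $a_i \succ b_j$, and set $\varepsilon(i) = \top$ exactly for $i \in I_0$. Both cover clauses hold by construction, each $i \in I_0$ covering precisely its matched $b_j$; and $\mathcal X \neq \varnothing$ forces $I_X \neq \varnothing$, so $\varepsilon(i) = \bot$ for some $i$ and the cover is strict. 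The two iffs of the lemma follow by combining the two directions, using that $\mextension{\sqsupset}$ is irreflexive to rule out a strict cover when $\eclass{\mset{\seq{a}}} = \eclass{\mset{\seq[m]{b}}}$.

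The one step that is more than routine is the bookkeeping in this last direction: the multiset difference on classes may delete a class with multiplicity strictly below its multiplicity in $\eclass{\mset{\seq{a}}}$, so the split of the left positions into $I_X$ and $I_0$ is not canonical and must be chosen jointly with the $\eqi$-correspondence $J_0 \to I_0$ so that the two realise the same class-multiset. Keeping all matchings on the level of positions rather than classes, and invoking compatibility $\eqi \cdot \succ \cdot \eqi \subseteq \succ$ whenever an $\eqi$-step must be absorbed into a $\succ$-step, makes the remaining verifications mechanical.
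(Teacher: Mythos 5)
The paper itself gives no proof of this lemma --- it is prefaced by ``it is straightforward to verify'' and stated bare --- so the only benchmark is the standard folklore correspondence, and that is essentially what you prove. Your argument is correct: translating a cover $(\gamma,\varepsilon)$ into a witnessing pair $(X,Y)$ for the multiset extension at the level of $\eqi$-classes, with the $\varepsilon=\top$ positions supplying the class-preserving bijection and the $\varepsilon=\bot$ positions supplying $X$, and conversely reconstructing $(\gamma,\varepsilon)$ from $(X,Y)$ by choosing position-level realisations of the class-multisets, is exactly the intended content. You also correctly isolate the only non-mechanical point, namely that the split of left positions into $I_X$ and $I_0$ must be chosen jointly with the $\eqi$-matching $J_0 \to I_0$, since the class-level difference does not determine a split of positions.

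One caveat should be made explicit. In the step ``every $b_j \in Y$ has $\gamma(j)=i\notin I$, hence $a_i \succ b_j$'' you silently strengthen the written cover condition, which only asserts $a_{\gamma(j)} \succcurlyeq b_j$. Under that literal reading the ``only if'' half of the strictness claim fails: for a single pair with $a_1 \eqi b_1$, $\gamma(1)=1$ and $\varepsilon(1)=\bot$ one obtains a strict cover although $\mset{a_1} \mextension{\succ} \mset{b_1}$ is false by irreflexivity. The reading you adopt --- $\varepsilon(i)=\bot$ together with $\gamma(j)=i$ forces $a_i \succ b_j$ --- is the one actually implemented in the propositional encoding (the conjunct $\neg \varepsilon_i \to \enc{s_i \gpop t_j}$), so your proof is sound for the intended definition; but since the printed Definition does not supply this strengthening, you should state it as an explicit hypothesis on covers rather than use it tacitly.
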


Consider the orientation $f(\seq{s}) \cpop{ep} g(\seq[m]{t})$. 
Then normal arguments are strictly, and safe arguments weakly decreasing with 
respect to the multiset-extension of $\gpop$. 
Since the partitioning of normal and safe argument is not fixed, 
in the encoding of $\cpop{ep}$ we formalise a multiset-comparison on \emph{all} arguments, 
where the underlying multiset-cover $(\gamma, \varepsilon)$ 
will be restricted so that if $s_i$ covers $t_j$, i.e., $\gamma(i) = j$, 
then both $s_i$ and $t_j$ are safe or respectively normal.
To this extend, for a specific multiset cover $(\gamma, \varepsilon)$ we introduce variables $\gamma_{i,j}$ and 
$\varepsilon_i$, where $\gamma_{i,j} = \top$ represents $\gamma(j) = i$ and
$\varepsilon_i = \top$ denotes $\varepsilon(i) = \top$ ($1 \leqslant i \leqslant n$, $1 \leqslant j \leqslant m$).
We set
\begin{multline*}
  \enc{f(\seq{s}) \cpop{ep} g(\seq[m]{t})} \defsym 
  \enc{f \in \DS}
  \wedge \enc{f \sp g} \\
  \wedge 
  \bigwedge_{i=1}^{n} \bigwedge_{j=1}^{m} \Bigl( \gamma_{i,j} \to \bigl( \varepsilon_i \to \enc{s_i \eqis t_j} \bigr)
                                             \wedge \bigl( \neg \varepsilon_i \to \enc{s_i \gpop t_j} \bigr)
                                             \wedge \bigl( \esafe{f}{i} \iff \esafe{g}{j} \bigr)
                                      \Bigr) \\
  \wedge \bigwedge_{j=1}^m \eone(\gamma_{1,j},\dots,\gamma_{n,j}) 
  \wedge \bigwedge_{i=1}^{n} \bigl(\varepsilon_i \to \eone(\gamma_{i,1},\dots,\gamma_{i,m})\bigr) 
  \wedge  \bigvee_{i=1}^n \bigl( \neg \esafe{f}{i} \wedge \neg \varepsilon_i \bigr)  \tpkt
\end{multline*}
Here the first line establishes the Condition~\eref{d:mscover}{1}, where
$\esafe{f}{i} \iff \esafe{g}{j}$ additionally enforces the separation of normal from safe arguments.
The final line formalises
that $\gamma$ maps $\{1,\dots,m\}$ to $\{1,\dots,n\}$, Condition~\eref{d:mscover}{2}
as well as the strictness condition on normal arguments.
This completes the encoding of $\gpop$.

\begin{lemma}
  Suppose $\mu$ induces an admissible precedence $\qp$ and satisfies $\enc{s \gpop t}$. 
  Then $s \gpop t$ with respect to the precedence $\qp$.
  Vice versa, if $s \gpop t$ then $\enc{s \gpop t}$ is satisfiable assignments $\mu$ 
  that induce the precedence underlying $\gpop$.
\end{lemma}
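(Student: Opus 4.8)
The plan is to prove both implications by a single induction on $\size{s} + \size{t}$, run simultaneously for the mutually recursive encodings $\enc{s \gpop t}$, $\enc{s \geqpop t}$, $\enc{s \gsq t}$ and $\enc{t \in \termsbelow[f]}$; the corresponding statement for $\enc{s \eqis t}$ is already available as the preceding correctness lemma for $\enc{s \eqis t}$, and every syntactic recursion in the definitions above descends to a proper subterm on at least one side, so the measure strictly decreases. Before starting I would record the two background facts I keep using: (i) by the lemma on $\vprec(\DS)$, an assignment $\mu$ satisfying $\vprec(\DS)$ makes the atoms $\esp{f}{g}$, $\eep{f}{g}$ describe an admissible precedence $\qp$, and conversely every admissible precedence is so described for a suitable choice of the rank bits $\rk_f$; (ii) the gadgets $\eone(\cdots)$, $\ezeroone(\cdots)$ and $\vperm(\pi,n)$ have their intended semantics, and by the characterisation of the multiset extension $\mextension{\succcurlyeq}$ through multiset covers (Definition~\ref{d:mscover} and the subsequent lemma) a satisfying valuation of the $\gamma_{i,j}$, $\varepsilon_i$ is exactly a multiset cover, strict iff some $\varepsilon_i$ is false.

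The base case is $s \in \VS$: then $\enc{s \gpop t} = \bot$, and since the left-hand side in Definition~\ref{d:gpop} is required to be of the form $f(\pseq[k][l]{s})$, $s \gpop t$ is impossible as well. For the step $s = f(\seq{s})$ I would unfold $\enc{s \gpop t} = \enc{s \cpop{st} t} \vee \enc{s \cpop{ia} t} \vee \enc{s \cpop{ep} t}$ and match the three disjuncts with clauses \ref{d:gpop:st}, \ref{d:gpop:ia}, \ref{d:gpop:ep} of Definition~\ref{d:gpop}. Clause $\cpop{st}$ is immediate from the induction hypothesis applied to each $(s_i, t)$ together with correctness of $\enc{\cdot \eqis \cdot}$. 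For $\cpop{ia}$ the conjuncts $\enc{f \in \DS}$ and $\enc{f \sp g}$ fix $f \in \DS$, $t = g(\seq[m]{t})$ and $f \sp g$; the implications guarded by $\esafe{g}{j}$ and $\neg\esafe{g}{j}$ become, via the induction hypothesis for $\gpop$ on the safe $t_j$ and the inductively established correctness of $\enc{\cdot \gsq \cdot}$, exactly the requirements $s \gpop t_j$ on safe and $s \gsq t_j$ on normal positions; and $\ezeroone(\seq[m]{\alpha})$ together with the guards $\neg\alpha_j \imp \enc{t_j \in \termsbelow}$ — using the inductively established correctness of $\enc{\cdot \in \termsbelow[f]}$ — witness that at most one argument of $t$ lies outside $\TA(\sigbelow{f}{\FS},\VS)$. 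For $\cpop{ep}$ I would read the remaining conjuncts as a multiset cover $(\gamma,\varepsilon)$ between $\mset{\seq{s}}$ and $\mset{\seq[m]{t}}$ that respects the normal/safe partition (this is what $\esafe{f}{i}\iff\esafe{g}{j}$ buys), whose pairwise comparisons are $s_i \eqis t_j$ when $\varepsilon_i$ and $s_i \gpop t_j$ otherwise (induction hypothesis), and which is strict at some normal position (the final disjunct $\bigvee_i(\neg\esafe{f}{i}\wedge\neg\varepsilon_i)$); invoking the multiset-cover characterisation in both directions this says precisely that $\mset{\seq[k]{s}} \gpopmul \mset{\seq[m]{t}}$ and $\mset{\seq[k+l][k+1]{s}} \geqpopmul \mset{\seq[m+n][m+1]{t}}$. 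This gives the forward direction. For the converse I would, given a derivation of $s \gpop t$, choose $\mu$ to induce the precedence and the safe mapping underlying $\gpop$, to satisfy $\vprec(\DS)$ by the $\vprec$-lemma, and to set the auxiliary atoms $\pi_{i,j}$, $\alpha_j$, $\gamma_{i,j}$, $\varepsilon_i$ according to the permutation, the exceptional argument and the multiset cover read off from the last step of that derivation; the same case analysis run backwards then yields $\mu \models \enc{s \gpop t}$.

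The main obstacle I anticipate is in clause $\cpop{ia}$: Definition~\ref{d:gpop} restricts the ``at most one argument outside $\TA(\sigbelow{f}{\FS},\VS)$'' condition to \emph{safe} argument positions, whereas the encoding imposes $\ezeroone(\seq[m]{\alpha})$ across \emph{all} arguments of $t$ and guards every $t_j$ by $\neg\alpha_j \imp \enc{t_j \in \termsbelow}$. Reconciling the two requires observing that the $\gsq$-orientation already forced on the normal arguments pushes them into $\TA(\sigbelow{f}{\FS},\VS)$ for the pairs of terms that actually occur when orienting rules of a constructor TRS, where left-hand sides are basic, so that a subterm of a right-hand side related by $\gsq$ to such a left-hand side is a value and hence lies below $f$ (the argument parallels Lemma~\ref{l:gpop:val}, now for $\gsq$ in place of $\gpop$). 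A secondary, fiddly point is the $\cpop{ep}$ bookkeeping: checking that a single partition-respecting multiset cover that is strict on some normal position is genuinely equivalent to a strict multiset comparison on the normal arguments together with a weak one on the safe arguments. Neither point is conceptually hard, but both hinge on being careful about the normal/safe split.
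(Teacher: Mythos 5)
The paper states this lemma without proof, so there is no official argument to compare against; judged on its own, your simultaneous induction over the mutually recursive encodings is the natural proof, and the soundness direction goes through exactly as you describe. The obstacle you isolate in clause $\cpop{ia}$ is indeed the one genuine issue, and your diagnosis is correct: the encoding imposes the ``at most one argument outside $\termsbelow$'' condition across \emph{all} arguments of $t$, whereas Definition~\ref{d:gpop} imposes it only on the safe ones, so the completeness direction fails for arbitrary $s$ (take $s=f(\sn{g(x)}{})$ with $g\ep f$, so that $s\gsq g(x)$ while $g(x)\notin\termsbelow$, and $t=h(\sn{g(x),g(x)}{})$ with $f\sp h$ and both positions of $h$ normal). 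Your repair --- prove completeness only for the basic left-hand sides that actually occur in $\vpredrec(\RS)$ --- is the right reading of the lemma, but two details of it need care. First, for basic $s$ the implication ``$s\gsq t_j$ entails $t_j\in\termsbelow$'' holds only because the encoding hard-wires $\enc{f\sp c}=\top$ for every constructor $c$; an admissible precedence witnessing $s\gpop t$ need not place constructors below $f$, so in the completeness direction you must first replace it by its extension in which all constructors are equivalent and minimal, and check that $\gpop$ only grows under this change (this is the paper's remark that \POPSTAR\ is ``blind on constructors''). Second, your justification ``a subterm related by $\gsq$ to a basic left-hand side is a value, hence lies below $f$'' covers only the first clause of Definition~\ref{d:gsq}; under its second clause $t_j$ is rooted by a \emph{defined} symbol $g$ with $f\sp g$ and is not a value, but then it lands in $\termsbelow$ directly by an inner induction. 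Finally, you silently read the conjunct $\enc{f\sp g}$ in the displayed encoding of $\cpop{ep}$ as $\enc{f\ep g}$; that is certainly what is intended (taken literally even soundness of that clause would break, since Definition~\ref{d:gpop}\eqref{d:gpop:ep} requires $f\ep g$), but the discrepancy deserves an explicit remark rather than a tacit correction.
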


Putting the constraints together we get the following theorem,
which witnesses the fourth main result of this paper.
\begin{theorem}
  Let $\RS$ be a constructor TRS.\@ 
  The propositional formula 
  $$
  \vpredrec(\RS) \defsym \vprec(\DS) \wedge \bigwedge_{{l \to r} \in \RS} \enc{l \gpop r}
  \tkom
  $$
  is satisfiable if and only if $\RS$ is predicative recursive.
\end{theorem}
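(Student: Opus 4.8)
The plan is to prove the two implications separately, in each case passing between a propositional assignment on one side and a pair consisting of an admissible precedence and a safe mapping on the other, and then to discharge the real work by citing the per-constraint correctness lemmas already established for $\vprec(\DS)$ and for $\enc{l \gpop r}$ (the latter internally relying on the lemmas for $\enc{\cdot \eqis \cdot}$ and $\enc{\cdot \gsq \cdot}$). No new combinatorics is needed beyond gluing the per-rule facts into one global statement.

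For the direction ``$\RS$ predicative recursive implies $\vpredrec(\RS)$ satisfiable'', I would fix an admissible precedence $\qp$ and safe mapping $\safe$ witnessing ${\RS} \subseteq {\gpop}$ and build a canonical assignment $\mu$ on the core atoms: $\mu(\esafe{f}{i}) = \top$ iff $i \in \safe(f)$; $\mu(\esp{f}{g}) = \top$ iff $f \sp g$; $\mu(\eep{f}{g}) = \top$ iff $f \ep g$; and each rank block $\rk_f$ ($f \in \DS$) set to the binary representation of some $r_f \in \{0,\dots,\size{\DS}-1\}$ chosen so that $f \sp g$ implies $r_f > r_g$ and $f \ep g$ implies $r_f = r_g$ (such a choice exists because $\qp$ restricted to $\DS$ is a preorder with at most $\size{\DS}$ equivalence classes, so $\lceil\log_2(\size{\DS})\rceil$ bits suffice). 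The $\vprec$-lemma then gives $\mu \models \vprec(\DS)$. For each rule $l \to r$ I would invoke the $\enc{\cdot \gpop \cdot}$-lemma: since $l \gpop r$, the formula $\enc{l \gpop r}$ is satisfied by some assignment inducing $\qp$ and $\safe$. The only atoms in $\enc{l \gpop r}$ beyond the core ones are the auxiliary atoms local to the individual term comparisons — the permutation atoms $\pi_{i,j}$, the marker atoms $\alpha_i$, and the multiset-cover atoms $\gamma_{i,j}, \varepsilon_i$ — and these occur in no other conjunct of $\vpredrec(\RS)$; hence $\mu$ can be extended on them to satisfy $\enc{l \gpop r}$ without disturbing the other conjuncts, and iterating over all rules yields $\mu \models \vpredrec(\RS)$.

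For the converse, given $\mu \models \vpredrec(\RS)$ I would first apply the $\vprec$-lemma to $\mu \models \vprec(\DS)$ to obtain an admissible precedence $\qp$ on $\FS$ induced by $\mu$, and then read off a safe mapping $\safe$ from the atoms $\esafe{f}{i}$ (the requirement $\safe(c) = \{1,\dots,\ar(c)\}$ for a constructor $c$ is automatic, since $\enc{\cdot}$ hard-wires $\esafe{c}{i} = \top$). For every rule $l \to r$ we have $\mu \models \enc{l \gpop r}$, so the $\enc{\cdot \gpop \cdot}$-lemma gives $l \gpop r$ with respect to $\qp$ and $\safe$; hence ${\RS} \subseteq {\gpop}$, i.e.\ $\RS$ is predicative recursive.

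I expect the main obstacle to be bookkeeping rather than mathematical depth: one must make explicit that the auxiliary atoms introduced for each term comparison are genuinely local, so that the separately-satisfying assignments for the individual rule constraints amalgamate into a single global $\mu$ that also satisfies $\vprec(\DS)$; and one must confirm the rank-bit budget $\lceil\log_2(\size{\DS})\rceil$ is sufficient for an arbitrary admissible precedence. Both supporting lemmas are already phrased in the ``iff, modulo inducing the precedence'' form precisely so this amalgamation goes through, so once these points are spelled out the theorem follows at once.
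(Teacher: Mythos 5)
Your proposal is correct and follows exactly the route the paper intends: the theorem is stated there without an explicit proof, as an immediate consequence of the preceding correctness lemmas for $\vprec(\DS)$ and $\enc{\cdot \gpop \cdot}$, and your assembly of those lemmas — including the explicit treatment of the rank-bit budget and the locality of the auxiliary atoms $\pi_{i,j}$, $\alpha_i$, $\gamma_{i,j}$, $\varepsilon_i$ — is precisely the bookkeeping the paper leaves to the reader.
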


We have implemented this reduction to \SAT~in our complexity analyser \TCT.\@
As underlying \SAT-solver we employ the open source solver \minisat~\cite{ES03}.
%
%
%
%
%
%

\subsection{Efficiency Considerations}
The \SAT-solver \minisat\ requires its input in CNF.\@
For a concise translation of $\vpredrec(\RS)$ to CNF 
we use the approach of Plaisted and Greenbaum~\cite{PG86} that 
gives an equisatisfiable CNF linear in size.
Our implementation also eliminates redundancies resulting from 
multiple comparisons of the same pair of term $s, t$ by 
replacing subformulas $\enc{s \gpop t}$ with unique 
propositional atoms $\delta_{s,t}$. Since $\enc{s \gpop t}$ 
occurs only in positive contexts, it suffices to 
add $\delta_{s,t} \imp \enc{s \gpop t}$, resulting in an equisatisfiable formula.
Also during construction of $\vpredrec(\RS)$ our implementation
performs immediate simplifications under Boolean laws.


\section{Experimental Assessment}\label{s:exps}

In this section we present an empirical evaluation of polynomial path orders.
We selected two testbeds: Testbed~\textsf{TC} constitutes 
of 597 terminating constructor TRSs, obtained
by restricting the innermost runtime complexity problemset 
from the \emph{Termination Problem Database}\footnote{The TPDB is available online~\url{http://termcomp.uibk.ac.at/}.} 
(\emph{TPDB} for short), version 8.0,
to known to be terminating constructor TRSs.
Termination is checked against the data available from the termination competition.
Testbed~\textsf{TCO}, containing 290 examples, results from restricting Testbed~\textsf{TC} to 
orthogonal systems.
Unarguably the TPDB is an imperfect choice as examples were collected primarily to 
assess the strength of termination provers, but it is at the moment the only 
extensive source of TRSs. 

Experiments were conducted with $\TCT$ version 1.9.1,%
\footnote{Available from \url{http://cl-informatik.uibk.ac.at/software/tct/}.}
on a laptop with 4Gb of RAM and Intel${}^\text{\textregistered}$ Core${}^\text{\texttrademark}$ i7--2620M CPU (2.7GHz, quad-core).
We assess the strength of $\POPSTAR$ and $\POPSTARP$ in comparison to its predecessors $\MPO$ and $\LMPO$.\@
The implementation of $\MPO$, $\LMPO$ and $\POPSTARP$ follows the line of polynomial path orders 
as explained in Section~\ref{s:impl}.%
\footnote{As far as we know our implementation of $\LMPO$ in~\TCT\ is
the only implementation currently available.}
We contrast these syntactic techniques to \emph{interpretations}
as implemented in our complexity tool $\TCT$.\@
The last column show result of constructor restricted 
matrix interpretations~\cite{MMNWZ11} (dimension $1$ and $3$)
as well as polynomial interpretations~\cite{BCMT01} (degree $2$ and $3$), 
run in parallel on the quad-core processor.
We employ interpretations in their default configuration of \TCT, 
noteworthy coefficients (respectively entries in coefficients) 
range between $0$ and $7$, and we also make use of the \emph{usable argument positions} 
criterion~\cite{HM11} that weakens monotonicity constraints.
Table~\ref{tbl:exp1}
shows totals on systems that can respectively cannot be handled.%
\footnote{Full evidence available at \url{http://cl-informatik.uibk.ac.at/software/tct/experiments/popstar}.}
To the right of each entry we annotate the average execution time, in seconds.

\newcommand{\tm}[1]{\parbox[b]{9mm}{\bf{\tiny{$\backslash$#1}}}}
\renewcommand{\c}[1]{\parbox[b]{9mm}{{\hfill\small{#1}}}}
\begin{table}[h]
  \centering
  \begin{tabular}{l@{}l@{\quad}cccc@{\quad}c}
    \hline
    \TOP & 
    & \MPO 
    & \LMPO
    & \POPSTAR
    & \POPSTARP
    & interpretations
    \BOT
    \\
    \hline
    \textbf{TC} \TOP 
    & \textsf{compatible}
    & \c{76}\tm{0.33} 
    & \c{57}\tm{0.20} 
    & \c{43}\tm{0.18} 
    & \c{56}\tm{0.19} 
    & \c{139}\tm{2.77} 
    \\
    & \TOP \textsf{incompatible}
    & \c{521}\tm{0.58} 
    & \c{540}\tm{0.47} 
    & \c{554}\tm{0.42} 
    & \c{541}\tm{0.43} 
    & \c{272}\tm{6.47} 
    \\
    & \TOP\BOT \textsf{timeout}
    & --- 
    & --- 
    & --- 
    & --- 
    & \c{186}\tm{25.0} 
    \\
    \hline
    \textbf{TCO} \TOP 
    & \textsf{compatible}
    & \c{40}\tm{0.29} 
    & \c{29}\tm{0.16} 
    & \c{24}\tm{0.14} 
    & \c{29}\tm{0.15} 
    & \c{75}\tm{2.81} 
    \\
    & \TOP \textsf{incompatible}
    & \c{250}\tm{0.33} 
    & \c{261}\tm{0.27} 
    & \c{266}\tm{0.26} 
    & \c{261}\tm{0.27} 
    & \c{133}\tm{6.12} 
    \\
    & \TOP\BOT \textsf{timeout}
    & --- 
    & --- 
    & --- 
    & --- 
    & \c{82}\tm{25.0} 
    \\
    \hline
  \end{tabular}
\caption{Empirical Evaluation, comparing syntactic to semantic techniques.}
\label{tbl:exp1}
\end{table}

It is immediate that syntactic techniques cannot compete with the expressive 
power of interpretations. 
In Testbed~\textsf{TC} there are in fact only three examples 
compatible with \POPSTARP\ where \TCT~could not find interpretations.
There are additionally four examples compatible with \LMPO\ but not so with interpretations, 
including the TRS $\RSbin$ from Example~\ref{ex:RS2}. 
All but one (noteworthy the merge-sort algorithm from Steinbach and K\"uhlers collection 
\cite[Example~2.43]{SK90}) 
of these do in fact admit exponential runtime complexity, 
thus a~priori they are not compatible to the restricted interpretations.
%

We emphasise that parameter substitution significantly increases the strength of 
\POPSTAR, 13 examples are provable by \POPSTARP\ but neither by \POPSTAR\ nor \LMPO.\@
\LMPO\ could benefit from parameter substitution, 
we conjecture that the resulting order is still sound for $\FP$. 

In sum on Testbed~\textsf{TCO}, containing only orthogonal TRSs, 
in total 75 systems (26\% of the testbed)
can be verified to encode polytime computable functions, 35 (12\% of the testbed)
can be verified polytime computable by only syntactic techniques. 
It should be noted that not all examples appearing in our collection encode polytime computable 
functions, the total amount of such systems is unknown. 

It seems that Table~\ref{tbl:exp1} clearly shows the weakness of polynomial
path orders (even with parameter substitution) for automated polynomial
runtime complexity. However, remark the average execution times provided.
\POPSTARP\ succeeds on average 14 times faster than polynomial and 
matrix interpretations. Here the difficulty of implementing interpretations efficiently 
is also reflected in the total number of timeouts.
Furthermore note that a competitive complexity
analyser cannot be based on direct techniques alone. 
Instead, our complexity analyser \TCT\ recursively decomposes complexity problems using 
various complexity preserving transformation techniques~\cite{AM13}, 
discarding those problems that can be handled by basic techniques as 
contrasted in Table~\ref{tbl:exp1}.
Certificates are only obtained
if finally all subproblems can be discarded,
above all it is crucial that subproblems can be discarded 
quickly. Due to the efficiency of syntactic methods, these 
can be safely preposed to semantic techniques, thus speeding up the overall procedure.


\section{Conclusion and Future Work}\label{s:conclusion}

This paper is concerned with the complexity analysis of
constructor term rewrite systems and its ramification in implicit
computational complexity.

We have proposed a path order with multiset status, 
the polynomial path order \POPSTAR. 
The order $\POPSTAR$ is a syntactical restriction of multiset path orders, 
with the distinctive feature that the innermost runtime complexity
of compatible TRSs lies in $O(n^d)$ for some $d$.
Based on $\POPSTAR$, we delineate a class of rewrite systems, dubbed
systems of predicative recursion, 
so that the class of functions computed by these systems
corresponds to $\FP$, the class of polytime computable functions.
We have shown that an extension of $\POPSTAR$, the order $\POPSTARP$
that also accounts for parameter substitution, 
increases the intensionality of $\POPSTAR$.

From the viewpoint of implicit computational complexity we
have provided new implicit characterisations of the
class of polytime functions. More precisely, \POPSTAR\ and \POPSTARP\ 
are sound for the class of function problems $\FNP$ and are readily
applicable to obtain exact characterisations of the polytime
computable functions. As an easy corollary, we have given
an alternative proof of Bellantoni's result that the polytime
computable functions are closed under parameter substitution. 

From the viewpoint of (automated) runtime complexity analysis
we have proposed two new syntactic techniques to establish
polynomial innermost runtime complexity. 
In contrast to semantic techniques polynomial path orders 
are partly lacking in intensionality but greatly surpluses
in verification time. Note that in our complexity prover \TCT, we do not intend to replace 
semantic techniques, but rather prepose them by \POPSTARP, in 
order to improve \TCT\ both in analytic power and speed. 

In runtime complexity analysis one is in particular interested 
in obtaining asymptotically tight bounds. 
Although we could estimate the degree of the witnessing
bounding function for \POPSTAR\ and \POPSTARP, 
such a bound would be a gross overestimation.
This is partly due to the underlying multiset extension.

Very recently, together with Eguchi we have proposed a
simplification of the polynomial path orders studied here: 
the \emph{small polynomial path orders} (\emph{\POPSTARS} for short).
This termination order entails a finer control on the runtime
complexity: for any rewrite system compatible with \POPSTARS\
that employs recursion upto depth $d$, the innermost runtime complexity 
is polynomially bounded of degree $d$. This bound is tight,
see~\cite{AEM12}.
This becomes possible, as the underlying scheme of safe composition
is restricted to so-called \emph{weak safe composition}.


\section*{Acknowledgement}

We are in particular thankful to Nao Hirokawa for fruitful discussions.
Furthermore the second author would like to thank Toshiyasu Arai for having
introduced him to the topic of predicative recursion.
Finally, we are indebted to the annonymous reviewers for their constructive criticism.

\bibliographystyle{plainnat}

\end{document}